\DeclareMathOperator*{\argmax}{arg\,max}
\DeclareMathOperator*{\argmin}{arg\,min}
\newcommand{\ceil}[1]{\lceil #1 \rceil}
\newcommand{\floor}[1]{\lfloor #1 \rfloor}
\newcommand{\E}{\mbox{\bf E}}
\newcommand{\eps}{\varepsilon}
\newcommand{\poly}{\textrm{poly}}
\newcommand{\RR}{{\mathbb R}}
\newtheorem{theorem}{Theorem}[section]
\newtheorem{definition}[theorem]{Definition}
\newtheorem{lemma}[theorem]{Lemma}
\newtheorem{observation}{Observation}[section]
\newtheorem{proposition}{Proposition}[section]
\newtheorem{remark}{Remark}[section]
\newcommand{\footref}[1]{%
    $^{\ref{#1}}$%
}
\title{Maximizing Non-Monotone Submodular Functions over Small Subsets: Beyond 1/2-Approximation}
\author{Aviad Rubinstein\thanks{Supported by NSF CCF-1954927, and a David and Lucile Packard Fellowship.}\\Stanford University\\\texttt{aviad@cs.stanford.edu} \and Junyao Zhao\thanks{Supported by NSF CCF-1954927.}\\ Stanford University\\\texttt{junyaoz@stanford.edu}}
\date{}
\begin{document}

\maketitle

\begin{abstract}
In this work we give two new algorithms that use similar techniques for (non-monotone) submodular function maximization subject to a cardinality constraint.

The first is an offline fixed parameter tractable algorithm that guarantees a $0.539$-approximation for all non-negative submodular functions. 

The second algorithm works in the random-order streaming model. It guarantees a $(1/2+c)$-approximation for  {\em symmetric} functions, and we complement it by showing that no space-efficient algorithm can beat $1/2$ for asymmetric functions. To the best of our knowledge this is the first provable separation between symmetric and asymmetric submodular function maximization.
\end{abstract}

\section{Introduction}
We study the algorithmic problem of selecting a small subset of $k$ elements out of a (very) large ground set of $n$ elements.
In particular, we want the small subset to consist of $k$ elements that are valuable {\em together}, as is captured by an objective function $f:2^{E}\to\mathbb{R}_{\ge 0}$. Without any assumptions on $f$ it is hopeless to get efficient algorithms; we make the assumption that $f$ is {\em submodular}%
\footnote{I.e., functions where the marginal value of an element is decreasing as the set grows.}, one of the most fundamental  and well-studied assumptions in combinatorial optimization.

Alas, even for submodular functions, strong impossibility results are known. Two of the most important frameworks we have for circumventing  impossibility results are 
(i) approximation algorithms --- look for solutions that are only approximately optimal; and (ii) paramerterized complexity --- look for algorithms of which the runtime is efficient as a function of the large ground set $n$, but may have a worse dependence%
\footnote{Formally, an algorithm is said to be fixed-parameter tractable if it runs in time $h(k)\cdot\poly(n)$ for {\em any} function $h$. Here $h$ could be arbitrarily fast growing, e.g.~doubly-exponential or Ackermann --- this is asymptotically faster than the naive $n^k$. In this work we will be more ambitious (and closer to practice) and present algorithms that run in time
$2^{\tilde{O}(k)}\cdot n$.} on the smaller parameter $k$. To appreciate the relevance of parameterized complexity in practice, it's important to note that in many applications of submodular maximization $k$ is indeed quite small, e.g.~in data summarization~\cite{BMKK14}, we want an algorithm that, given a large image dataset  chooses a representative subset of images that must be small enough to fit our screen.

Submodular optimization has been thoroughly studied under the lens of approximation algorithms; much less work has been done about its parameterized complexity (with the exception of~\cite{Skowron17}, see discussion of related works). In either case, strong, tight hardness results are known. In this work we show that combining both approaches gives surprisingly powerful algorithms.

On the technical level, we develop novel insights to design and analyze fixed-parameter tractable (FPT) algorithms for (non-monotone) submodular function maximization. We instantiate these ideas to give new results in two settings: offline (``classical'') algorithms for which we are interested in the running time and query complexity\footnote{I.e., the function $f$ is given as a value oracle. The query complexity is number of queries made by the algorithm, which is clearly a lower bound of the runtime.}, and random-order streaming algorithms for which we mostly care about the memory cost. 

\subsection*{Main result I: offline algorithms}
Our first result is an (offline) FPT algorithm that guarantees an improved approximation ratio for submodular function maximization.

To compare our result with the approximation factors achievable by polynomial-time algorithms, we first briefly survey the existing algorithmic and hardness results.

On the algorithmic side, the current state-of-art polynomial-time algorithm for general non-monotone submodular functions achieves $0.385$-approximation~\cite{BF19}. For sub-classes of submodular functions, better polynomial-time approximation algorithms are known: notable examples include monotone functions (the greedy algorithm achieves $(1-1/e)$-approximation~\cite{NWF78}), and symmetric functions%
\footnote{I.e.~functions that assign the same value to a set and its complement, which capture some of the most important applications of non-monotone submodular functions, including mutual information and cuts in (undirected) graphs and hypergraphs.} (the state-of-the-art approximation factor is $0.432$~\cite{Feldman17}).

From the hardness perspective, known results rule out {\em polynomial} query complexity algorithms with approximation factors better than $1/2$ or $0.491$ for symmetric~\cite{FMV11,V13} or asymmetric functions~\cite{GV11}, respectively. It is also known that even FPT algorithms cannot beat  $(1-1/e)$-approximation, and this holds even  for monotone submodular functions~\cite{NW78Bound}.

In FPT time, the streaming algorithm of~\cite{AEFNS20} implies a $1/2$-approximation algorithm for general non-monotone functions (although it is not explicitly stated as an FPT algorithm in their paper), which slightly beats the aforementioned $0.491$ bound for asymmetric functions. However, this result does not tell us whether FPT algorithms can beat the $1/2$ bound for symmetric functions. A-priori, it was plausible that $1/2$-approximation is the best achievable approximation by FPT algorithms for symmetric functions and hence general non-monotone functions. (In fact, prior to discovering our new algorithms, we had expected that the $1/2$-approximation would indeed be the best that FPT algorithms can achieve.)

We are thus excited to report that we were able to design an FPT algorithm (Algorithm~\ref{alg:fpt_plus}) that not only outperforms all the previous algorithms but also surpasses all the upper limits on the approximation factor in the existing hardness results, {\em by a significant margin, regardless of whether the function is symmetric or not}:
\begin{theorem}[FPT algorithm]
There is a $0.539$-approximation algorithm for cardinality constrained submodular maximization that has runtime and query complexity $2^{\widetilde{O}(k)}\cdot n$.
\end{theorem}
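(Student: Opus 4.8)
The plan is to combine the continuous/multilinear-relaxation machinery for non-monotone submodular maximization with the fixed-parameter budget, the latter used to \emph{guess} a piece of an optimal set $O$ (say $|O|=k$, $f(O)=\mathrm{OPT}$) of a kind that no polynomial-time algorithm could exploit. Morally, the $1/2$ barrier comes from instances where the algorithm's solution is forced to ``overlap'' $O$ in an adversarial, symmetric way; if we could pin down that overlap we would do better, and FPT time lets us guess (a random slice of) it. The target inequality has the shape $f(\mathrm{ALG})\ge\big(\text{base guarantee}+\text{correction from the pinned-down slice}\big)\cdot\mathrm{OPT}$, optimized over the size of the slice to exceed $0.539$.

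The first step is to reduce to the regime $|E|=\poly(k)$, which is purely a matter of running time: over a ground set of size $\poly(k)$ one may enumerate all subsets of size $O(k)$ in time $(\poly(k))^{O(k)}=2^{\widetilde O(k)}$, whereas over the original ground set this is $n^{\Omega(k)}$. I would take $C$ to be the union of the ``traces'' of one or more combinatorial algorithms --- e.g.\ the random-greedy process, each of whose $k$ rounds inspects only $O(k)$ candidate elements, so its entire trace has $O(k^2)$ elements --- possibly together with elements flagged by a thresholding pass; this costs $\poly(k)\cdot\widetilde O(n)$ time, and one argues that the best size-$\le k$ subset of $C$ retains a constant fraction of $\mathrm{OPT}$. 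The subtlety --- the first place non-monotonicity bites --- is that, unlike the monotone case, ``keeping the most valuable elements'' preserves nothing in general, so one must keep a structured trace and reason about it directly.

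Now work over $E=C$, $|E|=\poly(k)$. For a parameter $t$, enumerate all $\binom{|C|}{t}=2^{\widetilde O(k)}$ sets $R\subseteq C$ with $|R|=t$; for the ``correct'' guess, $R$ is a uniformly random $t$-subset of $O$. Such an $R$ gives two things: (i) by concavity of $s\mapsto\E_{S\subseteq O,\,|S|=s}[f(S)]$ (a standard submodularity fact), $\E[f(R)]\ge\tfrac tk\,\mathrm{OPT}$, i.e.\ free value; and (ii) the multilinear extension $F$ of $f$, started from the fractional point $\mathbf 1_R$, is ``seeded'' with a genuine slice of the optimum, so a measured-continuous-greedy run from $\mathbf 1_R$ spending the remaining budget $k-t$ (followed by pipage/swap rounding to a size-$\le k$ set containing $R$) obeys an \emph{improved} differential inequality --- the adversarial ``overlap with $O$'' loss term is precisely the part we have fixed. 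Carrying this through and combining with (i), then optimizing over $t$ via a factor-revealing LP, should yield $f(\mathrm{ALG})\ge 0.539\,\mathrm{OPT}$; one also runs the base algorithm unmodified and outputs the best, to cover small/useless guesses. Finally, output the best set over all $R$ and all $t$: the random slice is realized by plain enumeration so the guarantee is deterministic (any internal randomness of the continuous-greedy rounding is removed by standard derandomization or folded into the constant), and the running time is $2^{\widetilde O(k)}$ guesses, each a $\poly(k)\cdot\widetilde O(n)$ continuous-greedy-plus-rounding computation, plus the $\poly(k)\cdot\widetilde O(n)$ sparsification, i.e.\ $2^{\widetilde O(k)}\cdot n$.

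The conceptual heart, which I expect to be the main obstacle, is the seeded continuous-greedy analysis: setting up the differential inequality so that it genuinely exploits $R\subseteq O$, controlling the non-monotone decay, charging the free value $f(R)$ and the continuous-greedy gain against a single copy of $\mathrm{OPT}$ without double counting, and proving that the resulting factor-revealing LP has value $\ge 0.539$ for the right choice of $t$ (and of the continuous-greedy time schedule). By comparison, the sparsification of the first step and the final derandomization --- though genuinely requiring care in the non-monotone setting --- should be comparatively routine.
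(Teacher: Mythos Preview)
Your proposal differs substantially from the paper's argument, and the gap is in the two places you yourself flag as uncertain.

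\textbf{Sparsification and what is guessed.} The paper's core set is not a generic trace of random-greedy; it is the output $S_{3k}\cup H$ of a specific random-order greedy pass (Algorithm~1). The point is \emph{not} that ``the best size-$\le k$ subset of $C$ retains a constant fraction of $\mathrm{OPT}$'' --- any multiplicative loss here would cap the final factor --- but that the specific partition $O=O_H\cup O_L$, where $O_H=O\cap H$, carries structural leverage: every $o\in O_L$ failed the threshold test, so $f(o\mid S_{i-1})\le f(e_i\mid S_{i-1})$ for \emph{every} $i$ (Observation~3.1). The FPT guess is not a uniformly random slice of $O$ but precisely this $O_H$, and the downstream analysis lives on the inequalities that the rejection of $O_L$ forces among the variables $a=f(O_H\mid S_L)$, $b=f(S_L)$, $c=f(O_H)$, etc.

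\textbf{What replaces continuous greedy.} After guessing $O_H$ correctly, the paper does not run a seeded measured continuous greedy. It runs ordinary discrete greedy from $O_H$, repeated $T$ times with the selected elements removed between runs. The key new step --- which has no continuous-greedy analogue --- is: if a run $X_i$ gains little, $f(X_i\mid O_H)\approx 0$, then the standard greedy inequality $f(X_i\mid O_H)\ge f(O_L\mid O_H\cup X_i)$ forces $X_i$ to \emph{hurt} $O$ substantially, $f(X_i\mid O_L\cup O_H)\lesssim -\tfrac12\,\mathrm{OPT}$ in the tight case. By submodularity these negative contributions add over the disjoint $X_i$, so after a bounded number of runs one would contradict non-negativity of $f$; hence some run must gain. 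Iterating this recursively (guess $O_H$, then $O_H\cup O_L^{(1)}$, etc.) and feeding the resulting constraints into a factor-revealing convex program is what produces $0.539$.

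Your seeded-continuous-greedy plan might beat $1/2$, but nothing in the proposal says why the differential inequality improves when $R$ is a \emph{random} slice of $O$ rather than a structurally determined one like $O_H$, nor why the resulting LP would land at $0.539$. As written, the mechanism that actually breaks the $1/2$ barrier --- repeated greedy plus the non-negativity contradiction --- is absent.
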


\subsubsection*{Main result II: random-order streaming} 
Our FPT algorithm (Algorithm~\ref{alg:fpt_plus}) uses a subroutine (Algorithm~\ref{alg:symmetric}) which can be interpreted as a {\em random-order streaming algorithm}, and thus, in addition to our FPT result, we also hope to understand the power and the limit of Algorithm~\ref{alg:symmetric} in the random-order streaming model. In this model (see the detailed setup in Section~\ref{section:prelim}), a streaming algorithm makes a single pass over a stream of elements arriving in a uniformly random order. An algorithm keeps a carefully chosen subset of the elements it has seen in a buffer of bounded size. At any point in the stream, the algorithm can make unlimited queries of the function values on subsets of the elements in the buffer. The goal is to obtain a good approximation of the offline optimum while keeping the buffer small (ideally, polynomial in $k$ and independent of $n$).

We show that Algorithm~\ref{alg:symmetric} achieves $1/2$-approximation for general non-monotone submodular functions\footnote{The $1/2$-approximation of Algorithm~\ref{alg:symmetric} for general non-monotone functions is not interesting by itself -- $1/2$-approximation was achieved even if the elements arrive in the worst-case order~\cite{AEFNS20}. However, Algorithm~\ref{alg:symmetric}, which takes advantage of the random order, led us to the discovery of the $1/2$-hardness in the random-order setting.} and beats $1/2$-approximation for symmetric functions using $\widetilde{O}(k^2)$-size buffer\footnote{These algorithmic results also hold for the (similar but incomparable) secretary with shortlists model~\cite{SMC19}.}: 

\begin{theorem}[Random-order streaming algorithm]
For cardinality constrained submodular function maximization in the random-order streaming model, there is an algorithm using $\widetilde{O}(k^2)$-size buffer that achieves $1/2$-approximation for general non-monotone submodular functions and $0.5029$-approximation for symmetric submodular functions.
\end{theorem}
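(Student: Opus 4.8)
The plan is to analyze a threshold-and-exchange streaming algorithm, of the kind used by \cite{AEFNS20}, augmented with a subsampling step. It maintains a set $S$ of size at most $k$; an arriving element is admitted whenever its marginal clears a threshold $\tau=\Theta(f(\mathrm{OPT})/k)$ (evicting the weakest member of $S$ once $S$ is full), and interleaved with this is a subsampling step that drops each admitted element independently with a constant probability so that $\Pr[x\in S]$ stays bounded by some $p<1$ --- the standard device for controlling non-monotonicity. The value $f(\mathrm{OPT})$ is guessed up to a $(1\pm o(1))$ factor by running $\widetilde O(1)$ copies on a geometric grid and returning the best output, and the buffer is kept $\widetilde O(k^2)$-size by a shortlist argument of the kind used in the secretary-with-shortlists model \cite{SMC19}, using the random order to show that retaining $\widetilde O(k)$ candidates per copy suffices.

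For the $1/2$-approximation on all non-negative submodular $f$ --- the easy half, since \cite{AEFNS20} already achieve $1/2$ even in adversarial order --- I would port their analysis. Either some copy's set $S$ ends up full, and telescoping its admitted marginals gives $f(S)=\Omega(f(\mathrm{OPT}))$; or $S$ never fills, in which case every $o\in\mathrm{OPT}\setminus S$ was passed over with a small marginal, so submodularity bounds $f(S\cup\mathrm{OPT})-f(S)$, and the standard inequality $\E[f(S\cup\mathrm{OPT})]\ge(1-p)\,\E[f(\mathrm{OPT})]$ (valid because each element lies in $S$ with probability at most $p$) closes the loop. Optimizing $\tau$ and $p$ yields $1/2-o(1)$, the $o(1)$ being the granularity of the guessing grid.

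The interesting half is pushing past $1/2$ for \emph{symmetric} $f$, and this is where the random arrival order is genuinely used. The guiding principle is that the two tight cases above cannot both be tight for a symmetric function. I would invoke $f(\mathrm{OPT})=f(E\setminus\mathrm{OPT})$ together with the fact that a uniformly random stream presents the elements of $\mathrm{OPT}$ in random order relative to the growing solution, so that a typical $o\in\mathrm{OPT}$ is weighed against a partial solution that is only ``half built''. Tracking which elements of $\mathrm{OPT}$ the greedy accepts versus rejects, and using symmetry to charge a low-marginal rejected $o\in\mathrm{OPT}$ to value on the complement side $E\setminus\mathrm{OPT}$, I expect to show that in the non-filling case one retains strictly more than $(1-p)f(\mathrm{OPT})$ worth of value inside $f(S\cup\mathrm{OPT})$ --- equivalently, the case bound improves by a constant factor in expectation --- while in the filling case one is happy for the usual reason. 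Feeding the sharpened bound back through the $\tau$-versus-$p$ optimization lifts the ratio to $1/2+c$, and a two-parameter numerical optimization pins down $c\ge 0.0029$.

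The step I expect to be the real obstacle is making this symmetric surplus quantitative while the non-monotone correction terms are live: the subsampling loss and the symmetric gain are not independent and must be tracked jointly; the random order only \emph{approximately} decorrelates $\mathrm{OPT}$ from the greedy trajectory, so a Chernoff/Azuma concentration argument is needed, with the $\widetilde O(k^2)$ shortlist providing enough slack to absorb the fluctuations; and one must still land the general bound at exactly $1/2$ and the symmetric bound at $0.5029$ after optimizing the parameters. A secondary nuisance is bookkeeping for elements rejected by the subsampling coins --- these have \emph{large} marginal, so are not controlled by $\tau$ --- for which I would use the standard device of analyzing $S$ together with the coin-rejected set and the monotonicity of $f(S)$ along the greedy run.
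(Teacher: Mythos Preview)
Your proposal diverges from the paper both in algorithm and in the symmetric analysis, and the latter has a genuine gap.

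The paper does not run threshold-and-exchange with subsampling. It partitions the first $\eps$-fraction of the stream into $3k$ equal windows, greedily picks the top-marginal element from each to build a chain $S_0\subset\cdots\subset S_{3k}$, then over the remaining stream collects into a buffer $H$ (capped at $O(k^2\log k/\eps)$) every element that beats some $e_i$'s marginal, and finally returns $\argmax_{X\subseteq S_{3k}\cup H,\,|X|\le k} f(X)$ by brute force --- legal because the streaming model bounds memory, not time. There is no subsampling and no guessing of $f(\mathrm{OPT})$. The brute-force step is load-bearing: writing $O_H$ for the optimal elements captured in $H$ and $O_L$ for those missed, the analysis compares several structured candidates --- $S_{|O_L|}$, $O_H\cup S_{|O_L|}$, and (in the symmetric case) $O_H\cup(S_{2|O_L|}\setminus S_{|O_L|})$ and $O_H\cup(S_{3|O_L|}\setminus S_{2|O_L|})$ --- all of which sit inside $S_{3k}\cup H$ and are therefore visible to the exhaustive search. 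Your single evolving set with independent coin flips gives you neither $O_H$ as a set nor this three-level greedy hierarchy, so these candidates are simply unavailable to you.

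For the symmetric gain, the operative consequence of symmetry is not $f(\mathrm{OPT})=f(E\setminus\mathrm{OPT})$ directly but the lemma that for disjoint $X,Y$ one has $f(X\mid Y)\ge -f(X)$: a set cannot hurt another by more than its own value. The argument then runs: if none of the first three candidates beats $1/2$, one is forced into $f(O_H\mid S_{2|O_L|})=-f(O_H)$ (saturated); the lemma says $O_H$ cannot hurt $S_{3|O_L|}$ any further, whence $f(S_{3|O_L|}\setminus S_{2|O_L|}\mid O_H)\ge f(S_{3|O_L|}\setminus S_{2|O_L|}\mid S_{2|O_L|})\ge(1-1/e)f(O_L)$ by the standard greedy bound, and a four-variable factor-revealing convex program certifies $0.5029$. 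Your proposed mechanism --- ``charge a low-marginal rejected $o$ to the complement side'' so as to ``retain strictly more than $(1-p)f(\mathrm{OPT})$'' --- is not a concrete inequality, and in fact the subsampling bound $\E[f(S\cup\mathrm{OPT})]\ge(1-p)f(\mathrm{OPT})$ is tight even for symmetric $f$, so symmetry by itself does not sharpen it. The paper's surplus comes entirely from the interaction of $O_H$ with the three-level greedy structure via the displayed lemma; neither ingredient exists in your algorithm, and I do not see how to recover the $0.5029$ bound along your route.
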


We complement the algorithmic result with a tight $1/2$-hardness result in the random-order streaming setting:
\begin{theorem}[$1/2$-hardness for random-order streaming]
If $n=2^{o(k)}$, any $(1/2+\eps)$-approximation algorithm for cardinality-constrained non-monotone submodular maximization in the random-order streaming model must use an $\Omega(n/k^2)$-size buffer. In fact, this hardness result holds against stronger algorithms that are not captured by the standard random-order streaming model for submodular maximization (see Remark~\ref{rmk:stronger_algorithms}).
\end{theorem}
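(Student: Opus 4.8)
The plan is to build a distribution over hard instances in which a small buffer cannot extract enough information from the random stream to beat $1/2$, and then make this information bottleneck precise by a coupling argument.

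\textbf{The hard distribution.} Take the ground set to be $[n]$ and plant a uniformly random $k$-subset $O$ whose elements will be ``individually invisible''. Then choose a non-monotone submodular $f=f_O$ with three features. (i) \emph{Invisibility:} on every set $S$ with $|S\cap O|\le 1$, $f(S)$ equals a fixed concave function $a(|S|)$ of the cardinality and does not depend on $O$ at all. (ii) \emph{The optimum is the planted set:} $f$ restricted to subsets of $O$ is modular with per-element value $m$, so $\mathrm{OPT}=f(O)=km$; meanwhile $a$ is a concave function with $a(1)=m$ that peaks at value $\approx km/2$ near cardinality $k$ and then decreases (making $f$ non-monotone). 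Submodularity then forces $f(S)\le a(|S\setminus O|)+|S\cap O|\cdot\bigl(a(|S\setminus O|+1)-a(|S\setminus O|)\bigr)$, so every $S$ with $|S\cap O|=o(k)$ has $f(S)\le(1/2+o(1))\mathrm{OPT}$. (iii) \emph{Detection requires a planted pair:} $f(\{o,o'\})\ne a(2)$ for $o,o'\in O$, so the $\binom k2$ pairs inside $O$ are exactly the ``witnesses'' of the planted structure, and a buffer that never simultaneously holds two elements of $O$ sees only values of the $O$-oblivious function $a(|\cdot|)$.

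\textbf{The information bottleneck.} A streaming algorithm with buffer size $s$ can evaluate $f$ only on subsets of its current buffer, so let $G$ be the \emph{co-residency graph} on $[n]$: $\{u,v\}\in G$ iff $u$ and $v$ are in the buffer at the same time at some point of the run. Each arriving element creates at most $s$ new co-residency edges, so $|E(G)|\le sn$ regardless of the algorithm. Now couple the true run to the run of the same algorithm against the $O$-oblivious function $\tilde f(S):=a(|S|)$: against $\tilde f$ the algorithm's decisions, and hence $G$, are a function of the arrival order alone and are independent of $O$; since $O$ is a uniformly random $k$-set, the expected number of ``planted'' edges (pairs inside $O$) in $G$ is at most $sn\cdot\binom{k}{2}/\binom{n}{2}=O(sk^2/n)$, which is $o(1)$ once $s=o(n/k^2)$. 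On the $(1-o(1))$-probability event that $G$ has no planted edge, every query the algorithm makes touches at most one element of $O$, so $\tilde f$ and $f_O$ agree on all of them and the two runs --- including their outputs --- coincide. Hence with probability $1-o(1)$ the algorithm outputs a set $S$, $|S|\le k$, that is independent of $O$; since $n=\omega(k^2)$, a Markov bound gives $|S\cap O|=o(k)$ with probability $1-o(1)$, and then $f_O(S)\le(1/2+o(1))\mathrm{OPT}$ by (ii). So no buffer of size $o(n/k^2)$ yields a $(1/2+\eps)$-approximation. Passing from randomized to deterministic algorithms and converting the expectations into high-probability bounds is where the hypothesis $n=2^{o(k)}$ is used.

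\textbf{Main obstacle, and the stronger-algorithms remark.} The crux is the construction of $f_O$: submodularity makes the marginal value of a planted element non-increasing, so the planted set cannot ``suddenly'' be worth $km$ once enough of it is gathered. The way around this is to let the surplus value of the planted structure \emph{turn off} whenever $S$ already contains many non-planted elements, so that $f(S)$ approaches $\mathrm{OPT}$ only when $S$ is almost entirely planted, while keeping $f$ submodular, non-negative and non-monotone; verifying all of this (e.g.\ by exhibiting $f$ as a carefully clipped combination of modular functions and checking every marginal comparison) is the main technical work, and I expect it to be the part most prone to subtle errors. Finally, the second step uses only that the stream is in uniformly random order and that the algorithm's $f$-queries are confined to bounded-size sets whose pattern, before any planted pair has been witnessed, is determined by $O$-independent information; this robustness is exactly why the lower bound also applies to the stronger model of Remark~\ref{rmk:stronger_algorithms}.
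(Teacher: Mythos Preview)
Your high-level strategy---plant a hidden structure, argue via a coupling/co-residency argument that a small buffer never witnesses it, and conclude the output is independent of the planted set---is the right idea and is in the same spirit as the paper. The co-residency graph argument you give is a clean abstraction of what the paper calls ``collisions'' and ``detection.''

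The genuine gap is exactly where you flag it: you never construct $f_O$. You list three desiderata and say ``verifying all of this \dots\ is the main technical work, and I expect it to be the part most prone to subtle errors,'' but then you stop. The difficulty is real. Your requirement (i) is extremely rigid: $f(S)=a(|S|)$ whenever $|S\cap O|\le 1$ forces the marginal of every $o\in O$ to a set $S\subseteq[n]\setminus O$ to be exactly $a(|S|{+}1)-a(|S|)$, while modularity on $O$ forces the marginal of $o$ to $O\setminus\{o\}$ to be $m$. Reconciling these with submodularity \emph{and} global non-negativity on a ground set of size $n\gg k$ is delicate; for instance, the natural candidate $g(j,i)=a(i)+j\bigl(a(i{+}1)-a(i)\bigr)$ (which saturates your own upper bound) is submodular for quadratic $a$ but fails non-negativity once $i$ exceeds roughly $2k$, and capping $a$ at zero breaks concavity. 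You have not shown such an $f_O$ exists, and the paper does \emph{not} build one with this much symmetry: its hard instance is a directed-hypergraph cut over a bipartition $A_1\cup A_2$, with $A_1$ split into $\ell\approx n/k$ \emph{buckets} and a single random ``good'' bucket $B_g$ whose hyperedges are replaced by individual edges. Invisibility there means only that $f$ does not reveal \emph{which bucket} is special unless two elements of that bucket sit in the buffer together; $f(S)$ is far from a function of $|S|$ alone.

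Relatedly, your explanation of where the hypothesis $n=2^{o(k)}$ enters is wrong. It is not used to pass from expectation to high probability or from randomized to deterministic algorithms---Markov and Yao handle those with no size assumption. In the paper the hypothesis is used in the \emph{soundness} analysis: one must show that \emph{every} size-$k$ set missing most of $B_g$ has value at most $(1+\delta)\eps^2k^2$, and this is done by Chernoff plus a union bound over roughly $(e\ell/k)^k\cdot 2^{\eps k}$ candidate sets, which is $2^{o(k^2)}$ precisely because $n=2^{o(k)}$. If your construction worked as stated, soundness would be deterministic (your submodularity bound), and the assumption would be superfluous---another sign that your construction is doing work the paper's does not, and that work has not been carried out.
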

This hardness result is quite surprising because it shows in contrast to monotone submodular maximization, non-monotone submodular maximization in the random-order setting is {\em not any easier than} that in the worst-order setting where the elements arrive in the worst-case order -- for worst-order streaming model, it is known that $\Omega(n/k^3)$-size buffer is required to beat $1/2$-approximation even if the submodular function is monotone~\cite{FNSZ20}, but for random-order model, recent work by~\cite{SMC19} gives a $(1-1/e-\eps)$-approximate algorithm using $O(k/2^{\textrm{poly}(\eps)})$-size buffer, which is improved to $O(k/\eps)$ by a simpler algorithm of~\cite{LRSVZ21}.

Furthermore, notice that our algorithmic result and hardness result together exhibit a separation between symmetric and asymmetric submodular functions in the random-order streaming setting. This separation is interesting because in the literature, tight hardness result for general non-monotone functions often continues to hold for symmetric functions. For example, for unconstrained non-monotone submodular maximization, there is a family of symmetric functions for which $(1/2+\eps)$-approximation requires $2^{\Omega(n)}$ queries~\cite{FMV11,V13}, and there are efficient matching $1/2$-approximation algorithms even for asymmetric functions~\cite{BFNS15}.

To the best of our knowledge, our result is the {\em first provable separation of symmetric and asymmetric submodular maximization in any setting}, let alone a natural setting that gains a lot of interests recently. Although admittedly we would be more excited to see such separation in the more classic offline setting of constrained non-monotone submodular maximization, currently we are still far from figuring out whether there is a separation in this setting (because of the gap between the current best $0.432$-approximation algorithm for symmetric functions and the current best $0.491$-hardness for general asymmetric functions we mentioned earlier), and we hope our result can provide some insights on how to resolve this problem. 

\subsubsection*{Future directions}
Our FPT algorithm achieves significantly better-than-$1/2$ approximation for general non-monotone submodular functions in the offline setting, and its subrountine Algorithm~\ref{alg:symmetric} breaks the $1/2$ hardness for symmetric submodular functions in the random-order streaming setting -- but what is the best possible approximation ratio in those respective settings? We leave this as an open problem for future work.

We remark that no FPT (small-buffer resp.) algorithm can break the $1-1/e$ barrier for symmetric functions in the offline (random-order streaming resp.) setting. For asymmetric functions, this follows from the classic work of~\cite{NW78Bound} for monotone submodular functions which we mentioned earlier. For symmetric submodular maximization, the monotone functions exhibiting $(1-1/e)$-hardness are obviously not symmetric; but in the appendix, we are able to give a simple black-box approximation-preserving reduction from symmetric non-monotone to asymmetric monotone submodular function maximization that works in both the offline and random-order streaming settings:

\begin{proposition}\label{prop:black_box_reduction}
For cardinality-constrained symmetric submodular function maximization, any algorithm guaranteeing a $(1-1/e+\eps)$-approximation must:
\begin{description}
\item[Offline] use $n^{\Omega(k)}$ queries; or
\item[Random-order streaming] use $\Omega(n)$-buffer size.
\end{description}
\end{proposition}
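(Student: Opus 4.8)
The plan is to exhibit a black-box transformation that turns any (asymmetric) monotone submodular hard instance into a symmetric non-monotone instance of essentially the same difficulty, and then to invoke the classical $(1-1/e)$-hardness of monotone submodular maximization separately in each model. Fix a monotone submodular $g:2^{V_0}\to\RR_{\ge 0}$ with cardinality parameter $k$ for which $(1-1/e+\eps)$-approximation is hard. Since the standard hard instances are not robust to deleting a few elements, I would first pad $g$: let $E=V_0\times[k+1]$, let $\pi(S)=\{\,y:\exists j,\ (y,j)\in S\,\}$ be the coordinate projection, and set $\hat g(S)=g(\pi(S))$. One checks that $\hat g$ is again monotone submodular and non-negative (the marginal of a still-``uncovered'' element only decreases as the set grows), that $\hat g(E)=g(V_0)$, and crucially that $\hat g(E\setminus S)=g(V_0)=\hat g(E)$ for every $S$ with $|S|\le k$ (no such $S$ can contain all $k+1$ copies of any $y$). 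Now define the symmetrization
\[
 f(S)\ :=\ \hat g(S)+\hat g(E\setminus S)-\hat g(E).
\]
Then $f$ is symmetric by inspection ($f(E\setminus S)=f(S)$), submodular (a sum of $\hat g$, its reflection $S\mapsto\hat g(E\setminus S)$, and a constant, each of which is submodular), and non-negative (submodularity gives $\hat g(S)+\hat g(E\setminus S)\ge\hat g(E)+\hat g(\emptyset)\ge\hat g(E)$), so it is a legitimate instance of symmetric submodular maximization under the same cardinality constraint $k$.

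Next I would argue that $f$ is as hard as $g$. By the cancellation above, $f(S)=\hat g(S)=g(\pi(S))$ for every \emph{feasible} $S$ (i.e.\ $|S|\le k$), so $\max_{|S|\le k}f(S)=\max_{|A|\le k}g(A)=\mathrm{OPT}_g$, and any $\alpha$-approximate feasible solution $S$ for $f$ yields the $\alpha$-approximate feasible solution $\pi(S)$ for $g$ (since $|\pi(S)|\le|S|\le k$ and $g(\pi(S))=f(S)$). A single evaluation of $f$ on an arbitrary set costs only three evaluations of $g$ (to $\hat g(S)$, $\hat g(E\setminus S)$, and $\hat g(E)$), and the ground set grew only by the factor $k+1$; hence the offline bullet is immediate from the $n^{\Omega(k)}$-query lower bound for $(1-1/e+\eps)$-approximate monotone submodular maximization.

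For the random-order streaming bullet the same construction works, but two model-specific points need care. First, the reduction must be executable in the streaming model: an algorithm for $f$ may query $f$ only on subsets of its size-$b$ buffer, so we must answer such queries using queries to $g$ on \emph{small} sets only. This is exactly where the identity $f(T)=g(\pi(T))$ for $|T|\le b$ is used --- the $\hat g(E)$ terms cancel, so no evaluation of $g$ on a large set is ever required --- provided we pad with $b+1$ copies rather than $k+1$, or, better, start from a monotone hard instance that is already robust to deleting $b$ elements and hence needs no padding. Second, the algorithm must be fed a genuine uniformly-random stream; this is for free precisely when no padding is needed, in which case running the hypothetical small-buffer $(1-1/e+\eps)$-approximation algorithm for symmetric $f$ on its random stream is, verbatim, a small-buffer $(1-1/e+\eps)$-approximation algorithm for monotone $g$ on an $n$-element random-order stream, contradicting the $\Omega(n)$-memory lower bound for monotone submodular maximization in the random-order streaming model.

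The step I expect to be the real obstacle is reconciling these two streaming requirements: the symmetrization wants $g$ robust (so that $f(S)=g(S)$ on small sets and $f$ is evaluable within the streaming model), while the black-box algorithm wants a uniformly random stream. Enforcing robustness by duplicating elements spreads the copies of each element throughout the stream, which naively costs memory to simulate online; the clean resolution is to observe that the hard monotone instances for the random-order streaming model can be taken inherently robust to deleting a small number of elements --- so $E=V_0$, the stream is passed through unchanged, and the $\Omega(n)$ buffer lower bound transfers with no loss. Everything else --- symmetry, submodularity, non-negativity, and the optimum/solution correspondence --- is the short calculation sketched above.
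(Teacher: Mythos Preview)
Your symmetrization $f(S)=\hat g(S)+\hat g(E\setminus S)-\hat g(E)$ is exactly the paper's construction, and your checks of symmetry, submodularity, non-negativity, and the optimum/solution correspondence are all correct. The difference is in how you force $\hat g(E\setminus S)=\hat g(E)$ on small sets. You propose padding with $k{+}1$ copies; the paper instead observes that the specific \cite{NW78Bound} hard instances already satisfy the stronger property $f(X)=f(E)$ whenever $|X|\ge 4k$, and moreover that \cite{LRSVZ21} proves the $\Omega(n)$-buffer random-order lower bound \emph{for these very same instances}. This single observation kills both bullets at once: with $n\gg k$, any query set $X$ in a buffer of size $b\le n-4k$ has $|E\setminus X|\ge 4k$, so $g(X)=f(X)$ identically on every set the algorithm can ever touch, and the symmetric instance is literally indistinguishable from the monotone one. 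No padding, no stream manipulation, no simulation.

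Your padding route is a correct alternative for the offline bullet (the $(k{+}1)$-blowup is harmless against an $n^{\Omega(k)}$ bound), but, as you yourself diagnose, it breaks the random-order stream and forces you back to ``start from a monotone hard instance that is already robust.'' That fallback is precisely the paper's property~$(\ast)$, and the paper's contribution here is really just to point out that the standard hard instances already have it --- the assertion you leave unverified. So your plan is sound and converges to the paper's argument; the paper's version is simply the shortcut that skips the padding detour entirely by invoking property~$(\ast)$ from the outset.
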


\subsection{Additional related work}
\paragraph{FPT submodular optimization}
The study of parameterized complexity of submodular maximization was initiated by~\cite{Skowron17} who focused on monotone submodular functions. \cite{Skowron17} gives an FPT approximation scheme for monotone submodular functions that are either $p$-separable or have a bounded ratio of total singleton contribution ($\sum_{e \in E} f(\{e\})$) to total value ($f(E)$). However, for general monotone submodular functions even FPT algorithms (in terms of query complexity) cannot break the classic $1-1/e$ barrier~\cite{NW78Bound}. Furthermore, even for the special case of max-$k$-cover, no FPT algorithms can beat $1-1/e$ assuming gap-ETH~\cite{CGKLL19,Manurangsi20}.

\paragraph{Streaming submodular optimization}
Our work is related to recent works on maximizing submodular functions in random order streams~\cite{SMC19, LRSVZ21, S20b}, but all the latter focus on monotone functions.
Submodular optimization in worst-order streaming models has also been extensively studied in recent years, e.g.~\cite{BMKK14, CGQ15, KMVV15,EDFK17, FKK18, MJK18, AF19, IV19,KMZLK19, MV19,  AEFNS20, HKFK20, HKMY20, FNSZ20}. In the worst-order literature, most relevant to our work is~\cite{AEFNS20} who gave a $1/2$-approximation for general (non-monotone) submodular functions, and~\cite{FNSZ20} who proved a matching inapproximability.

\section{Preliminaries}\label{section:prelim}
\begin{definition}
Given a ground set of elements $E$, a function $f:2^{E}\to\RR_{\ge0}$ is submodular if for all $S\subseteq T\subseteq E$ and $i\in E\setminus T$, $f(S\cup \{i\})-f(S)\ge f(T\cup \{i\})-f(T)$. Moreover, we denote the marginal gain by $f(X|S):=f(X\cup S)-f(S)$.
\end{definition}
\begin{definition}
A function $f:2^{E}\to\RR_{\ge0}$ is symmetric if for all $S\subseteq E$, $f(S)=f(E\setminus S)$.
\end{definition}
In this paper, we \textbf{always consider maximizing non-negative submodular functions} over $n$ elements under a cardinality constraint $k$, i.e., $\max_{X\in E, |X|\le k}f(X)$.
The following lemma~\cite{FMV11} for non-negative submodular functions will be useful.
\begin{lemma}
\label{lem:subsample}
Let $f\,:\,2^E \rightarrow \mathbb{R}_{\ge 0}$ be a submodular function. Further, let $R$ be a random subset of $T\subseteq E$ in which every element occurs with probability at least $p$ (not necessarily independently). Then, $\E[f(R)] \geq p f(T) + (1 - p)f(\emptyset)$.
\end{lemma}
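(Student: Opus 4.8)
The plan is to prove the inequality by induction on $|T|$, peeling one element off $T$ at a time; at each step the whole problem will reduce to controlling a single ``residual'' contribution coming from the peeled element, which is where all the correlation in $R$ gets quarantined. The base case $T=\emptyset$ is an equality: then $R=\emptyset$ almost surely, so $\E[f(R)]=f(\emptyset)=p\,f(\emptyset)+(1-p)f(\emptyset)=p\,f(T)+(1-p)f(\emptyset)$.

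For the inductive step, fix $e\in T$ and put $T':=T\setminus\{e\}$. I would condition on the realization $Q$ of the random set $R\cap T'$, and write $r(Q):=\prob{e\in R\mid R\cap T'=Q}\in[0,1]$, so that
\[
\E[f(R)]=\E_Q\!\big[(1-r(Q))f(Q)+r(Q)f(Q\cup\{e\})\big]=\E_Q[f(Q)]+\E_Q\!\big[r(Q)\,f(\{e\}\mid Q)\big].
\]
The set $R\cap T'$ is a random subset of $T'$ in which every element still occurs with probability at least $p$, so the inductive hypothesis applies and gives $\E_Q[f(Q)]\ge p\,f(T')+(1-p)f(\emptyset)$. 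Since $p\,f(T')+(1-p)f(\emptyset)+p\big(f(T)-f(T')\big)=p\,f(T)+(1-p)f(\emptyset)$, it suffices to prove the residual bound
\[
\E_Q\!\big[r(Q)\,f(\{e\}\mid Q)\big]\ \ge\ p\big(f(T)-f(T')\big)=p\,f(\{e\}\mid T').
\]

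For the residual bound I would exploit that marginals only shrink as the conditioning set grows: since $Q\subseteq T'$ for every realization, submodularity gives $f(\{e\}\mid Q)\ge f(\{e\}\mid T')$. Writing $f(\{e\}\mid Q)=f(\{e\}\mid T')+\delta(Q)$ with $\delta(Q)\ge0$ and using $\E_Q[r(Q)]=\prob{e\in R}$,
\[
\E_Q\!\big[r(Q)f(\{e\}\mid Q)\big]=f(\{e\}\mid T')\cdot\E_Q[r(Q)]+\E_Q\!\big[r(Q)\delta(Q)\big]\ \ge\ f(\{e\}\mid T')\cdot\E_Q[r(Q)],
\]
and I finish by comparing $\E_Q[r(Q)]=\prob{e\in R}$ with $p$ (an equality in the exact-probability case; in general one carries $\prob{e\in R}$ through symbolically and invokes $\prob{e\in R}\ge p$ together with non-negativity of $f$ to absorb the sign of $f(\{e\}\mid T')$). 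I expect this residual term to be where the real work lies, precisely because $R$ may be an arbitrary, highly correlated distribution: conditioning on whether $e\in R$ typically skews the marginals of the remaining elements away from $p$, so one cannot apply the inductive hypothesis to the two conditional laws of $R\setminus\{e\}$ individually. The point of the argument above is that the inductive hypothesis is invoked only on the \emph{unconditional} law of $R\cap T'$, whose marginals are untouched, while all the correlation is isolated inside the single term $\E_Q[r(Q)f(\{e\}\mid Q)]$, which is then bounded using nothing but submodularity. As a sanity check, in the independent case $\E[f(R)]$ equals the multilinear extension $F$ of $f$ evaluated at $p\mathbf 1_T$, and $F$ is concave along the segment from $\mathbf 0$ to $\mathbf 1_T$ since its mixed second partials are nonpositive by submodularity and its pure second partials vanish by multilinearity, so $F(p\mathbf 1_T)\ge pF(\mathbf 1_T)+(1-p)F(\mathbf 0)=p f(T)+(1-p)f(\emptyset)$; but this shortcut breaks under correlation, which is why the induction is the route to push through in general.
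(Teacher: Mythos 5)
The paper never proves this lemma --- it is quoted from~\cite{FMV11} --- so there is no internal proof to compare against; judged on its own merits, your induction (peel off $e$, condition on $Q=R\cap T'$, apply the inductive hypothesis to the \emph{unconditional} law of $R\cap T'$, and bound the residual term $\E_Q[r(Q)\,f(\{e\}\mid Q)]$ via submodularity) is correct and complete in the case where every element occurs with probability \emph{exactly} $p$: there $\E_Q[r(Q)]=\Pr[e\in R]=p$, so the residual bound $\ge p\,f(\{e\}\mid T')$ holds irrespective of the sign of $f(\{e\}\mid T')$, and the induction closes (indeed without ever using non-negativity). That exact-marginal form is the standard form of the FMV lemma, and it is the only form this paper actually needs: in its applications the marginals are exact, and wherever only ``at least $p$'' is available the $(1-p)f(\emptyset)$ term is discarded anyway, which non-negativity of $f$ justifies.

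For the lemma as literally stated, however, your final step is a genuine gap. You have $\E_Q[r(Q)\,f(\{e\}\mid Q)]\ \ge\ f(\{e\}\mid T')\cdot\Pr[e\in R]$, but when $f(\{e\}\mid T')<0$ (entirely possible for non-monotone $f$) and $\Pr[e\in R]>p$, the right-hand side is \emph{smaller} than $p\,f(\{e\}\mid T')$, and non-negativity of $f$ does not ``absorb the sign'' as you suggest. No patch can exist, because with ``at least $p$'' the claimed inequality is false: take $E=T=\{e\}$ with $f(\emptyset)=1$, $f(\{e\})=0$ (modular, hence submodular, and non-negative) and $R=\{e\}$ with probability $1$; every element occurs with probability $1\ge p$, yet $\E[f(R)]=0<(1-p)=p\,f(T)+(1-p)f(\emptyset)$ for every $p<1$. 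Under the ``at least $p$'' hypothesis the most one can salvage is $\E[f(R)]\ge p\,f(T)$ (using $f\ge 0$). So you should state and prove the lemma with marginals exactly $p$ --- your argument then works verbatim --- and note that the relaxation to ``at least $p$'' is an over-statement inherited from the lemma as quoted, harmless for the paper's uses but not provable as written.
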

Moreover, we are interested in the fixed-parameter tractable algorithms.
\begin{definition}
For submodular maximization over $n$ elements with cardinality constraint $k$, we say an algorithm is fixed-parameter tractable (FPT) if it has runtime $h(k)\cdot \poly(n)$, where $h$ can be any finite function. 
\end{definition}
Besides, we are also interested in studying low-memory algorithms for submodular maximization in the random-order streaming model, and in this setting, we only care about the memory cost but not the runtime. We follow the standard setup of streaming model for submodular maximization in the literature (see e.g., the model in the original work~\cite[Section 3]{BMKK14} and more recent works~\cite{AEFNS20,HKMY20,LRSVZ21}), and the only additional assumption we make is that the elements arrive in uniformly random order (which was studied in e.g.,~\cite{SMC19,LRSVZ21}). 
\subsubsection*{Random-order streaming model}\label{def:random_order_model}
In the random-order streaming model, an algorithm is given a single pass over a dataset in a streaming fashion, where the stream is a uniformly random permutation of the input dataset and each element is seen once. The algorithm is allowed to store the elements or any information in a memory buffer with certain size. To be precise, at any point during the runtime of the algorithm,
$$\textrm{memory cost}=\textrm{number of stored elements}+\textrm{number of bits of stored information}.$$
Note that the elements and information are treated separately. One can think of the elements as physical tokens\footnote{The standard streaming model for submodular maximization assumes the elements are stored like physical tokens rather than using arbitrary encoding, because the model eventually wants to restrict the algorithm's access to the value oracle. If we store elements using arbitrary encoding, it is not clear how to restrict oracle access for general submodular functions (although it is possible to define such model for some special applications). There is another model that allows elements to be stored in arbitrary encoding~\cite[Appendix B]{FNSZ20} - this model does not restrict oracle access at all, but instead it assumes that the elements appearing in the stream are a small part of the ground set.}, and the algorithm has a limited number of special slots to store the tokens. Besides these special slots, the algorithm has other limited space to store arbitrary information. The total memory cost should not exceed the algorithm's memory size.

Every time when a new element arrives, the algorithm can decide how to update its memory buffer, i.e., whether to store the new element or remove other elements in its memory, and what information to add or remove. At any time, the algorithm can make any number of queries to the value oracle of the objective submodular function, but it is only allowed to query the value of any subset of the elements that are stored in its memory. For example, at some point during the stream, suppose the algorithm stores an element $e$, and it makes a query of the value of set $\{e\}$ and writes the result of the query in its memory in any format it prefers (e.g., ``the value of $\{e\}$ is...''), and then it removes element $e$. After removing $e$, it will never be allowed to query the value of any set that contains $e$ in the future, but it can still keep the information ``the value of $\{e\}$ is...'', which it wrote before, in its memory, as long as it wants.

At the end of the stream, the algorithm outputs a subset of elements that are stored in its memory as the solution set.

\begin{remark}\label{rmk:stronger_algorithms}
Our streaming algorithm falls into the above model and has low memory cost (specifically, $\tilde{O}(k^2)$). Our hardness result in fact holds against stronger algorithms that are allowed to (i) store infinite bits of information (i.e., only the number of stored elements counts as memory cost) and (ii) output any size-$(\le k)$ subset of elements as the solution set (i.e., during the stream, the algorithm is still only allowed to query any subset of elements stored in its memory, but at the end of the stream, it can output any size-$(\le k)$ subset of the ground set as it wants\footnote{I.e., it can output something like ``My solution set is \{1,3,11,...\}'' even if elements $1,3,11$ are not in its memory.}).
\end{remark}

\section{The core algorithm}
In this section, we present the core algorithm of this work (Algorithm~\ref{alg:symmetric}), which is actually our streaming algorithm. Our FPT algorithm will use this core algorithm as a subroutine. The goal of this section is to establish the common setup for the analysis of our FPT algorithm for general submodular functions and the analysis of the streaming algorithm for symmetric submodular functions. In this process, we will also do a warm-up that proves $1/2$-approximation for the core algorithm on general submodular functions.

\begin{algorithm}
\caption{{\sc SymmetricStream}$(f, E, k, \eps)$ \label{alg:symmetric}}
\begin{algorithmic}[1]
\State Partition the first $\eps$ fraction of the random stream $E$ into windows $w_1,\dots,w_{3k}$ of equal size. 
\State $S_0\gets\emptyset$
\State $H\gets\emptyset$
\For{$i\gets1$ to $3k$}
    \State $e_i\gets\argmax_{e\in w_i} f(e|S_{i-1})$
    \State $S_i\gets S_{i-1}\cup\{e_i\}$
\EndFor
\For{$e \in E\setminus\{w_1, w_2, \ldots, w_{3k}\}$}\label{algline:main_loop}
    \For{$i=1,2,\ldots,3k$}
        \If{$f(e|S_{i-1})>f(e_i|S_{i-1})$ and $|H|< 18k^2\log k/\eps$ \label{algline:memory-threshold}}
            \State $H\gets H\cup\{e\}$
        \EndIf
    \EndFor
\EndFor
\State \textbf{return} $\argmax_{X\subseteq S_{3k}\cup H,\,|X|\le k} f(X)$
\end{algorithmic}
\end{algorithm}

At a high level, Algorithm~\ref{alg:symmetric} divides the first $\eps$ fraction of the stream into $3k$ windows\footnote{The choice of $3k$ suffices to beat $1/2$ approximation for symmetric submodular function, but it is conceivable that a larger budget might improve the final constant. For our FPT algorithm, dividing into $k$ windows would also work, but that does not improve the runtime asymptotically.} and greedily selects the element $e_i$ with the best marginal gain in each window $i$. (Although we use the notation $S_i$ in the pseudocode for clarity, we only need to keep track of one ordered solution set in the first for loop.) Then it freezes the solution set, and for the rest of the stream, it only selects the first $18k^2\log k/\eps$ elements that have better marginal gain than $e_i$ conditioned on the $(i-1)$-th partial solution for some $0\le i\le 3k-1$. Finally, it finds the best size-$k$ solution from all the selected elements by brute force. Due to line~\ref{algline:memory-threshold}, the memory usage is $O(k^2\log k/\eps)$. The runtime before the final brute-force search is $O(nk)$ because of the for loops, and the brute-force search takes time $k\binom{|S_{3k}\cup H|}{k}=2^{\widetilde{O}(k)}$ as $|S_{3k}|=3k$ and $|H|=O(k^2\log k/\eps)$, and hence, the total runtime is $O(nk)+2^{\widetilde{O}(k)}$ (which is not polynomial in $k$, but in this paper, we focus on the memory bound for streaming setting and FPT algorithms for offline setting).
\subsection{Warm-up}
As a warm-up, we prove the $1/2$-approximation of our core algorithm for general submodular functions, which also helps set up the proof of our main algorithmic results. Before getting to the technical proof, we provide the intuition for Algorithm~\ref{alg:symmetric} in the following. First, we want to make sure that with high probability $|H|$ never meets the size threshold in the if condition at line~\ref{algline:memory-threshold}, and thus the size threshold essentially does not affect our analysis. This follows by a standard argument (see Lemma~\ref{lem:whp_select_all_O_H}).

Because the stream is in random order, most optimal elements will be visited during the for loop at line~\ref{algline:main_loop}. A part of them $O_H$ will be picked by the algorithm, and the other part $O_L$ will not be selected. Consider the set $S_{|O_L|}$ defined in the algorithm. Because of the if condition at line~\ref{algline:memory-threshold}, the elements in $S_{|O_{L}|}$ have better marginal contribution than $O_{L}$, and we can show that $f(S_{|O_{L}|})\ge f(O_L|S_{|O_{L}|})$, which is somewhat similar to the classic greedy algorithm for monotone submodular maximization. Since $O_H\cup S_{|O_{L}|}$ and $S_{|O_{L}|}$ are two candidate solutions under the radar of the algorithm's final brute-force search, the algorithm achieves at least 
\begin{gather}\label{eq:warmup}\frac{f(O_H\cup S_{|O_{L}|})+f(S_{|O_{L}|})}{2}\ge\frac{f(O_H\cup S_{|O_{L}|})+f(O_L|S_{|O_{L}|})}{2}\ge\frac{f(O_H\cup O_L\cup S_{|O_{L}|})}{2},\end{gather}
where the last inequality is by submodularity.

To complete the analysis, we observe that  $f(O_H\cup O_L\cup S_{|O_L|})$ is not significantly worse than $f(O_H\cup O_L)$, and hence $1/2$-approximation follows from~\eqref{eq:warmup}. Indeed, because $S_{|O_L|}$ is chosen from a random $\eps$ fraction of $E$, it cannot hurt $O_H\cup O_L$ significantly --- otherwise, there should be many other elements similar to $S_{|O_L|}$, and together they would hurt $O_H\cup O_L$ so much that would eventually contradict non-negativity. This is formally shown in Lemma~\ref{lem:first_eps_does_not_hurt}. 
Now we start by proving Lemma~\ref{lem:first_eps_does_not_hurt} and Lemma~\ref{lem:whp_select_all_O_H} and then prove the $1/2$-approximation.

\begin{lemma}\label{lem:first_eps_does_not_hurt}
Let $O$ denote the optimal size-$k$ solution. Then, for any constants $\eps\in(0,1]$ and $\eps'>0$,
\begin{itemize}
    \item with probability at least $1-\eps/\eps'$, there does not exist a set $Y$ of elements in the first $\eps$ fraction of the stream such that $f(Y|O)\le -\eps' f(O)$,
    \item and moreover, with probability at least $1-3\eps/(\eps')^2$, for all $\ell\in\{\eps' k,2\eps' k,\dots,k\}$, for any $S\subseteq S_{3\ell}\setminus S_{2\ell}$, $f(S|O\cup S_{2\ell})\ge -\eps' f(O)$.
\end{itemize}

\end{lemma}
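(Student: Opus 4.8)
The plan is to prove both parts by a "duplication" argument driven by the non-negativity of $f$: if some set of early elements hurt $O$ a lot, then because the early elements are a uniformly random $\eps$-fraction of the stream, we could find many such disjoint sets, add them all to $O$, and drive the value below zero. For the first bullet, suppose for contradiction that with probability more than $\eps/\eps'$ there is a set $Y$ of elements among the first $\eps$ fraction with $f(Y \mid O) \le -\eps' f(O)$. Consider partitioning (a random permutation of) the whole ground set into $1/\eps$ consecutive blocks of size $\eps n$ each; the first block is exactly the "first $\eps$ fraction". By symmetry, each block independently-in-distribution plays the role of the first block, so the expected number of blocks that contain a bad set $Y_j$ is more than $(1/\eps)\cdot(\eps/\eps') = 1/\eps'$. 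Whenever at least $1/\eps'$ blocks are bad, pick one bad set $Y_j$ from each of $\lceil 1/\eps'\rceil$ of them; these are disjoint (they live in different blocks), and by submodularity $f\big(O \cup \bigcup_j Y_j\big) \le f(O) + \sum_j f(Y_j \mid O) \le f(O) - \lceil 1/\eps'\rceil \cdot \eps' f(O) \le 0$, but this contradicts $f \ge 0$ unless $f(O)=0$ (a trivial case). Turning "expected number of bad blocks $> 1/\eps'$" into "the bad event has probability $\le \eps/\eps'$" is just Markov's inequality on the count of bad blocks (the count is at most $1/\eps$), which gives the stated bound.

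For the second bullet I would reduce it to the first. Fix $\ell \in \{\eps'k, 2\eps'k, \dots, k\}$, and condition on the first bullet's good event. The elements $S_{3\ell}\setminus S_{2\ell}$ are $\ell$ elements, all chosen from the first $\eps$ fraction of the stream, specifically the greedily-picked winners of windows $w_{2\ell+1},\dots,w_{3\ell}$. I want to upper bound how much any subset $S \subseteq S_{3\ell}\setminus S_{2\ell}$ can hurt $O \cup S_{2\ell}$. By submodularity, $f(S \mid O \cup S_{2\ell}) \ge f(S \cup S_{2\ell} \mid O) - f(S_{2\ell}\mid O) \ge f(S \cup S_{2\ell}\mid O) - 0$ using non-negativity of $f(S_{2\ell})$... but that's the wrong direction, so instead I want: $f(S \mid O\cup S_{2\ell}) \ge f(S \cup S_{2\ell}\mid O\cup \text{(something)})$; cleaner is to note $S \cup S_{2\ell}\subseteq S_{3\ell}$ is itself a set of at most $3\ell \le 3k$ elements lying in the first $\eps$ fraction, so by the first bullet applied with the \emph{same} $\eps'$ we get $f(S\cup S_{2\ell}\mid O)\ge -\eps' f(O)$. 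Then $f(S\mid O\cup S_{2\ell}) \ge f(S\cup S_{2\ell}\mid O) - f(S_{2\ell}\mid O)$; since $f(S_{2\ell}\mid O) \ge -\eps' f(O)$ as well (again first bullet, $S_{2\ell}$ has $\le 3k$ elements in the first $\eps$ fraction), this would yield $f(S\mid O\cup S_{2\ell})\ge -2\eps' f(O)$, which is off by a constant factor. The fix is to apply the first bullet with a slightly smaller parameter (replace $\eps'$ by $\eps'/2$ or $\eps'/3$) and track the union bound over the $1/\eps'$ values of $\ell$: this is exactly where the $3\eps/(\eps')^2$ in the statement comes from — roughly $(1/\eps')$ values of $\ell$, each costing failure probability $\eps/(\eps'/3) = 3\eps/(\eps')^2$ divided among them, or a single application of the first bullet at parameter $\Theta(\eps')$ whose $\eps/\eps'$-type bound becomes $\Theta(\eps/(\eps')^2)$.

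I expect the main obstacle to be bookkeeping the constants so that a single clean invocation of the first bullet (at parameter $\eps'$, or a constant multiple of it) suffices for \emph{all} $\ell$ simultaneously, rather than paying a union bound over $\ell$; the key realization that makes this work is that $S_{2\ell}$ and $S_{3\ell}$ are \emph{nested} as $\ell$ increases and all live in the first $\eps$ fraction, so the event "no size-$(\le 3k)$ subset of the first $\eps$ fraction hurts $O$ by more than $\eps' f(O)$" — which is what the first bullet gives, since $Y$ there ranges over \emph{all} subsets of the first $\eps$ fraction — already implies, via one application of submodularity $f(S\mid O\cup S_{2\ell}) \ge f(S\cup S_{2\ell}\mid O)$ (valid because adding $S_{2\ell}$ to the conditioning set only decreases the marginal), that $f(S\mid O\cup S_{2\ell})\ge -\eps' f(O)$ directly. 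If that submodularity step goes through as stated, the second bullet follows from the first with \emph{no} loss, and the $3\eps/(\eps')^2$ bound is simply conservative slack the authors allow themselves; if it does not (because $S\cup S_{2\ell}$ could have up to $3k$ elements and one needs the first bullet's $Y$ to be allowed that large — which it is, since $Y$ is unrestricted in size there), then the factor-of-3 and the squared $\eps'$ absorb the two-term decomposition above. Either way the arithmetic is routine; the conceptual content is entirely in the first bullet's duplication-to-contradict-nonnegativity argument.
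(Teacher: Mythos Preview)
Your first-bullet argument is morally the same as the paper's: partition the stream into $1/\eps$ exchangeable blocks, let $Y_i$ minimize $f(\cdot\mid O)$ over block $i$, use submodularity to get $\sum_i f(Y_i\mid O)\ge f(\bigcup_i Y_i\mid O)\ge -f(O)$, hence $\E[f(Y_1\mid O)]\ge -\eps f(O)$, then Markov. Your contradiction framing is a roundabout way to say the same thing (and your appeal to ``Markov on the count of bad blocks'' is not literally Markov, but the arithmetic works).

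Your second-bullet reduction, however, is broken. The identity
\[
f(S\mid O\cup S_{2\ell}) \;=\; f(S\cup S_{2\ell}\mid O) \;-\; f(S_{2\ell}\mid O)
\]
means that to lower-bound the left side you need an \emph{upper} bound on $f(S_{2\ell}\mid O)$, not the lower bound $f(S_{2\ell}\mid O)\ge -\eps' f(O)$ that the first bullet provides. Since $f(S_{2\ell}\mid O)$ can be as large as $2f(O)$, this route gives nothing. Your alternative claim ``$f(S\mid O\cup S_{2\ell}) \ge f(S\cup S_{2\ell}\mid O)$ by submodularity'' is also false: submodularity gives the reverse direction, $f(S\cup S_{2\ell}\mid O)\ge f(S\cup S_{2\ell}\mid O\cup S_{2\ell})=f(S\mid O\cup S_{2\ell})$. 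So the second bullet does not follow from the first.

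What the paper actually does: for each fixed $\ell$, it \emph{conditions} on $w_1,\dots,w_{2\ell}$ (hence on $S_{2\ell}$) and then reruns the block-duplication argument on the remaining random portion of the stream, with base set $O\cup S_{2\ell}$ instead of $O$. Non-negativity now gives $f(Y\mid O\cup S_{2\ell})\ge -f(O\cup S_{2\ell})$, and the crucial extra step is $f(O\cup S_{2\ell})\le f(O)+f(S_{2\ell})\le 3f(O)$ (using $|S_{2\ell}|\le 2k$ and optimality of $O$). This is where the factor $3$ comes from; the squared $\eps'$ is the union bound over the $1/\eps'$ values of $\ell$. The conditioning is essential precisely because $S_{2\ell}$ depends on the randomness, so you cannot treat it as a free subset of the first $\eps$ fraction while simultaneously putting it into the conditioning set.
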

\begin{proof}
Divide the random stream $E$ equally into $1/\eps$ parts $E_1,\dots,E_{1/\eps}$. Let $Y_i=\argmin_{Y\subseteq E_i} f(Y|O)$. Then we have that
\begin{align*}
    \E [\min_{Y^*\subseteq E} f(Y^*|O)] &\le \E [f({\textstyle \bigcup\limits_{i\in [1/\eps]}}Y_i|O)] \\
    &\le \E [\sum_{i\in [1/\eps]}f(Y_i|O)] && \text{(By submodularity)} \\
    &= (1/\eps)\E [f(Y_1|O)]. && \text{(By linearity of expectation)}
\end{align*}
Since $f$ is non-negative, $f(Y|O)=f(Y\cup O)-f(O)\ge -f(O)$ for any $Y$. It follows that $\E [f(Y_1|O)]\ge -\eps f(O)$. By applying Markov's inequality on $-f(Y_1|O)$, the first bullet point holds with probability at least $1-\eps/\eps'$.

The second bullet point can be proved similarly. Specifically, for each $\ell$, we condition on $w_1,\dots,w_{2\ell}$ and let $E_i'$ denote the last $1-\frac{2\ell}{3k}$ fraction of $E_i$ for all $i$. Now let $Y_i'=\argmin_{Y\in E_i'} f(Y|O\cup  S_{2\ell})$ for all $i$. In particular, for any $S\subseteq S_{3\ell}\setminus S_{2\ell}\subseteq E_1'$, $f(Y_1'|O\cup  S_{2\ell})$ lower bounds $f(S|O\cup S_{2\ell})$. Then similar to above, we have that conditioned on arbitrary $w_1,\dots,w_{2\ell}$,
\begin{align*}
    \E[\min_{Y^*\subseteq E} f(Y^*|O\cup S_{2\ell})] &\le \E [f({\textstyle \bigcup\limits_{i\in [1/\eps]}}Y_i'|O\cup S_{2\ell})] \\
    &\le \E [\sum_{i\in [1/\eps]}f(Y_i'|O\cup S_{2\ell})] && \text{(By submodularity)} \\
    &= (1/\eps)\E [f(Y_1'|O\cup S_{2\ell})]. && \text{(By linearity of expectation)}
\end{align*}
By non-negativity, $f(Y|O\cup S_{2\ell})=f(Y\cup O\cup S_{2\ell})-f(O\cup S_{2\ell})\ge -f(O\cup S_{2\ell})$ for any $Y$. By optimality of $O$, $f(S_{2\ell})\le 2f(O)$, and hence by submodularity, $f(O\cup S_{2\ell})\le 3f(O)$. It follows that $\E [f(Y_1'|O\cup S_{2\ell})]\ge -3\eps f(O)$. By Markov's inequality, the second bullet point holds for each $\ell$ with probability at least $1-3\eps/\eps'$. The proof finishes by a union bound over all the $\ell$'s.
\end{proof}
\begin{lemma}\label{lem:whp_select_all_O_H}
With probability $1-3/k$, $|H|<18k^2\log k/\eps$, namely, Algorithm~\ref{alg:symmetric} never ignores any elements due to the memory threshold (line~\ref{algline:memory-threshold}).
\end{lemma}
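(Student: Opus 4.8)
The plan is to show that the "bad event" — that some particular element $e$ arriving after the windows gets added to $H$ — is rare, and then union-bound over all elements in the stream. For a fixed element $e$ appearing in $E\setminus\{w_1,\dots,w_{3k}\}$, note that $e$ is added to $H$ only if, for some index $i\in\{1,\dots,3k\}$, we have $f(e\mid S_{i-1})>f(e_i\mid S_{i-1})$. The key observation is that $S_{i-1}$ and $e_i$ are determined entirely by the elements in windows $w_1,\dots,w_i$, which all arrive strictly before $e$ (since $e$ is outside the first $\eps$ fraction). So, conditioned on the contents of the windows — equivalently, conditioned on $S_{i-1}$ and on the element $e_i$ that was chosen in window $w_i$ — the event "$f(e\mid S_{i-1})>f(e_i\mid S_{i-1})$" is a deterministic property of $e$, and we want to bound the probability over the randomness of which element of the post-window stream equals $e$.

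First I would set up the counting argument for a fixed window index $i$. Condition on the realization of $w_1,\dots,w_i$, which fixes $S_{i-1}$ and $e_i=\argmax_{e'\in w_i}f(e'\mid S_{i-1})$. By the greedy choice, every element $e'\in w_i$ satisfies $f(e'\mid S_{i-1})\le f(e_i\mid S_{i-1})$. Now among all the elements of the ground set, let $B_i$ be the set of elements $e'$ with $f(e'\mid S_{i-1})>f(e_i\mid S_{i-1})$; these are exactly the elements that, if they land after the windows, could trigger inclusion in $H$ via index $i$. Crucially, none of $B_i$'s elements landed in window $w_i$. But in the random-order model, conditioned on which $3k$ windows' worth of elements occupy the first $\eps$ fraction, window $w_i$ is a uniformly random $\eps/(3k)$-fraction-sized subset of those, and more to the point: if we think of the process as first choosing which elements are in the first $\eps$ fraction and then randomly permuting, an element of $B_i$ that appears after the windows is "surprising" because a random $\eps/(3k)$ fraction of it should have fallen in $w_i$. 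Concretely, for each $e'\in B_i$, conditioned on $e'$ appearing somewhere in the stream, the probability that $e'$ lands after all the windows while some element of $w_i$ has smaller marginal than $e'$... here it is cleaner to bound things the other way: the expected number of elements of $B_i$ that appear after the windows is at most $n$, but each such element, had the permutation been re-randomized, would have landed in $w_i$ with probability $\approx \eps/(3k)$, and if it had, it would have been chosen as $e_i$ (contradiction) — so in expectation at most a $1$-fraction survive. I would make this precise by a direct argument: over the random order, $\E[|\{e'\in B_i : e' \text{ after windows}\}|\cdot \mathbf{1}[w_i\text{ intersects }B_i\text{ appropriately}]]$ — the cleanest route is to show $\Pr[e \text{ added to } H \text{ via index } i]\le O(\log k / (\eps k))$ per element by arguing that if many elements beat $e_i$, then with high probability one of them would have been sampled into $w_i$.

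Second, I would assemble the pieces: summing over the $n$ elements outside the windows and over the $3k$ indices $i$, the expected size of $H$ is $O(n k \cdot \log k/(\eps k)) = O(n\log k/\eps)$ — but this is too weak since we want $|H|< 18k^2\log k/\eps$ with probability $1-3/k$ independent of $n$. So the per-element probability bound must actually be $O(k\log k/(\eps n))$ rather than $O(\log k/(\eps k))$: the right statement is that for a fixed $e$ outside the windows, $\Pr[e\in H]\le 18k^2\log k/(\eps n) \cdot (1/k)$ roughly, coming from the fact that for each index $i$, if the number of not-yet-seen elements beating $e_i$ is $m$, then $e_i$ was chosen badly with probability at most $(1-\eps/(3k))^{m/\text{(window count)}}$-ish, forcing $m = O((k/\eps)\log(\text{something}))$ with high probability, so only $O((k/\eps)\log k)$ elements per index survive to possibly enter $H$, times $3k$ indices gives $O(k^2\log k/\eps)$ total candidates — and then a union/Chernoff bound over the failure that any window $w_i$ is "unrepresentative" gives the $1-3/k$. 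The main obstacle will be getting the dependence on $k$ and the $\log k$ factor exactly right in the tail bound: I expect the clean way is, for each $i$, to show that with probability $\ge 1-1/k^2$ (so that a union bound over $3k$ indices costs only $3/k$), the element $e_i$ selected greedily in window $w_i$ is within the top $O((k/\eps)\log k)$ elements (by marginal value w.r.t.\ $S_{i-1}$) among all elements appearing at or after $w_i$ — because a window is a random $\approx\eps/(3k)$ fraction of the remaining stream, so it misses the top $t$ elements with probability $(1-\eps/(3k))^{t}\le e^{-\eps t/(3k)}$, which is $\le 1/k^2$ once $t \ge (6k/\eps)\ln k$. Granting that, only those top $O((k/\eps)\log k)$ elements per index can ever be added to $H$, so $|H| \le 3k\cdot O((k/\eps)\log k) = O(k^2\log k/\eps)$, and choosing the constant $18$ appropriately closes it.
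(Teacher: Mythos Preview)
Your proposal is correct and, after the initial detour, arrives at essentially the same argument as the paper: for each index $i$, condition on $w_1,\dots,w_{i-1}$ (fixing $S_{i-1}$), observe that $w_i$ is a random $\eps/(3k)$-fraction of the remaining stream, and bound the probability that $w_i$ misses all of the top $t=6k\log k/\eps$ elements (ranked by $f(\cdot\mid S_{i-1})$) by $(1-\eps/(3k))^t\le k^{-2}$; a union bound over the $3k$ indices gives failure probability $3/k$, and on the good event $|H|\le 3k\cdot t=18k^2\log k/\eps$. The per-element approach you abandon early on is indeed too weak (it would give an $n$-dependent bound), so the pivot to bounding, per index $i$, how many elements can beat $e_i$ is exactly the right move and matches the paper's proof.
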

\begin{proof}
For arbitrary $1\le i\le 3k$, conditioned on windows $w_1,\dots,w_{i-1}$, the rest of the stream starting from $w_i$ still has random order, and $w_i$ contains $\ge\eps/(3k)$ fraction of the rest of the stream starting from $w_i$. Consider the top $6k\log k/\eps$ elements with largest $f(e|S_{i-1})$ in the rest of the stream starting from $w_i$. Since for any $j$, the probability that the $j$-th top element appears in $w_i$, conditioned on that the top $j-1$ elements are not in $w_i$, is at least $\eps/(3k)$, the probability that none of the top $6k\log k/\eps$ elements appears in $w_i$ is at most $(1-\eps/(3k))^{6k\log k/\eps}\le e^{-2\log k}=k^{-2}$. Therefore, with probability at most $k^{-2}$, there are $\ge 6k\log k/\eps$ elements $e$ with $f(e|S_{i-1})>f(e_i|S_{i-1})$ in the rest of the streaming starting from $w_{i+1}$. The proof is finished by taking a union bound over all $i\in[3k]$.
\end{proof}
\begin{theorem}\label{thm:non-monotone_half_apx}
Algorithm~\ref{alg:symmetric} achieves $(1/2-O(\sqrt{\eps})-O(1/k))$-approximation for non-negative non-monotone submodular functions in the random-order streaming model.
\end{theorem}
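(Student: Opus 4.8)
\emph{Setup and plan.} Let $O$ be an optimal feasible (size $\le k$) set, $\mathrm{OPT}=f(O)$, let $\mathrm{ALG}$ be the random value $f(\cdot)$ of the set returned by Algorithm~\ref{alg:symmetric}, and let $Z=O\cap(w_1\cup\cdots\cup w_{3k})$ be the optimal elements lying in the first $\eps$ fraction of the stream, so that $O':=O\setminus Z$ is the set of optimal elements seen during the main loop (line~\ref{algline:main_loop}). The plan is to condition on three ``good'' events, each with failure probability $O(\sqrt\eps)+O(1/k)$, and on their intersection lower-bound $\mathrm{ALG}$ by the average value of two sets that the final brute-force $\argmax$ considers; non-negativity of $f$ (so $\mathrm{ALG}\ge 0$ always) then yields the claimed bound on $\E[\mathrm{ALG}]$. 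First event: by Lemma~\ref{lem:whp_select_all_O_H}, with probability $\ge 1-3/k$ the cap at line~\ref{algline:memory-threshold} is never hit, so during the main loop an element $e$ enters $H$ exactly when $f(e\mid S_{i-1})>f(e_i\mid S_{i-1})$ for some $i$. Accordingly split $O'$ into disjoint parts $O_H$ (the elements that enter $H$, so $O_H\subseteq H$) and $O_L$ (so every $o\in O_L$ satisfies $f(o\mid S_{i-1})\le f(e_i\mid S_{i-1})$ for all $i$). Since $|O_L|\le|O|=k\le 3k$, the set $S_{|O_L|}=\{e_1,\dots,e_{|O_L|}\}$ is well defined, and telescoping together with $f(\emptyset)\ge 0$ gives $f(S_{|O_L|})\ge\sum_{j=1}^{|O_L|}f(e_j\mid S_{j-1})$.

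\emph{A greedy-type inequality and two candidate sets.} Order $O_L=\{o_1,\dots,o_{|O_L|}\}$ arbitrarily; since $S_{j-1}\subseteq S_{|O_L|}$, submodularity gives $f(O_L\mid S_{|O_L|})\le\sum_{j}f(o_j\mid S_{j-1})$, and by the defining property of $O_L$ (instantiated at $i=j$) each term is $\le f(e_j\mid S_{j-1})$, so $f(O_L\mid S_{|O_L|})\le f(S_{|O_L|})$ --- the analogue of the standard greedy bound. Now $S_{|O_L|}$ and $S_{|O_L|}\cup O_H$ are both subsets of $S_{3k}\cup H$ of size $\le k$, hence feasible candidates for the final $\argmax$; therefore
\[
\mathrm{ALG}\ \ge\ \tfrac12\bigl(f(S_{|O_L|}\cup O_H)+f(S_{|O_L|})\bigr)\ \ge\ \tfrac12\bigl(f(S_{|O_L|}\cup O_H)+f(O_L\mid S_{|O_L|})\bigr)\ \ge\ \tfrac12\, f(S_{|O_L|}\cup O'),
\]
where the last step applies $f(A)+f(B)\ge f(A\cup B)+f(A\cap B)$ with $A=S_{|O_L|}\cup O_H$, $B=S_{|O_L|}\cup O_L$, and $A\cap B=S_{|O_L|}$ (using $O_H\cap O_L=\emptyset$).

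\emph{The first $\eps$ fraction barely hurts.} It remains to show $f(S_{|O_L|}\cup O')\ge(1-O(\sqrt\eps))\mathrm{OPT}$ with good probability. Second event: all of $S_{|O_L|}\subseteq S_{3k}$ lies in the first $\eps$ fraction, so the first part of Lemma~\ref{lem:first_eps_does_not_hurt} with $\eps'=\sqrt\eps$ gives, with probability $\ge 1-\sqrt\eps$, that $f(S_{|O_L|}\mid O)\ge-\sqrt\eps\,\mathrm{OPT}$; as $O'\subseteq O$, submodularity upgrades this to $f(S_{|O_L|}\mid O')\ge-\sqrt\eps\,\mathrm{OPT}$, i.e.\ $f(S_{|O_L|}\cup O')\ge f(O')-\sqrt\eps\,\mathrm{OPT}$. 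Third event: in a uniformly random permutation each element of $O$ lands in the first $\eps$ fraction with probability $\eps$, so $O'$ is a random subset of $O$ in which each element survives with probability $1-\eps$; Lemma~\ref{lem:subsample} then gives $\E[f(O')]\ge(1-\eps)\mathrm{OPT}$, and since $|O'|\le k$ optimality of $O$ forces $f(O')\le\mathrm{OPT}$, so $\mathrm{OPT}-f(O')$ is non-negative with mean $\le\eps\,\mathrm{OPT}$ and Markov yields $f(O')\ge(1-\sqrt\eps)\mathrm{OPT}$ with probability $\ge 1-\sqrt\eps$.

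\emph{Putting it together.} On the intersection of the three events, which by a union bound has probability $\ge 1-3/k-2\sqrt\eps$, the displayed chain and the two estimates above give $\mathrm{ALG}\ge\tfrac12\bigl((1-\sqrt\eps)\mathrm{OPT}-\sqrt\eps\,\mathrm{OPT}\bigr)=(\tfrac12-\sqrt\eps)\mathrm{OPT}$; combining with $\mathrm{ALG}\ge 0$ always, $\E[\mathrm{ALG}]\ge(1-3/k-2\sqrt\eps)(\tfrac12-\sqrt\eps)\mathrm{OPT}\ge(\tfrac12-O(\sqrt\eps)-O(1/k))\mathrm{OPT}$, as claimed. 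The one place that needs care is the third paragraph's bookkeeping: one must apply Lemma~\ref{lem:first_eps_does_not_hurt} to exactly the right set ($S_{|O_L|}$, which does sit in the first $\eps$ fraction) and then transfer the marginal bound from conditioning on $O$ to conditioning on $O'$ via submodularity, while \emph{separately} controlling $f(O')$ against $\mathrm{OPT}$ through the subsampling lemma; the boundary case $|O_L|=0$ (where $S_{|O_L|}=\emptyset$ and $O_H=O'$, so $\mathrm{ALG}\ge\tfrac12 f(O')$) should also be noted but is immediate.
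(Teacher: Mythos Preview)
Your proof is correct and follows essentially the same approach as the paper: the same partition of $O'$ into $O_H$ and $O_L$, the same greedy-type bound $f(O_L\mid S_{|O_L|})\le f(S_{|O_L|})$ via the property of rejected elements (the paper's Observation~\ref{obs:O_L_is_inferior}), the same two candidate sets $S_{|O_L|}$ and $S_{|O_L|}\cup O_H$, and the same appeals to Lemmas~\ref{lem:first_eps_does_not_hurt}, \ref{lem:whp_select_all_O_H}, and \ref{lem:subsample}. The only cosmetic differences are that you package the control of $f(O')$ as an explicit third high-probability event (via Markov on $\mathrm{OPT}-f(O')$) whereas the paper folds this into a conditional-expectation argument, and you invoke the submodular inequality $f(A)+f(B)\ge f(A\cup B)+f(A\cap B)$ directly where the paper writes out the chain \eqref{eq:asymmetric_stream_eq_1} step by step.
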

\begin{proof}
Let $O$ be the size-$k$ optimal set, and let $O'\subseteq O$ be the subset of optimal elements that appear in the last $1-\eps$ fraction of the stream. We partition $O'$ into $O_L$ and $O_H$, where $O_L$ are the optimal elements that are not selected by the algorithm, and $O_H$ are those selected. Let $S_L$ be short for solution $S_{|O_L|}$ in the algorithm. If the good event in the first bullet point of Lemma~\ref{lem:first_eps_does_not_hurt} with $\eps'=\sqrt{\eps}$ happens (which we denote by $A_1$), then $f(S_L|O_L\cup O_H)\ge f(S_L|O)\ge-\sqrt{\eps} f(O)$, where the first inequality is by submodularity. If $A_1$ happens,
we have that
\begin{align}
    f(O_H\cup S_L)+f(O_L|S_L)&= f(O_H|S_L)+f(S_L\cup O_L) \nonumber\\
    &\ge f(O_H|S_L\cup O_L)+f(S_L\cup O_L) \quad \text{(By submodularity)} \nonumber\\
    &= f(O_L\cup O_H\cup S_L) \nonumber\\
    &= f(O_L\cup O_H)+f(S_L|O_L\cup O_H) \nonumber\\
    &\ge f(O_L\cup O_H)-\sqrt{\eps} f(O). \label{eq:asymmetric_stream_eq_1}
\end{align}
Let $O_L = \{o_1, o_2, \ldots, o_{|L|}\}$. If the good event in Lemma~\ref{lem:whp_select_all_O_H}, denoted by $A_2$, happens, then, we have the following simple observation by design of the algorithm.
\begin{observation}\label{obs:O_L_is_inferior}
If $A_2$ happens, then for any $o\in O_L$, $f(o|S_{i-1})\le f(e_i|S_{i-1})$, for all $i\in [3k]$, because $o$ is skipped by the algorithm.
\end{observation}
Given $A_2$, using the above observation, we have that
\begin{align}
    f(O_L|S_L)&\le \sum_{o\in O_L}f(o|S_L) && \text{(By submodularity)} \nonumber\\
    &\le \sum_{i=1}^{|O_L|} f(o_i|S_{i-1}) && \text{(By submodularity)} \nonumber\\
    &\le \sum_{i=1}^{|O_L|} f(e_i|S_{i-1})  && \text{(By Observation~\ref{obs:O_L_is_inferior})}\nonumber\\
    &=f(S_L). && \text{(By telescoping sum)} \label{eq:asymmetric_stream_eq_2}
\end{align}
Combining Eq.~\eqref{eq:asymmetric_stream_eq_1} and~\eqref{eq:asymmetric_stream_eq_2}, we get 
\begin{equation}
    f(O_H\cup S_L)+f(S_L)\ge f(O_L\cup O_H)-\sqrt{\eps} f(O).\label{eq:asymmetric_stream_eq_3}
\end{equation}
By Lemma~\ref{lem:subsample}, $\E [f(O')]\ge (1-\eps)f(O)$. By a Markov argument,
\begin{equation}\label{eq:asymmetric_stream_eq_4}
    \E [f(O')\mid A_1, A_2]\ge (1-\eps-\sqrt{\eps}-3/k)f(O),
\end{equation}
and hence, by taking expectation for both sides of Eq.~\eqref{eq:asymmetric_stream_eq_3}, we have that
\begin{align*}
    \E[f(O_H\cup S_L)+f(S_L)\mid A_1, A_2]&\ge \E[f(O_L\cup O_H) \mid A_1, A_2]-\sqrt{\eps} f(O) \\
    &\ge (1-\eps-2\sqrt{\eps}-3/k)f(O).
\end{align*}
Finally, $\E[f(O_H\cup S_L)+f(S_L)]\ge\Pr(A_1, A_2)\cdot\E[f(O_H\cup S_L)+f(S_L)\mid A_1, A_2]\ge (1-\sqrt{\eps}-3/k)(1-\eps-2\sqrt{\eps}-3/k)f(O)=(1-O(\sqrt{\eps}+1/k))f(O)$. The proof finishes because $O_H\cup S_L$ and $S_L$ are both subsets of $S_{3k}\cup H$, so one of them must achieve at least half of $(1-O(\sqrt{\eps}+1/k))f(O)$.
\end{proof}

\subsection{Setting up the analysis for the algorithmic results}\label{subsection:common_setup}
The proof of our main algorithmic results (Theorem~\ref{thm:symmetric_beating_half}, Theorem~\ref{thm:fpt} and Theorem~\ref{thm:fpt_plus}) is based on factor-revealing convex programs\footnote{The certificates for these convex programs can be found in appendix.}. We know that factor-revealing programs are not intuitive and hence not easy to understand, although they are effective tools for formalizing the proof. Therefore, in the future sections, before formally proving the main results, we will provide the intuition and interpretable (but less formal) analysis for our algorithmic results. \textbf{We recommend the readers only read the intuition and informal interpretable analysis in their first pass of the paper and then check the formal proofs that involve factor-revealing programs.}

In this subsection, we present the common setup of all these factor-revealing programs, which is based on a partial analysis of Algorithm~\ref{alg:symmetric}. Specifically, it contains some notations and a system of constraints that later will be a part of all our factor-revealing programs.

\subsubsection{Notations and conventions}
We define $O,O',O_L,O_H,A_2$ as in the proof of Theorem~\ref{thm:non-monotone_half_apx}. We let $A_1$ denote the union of the good events in the two bullet points of Lemma~\ref{lem:first_eps_does_not_hurt} with $\eps'=\eps^{1/3}$, which happens with probability at least $1-3\eps^{1/3}-\eps^{2/3}$. Since $\E [f(O')]\ge (1-\eps)f(O)$ by Lemma~\ref{lem:subsample} and $f(O')\le f(O)$ by optimality of $O$, it follows by a Markov argument that with probability $\le \eps^{1/2}$, $f(O')\le (1-\eps^{1/2})f(O)$. Let $A_3$ denote the event $f(O')\ge (1-\eps^{1/2})f(O)$. We \textbf{always condition on} the events $A_1,A_2,A_3$, which happen with probability at least $1-O(\eps^{1/3}+1/k)$. We also assume $f(O)=1$. Moreover, we let $\ell:=\ceil{\frac{|O_L|}{\eps^{1/3} k}}\cdot\eps^{1/3} k$, and hence, $|O_L|\le\ell\le|O_L|+\eps^{1/3} k$. We let $S_L:=S_{\ell}$ (for $S_{\ell}$ from the algorithm), and also $S_L' := S_{2\ell}\setminus S_L$ and $S_L'' := S_{3\ell}\setminus (S_L\cup S_L')$, which are well-defined because $\ell\le k$. Furthermore, we introduce four variables $a:=f(O_H|S_L)$, $b:=f(S_L)$, $c:=f(O_H)$ and $d:=f(S_L'|O_H)$, which later will be used in our factor-revealing programs. We will use $\approx$, $\lesssim$ and $\gtrsim$ to represent (in)equalities up to an additive error of $\textrm{poly}(\eps+1/k)$, i.e., $x\approx y$ means $y-\textrm{poly}(\eps+1/k)\le x\le y+\textrm{poly}(\eps+1/k)$, and $x\lesssim y$ means $x\le y+\textrm{poly}(\eps+1/k)$, and $x\gtrsim y$ means $x\ge y-\textrm{poly}(\eps+1/k)$.

\subsubsection{System of constraints for factor-revealing programs}
Our FPT algorithms will use some non-streaming sub-procedures in addition to Algorithm~\ref{alg:symmetric}, and the streaming result about symmetric submodular functions needs the assumption that the function is symmetric. However, the system of constraints in the following Lemma~\ref{lem:factor_revealing_common_constraints} is a common part of our factor-revealing programs in the proofs of both our FPT result and streaming result.
Thus, we believe this is the right place to state and prove this lemma, so that the reader can easily verify that it does not rely on non-streaming part of our FPT algorithm or the symmetric assumption of our streaming result.
\begin{lemma}\label{lem:factor_revealing_common_constraints}
Following the notations and conventions made in this subsection, Algorithm~\ref{alg:symmetric} satisfies either Eq.~\eqref{eq:symmetric_stream_1} and \eqref{eq:symmetric_stream_3}, or Eq.~\eqref{eq:symmetric_stream_1} and \eqref{eq:symmetric_stream_4}.
\end{lemma}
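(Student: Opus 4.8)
The plan is to split the statement into its universal part, \eqref{eq:symmetric_stream_1}, which I expect to be the warm-up guarantee rewritten in the variables $a,b$, and the dichotomous part, \eqref{eq:symmetric_stream_3} versus \eqref{eq:symmetric_stream_4}, which should come out of a case analysis on the second greedy batch $S_L'$. For \eqref{eq:symmetric_stream_1} I would re-run the warm-up computation \eqref{eq:asymmetric_stream_eq_1}--\eqref{eq:asymmetric_stream_eq_3} essentially verbatim, now conditioning on $A_1$ (both bullets of Lemma~\ref{lem:first_eps_does_not_hurt} with $\eps'=\eps^{1/3}$), on $A_2$ (Lemma~\ref{lem:whp_select_all_O_H}), and on $A_3$: under $A_1$ the term $f(S_L\mid O_L\cup O_H)$ is $\gtrsim 0$; under $A_2$ Observation~\ref{obs:O_L_is_inferior} still holds, so chaining submodularity with it and telescoping over the \emph{last} $|O_L|$ of the first $\ell$ windows (the unused prefix $S_{\ell-|O_L|}$ contributes $\ge 0$ by non-negativity) yields $f(O_L\mid S_L)\le f(S_L)=b$; and under $A_3$ we may replace $f(O_L\cup O_H)=f(O')$ by $1$. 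This gives $f(O_H\cup S_L)+f(S_L)=a+2b\gtrsim 1$, which should be \eqref{eq:symmetric_stream_1} up to the $\approx$ convention.

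For the dichotomy I would bring in the windows $\ell+1,\dots,2\ell$ (building $S_L'$) and $2\ell+1,\dots,3\ell$ (building $S_L''$). Since $|O_L|\le\ell$, one can inject $O_L$ into the windows of these batches and, via the greedy-dominance of Observation~\ref{obs:O_L_is_inferior} together with telescoping (as behind \eqref{eq:asymmetric_stream_eq_2}), relate $f(O_L\mid\cdot)$ to the greedy gains such as $f(S_L'\mid S_L)$; in addition, the second bullet of Lemma~\ref{lem:first_eps_does_not_hurt} applied at our $\ell$ says $f(S_L''\mid O\cup S_{2\ell})\gtrsim 0$, i.e.\ $S_L''$ does not hurt $O\cup S_L\cup S_L'$. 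Submodularity then lets me move $O_L$ ``inside'' $S_L\cup S_L'$ at controlled cost and bound the offline optimum $f(O)=1$ by combinations of the brute-force candidates $S_L$ (value $b$), $O_H\cup S_L$ (value $a+b$), $O_H$ (value $c$), and $O_H\cup S_L'$ (value $c+d$; when $|O_H|+\ell>k$ I replace $S_L'$ by a uniformly random $(k-|O_H|)$-subset of it, whose expected value is $\gtrsim c+d$ by Lemma~\ref{lem:subsample}). Splitting on the sign of a single quantity --- I anticipate it is whether the second batch still helps $O_H$, i.e.\ $d=f(S_L'\mid O_H)\ge 0$ or $<0$ --- then produces two different linear inequalities among $a,b,c,d$ and the constant $1$, one of which is \eqref{eq:symmetric_stream_3} and the other \eqref{eq:symmetric_stream_4}; since the split is on the sign of one number, every run of Algorithm~\ref{alg:symmetric} falls into exactly one case, as the lemma claims.

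The part I expect to be hardest is the cardinality/feasibility bookkeeping together with the error accounting. Because $|O_L|$ is rounded up to a multiple $\ell$ of $\eps^{1/3}k$, the sets $S_L'$ and $S_L''$ can each have nearly $k$ elements (up to $(1+\eps^{1/3})k$) while $|O_H|+|O_L|\le k$, so one must check carefully which candidate unions above actually have size $\le k$ and are therefore seen by the final brute-force search; the random-subset substitution repairs the infeasible ones but injects $\poly(\eps+1/k)$ losses that must be tracked alongside those from conditioning on $A_1,A_2,A_3$ and from telescoping bounds when some greedy marginals are negative (here it helps to note first, via a random-$k$-subset argument against optimality of $O$, that $f(S_i)\le 3$ for every prefix $S_i$, so each greedy marginal lies in a bounded range). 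The only other subtle point is arranging the case split so that each branch yields an inequality valid for \emph{every} run in that branch rather than merely in a sub-case --- which, as noted, is automatic once the split is on the sign of a single marginal.
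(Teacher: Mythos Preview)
Your plan for \eqref{eq:symmetric_stream_1} is in the right spirit, but \eqref{eq:symmetric_stream_1} is not the single inequality $a+2b\gtrsim 1$; it is a system that also records $0\lesssim b,c,d$, $a\le c$, $a+b\gtrsim c$, and the WLOG bounds $b,c,d,a+b,c+d\le 0.9$. Most of these are immediate (submodularity gives $a\le c$; events $A_1,A_3$ give $d\gtrsim 0$ and $a+b\gtrsim c$; the $\le 0.9$ bounds are WLOG since $S_L,O_H,O_H\cup S_L,O_H\cup S_L'$ are all candidate solutions). In particular, $d\gtrsim 0$ is \emph{part of} \eqref{eq:symmetric_stream_1}, so your proposed case split on the sign of $d$ would be vacuous.

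The dichotomy is where your plan goes wrong. The paper does \emph{not} use $S_L'$ or $S_L''$ to derive \eqref{eq:symmetric_stream_3}/\eqref{eq:symmetric_stream_4}; those sets enter only later, in the proof of Theorem~\ref{thm:symmetric_beating_half}. The dichotomy comes from a finer look \emph{inside the first batch}. For every $\gamma\in[0,1]$ one splits $S_L$ at position $\gamma|O_L|$: the last $(1-\gamma)|O_L|$ greedy increments are each at least $f(O_L\mid S_L)/|O_L|$ by Observation~\ref{obs:O_L_is_inferior} and submodularity, while the prefix obeys $f(S_{\gamma|O_L|})\gtrsim \gamma f(O_L)/(1+\gamma)$ (from $f(S_{\gamma|O_L|})\ge\gamma f(O_L\mid S_{\gamma|O_L|})$ together with $f(O_L\mid S_{\gamma|O_L|})\gtrsim f(O_L)-f(S_{\gamma|O_L|})$ via event $A_1$). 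Substituting $f(S_L)=b$, $f(O_L)\gtrsim 1-c$, $f(O_L\mid S_L)\gtrsim 1-a-b$ and clearing denominators yields the concave-in-$\gamma$ quadratic
\[
(a+b-1)\gamma^2+(1-c-b)\gamma+(1-a-2b)\;\lesssim\;0\qquad\text{for all }\gamma\in[0,1].
\]
The case split is on the sign of $1-c-b$: if $1-c-b\ge 0$ the maximizer is $\gamma=\frac{1-c-b}{2(1-a-b)}$ and the resulting discriminant condition is the \emph{quadratic} constraint $(1-c-b)^2+4(1-a-b)(1-a-2b)\lesssim 0$, which is \eqref{eq:symmetric_stream_3}; if $1-c-b\le 0$ the maximum is at $\gamma=0$, giving $1-a-2b\lesssim 0$, which is \eqref{eq:symmetric_stream_4}. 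So the split variable is $1-c-b$, not $d$, and one of the two branches is quadratic, not linear. Your analysis of the second and third batches is not wasted --- it is essentially what the paper does later when proving Theorem~\ref{thm:symmetric_beating_half} --- but it does not produce the constraints claimed in this lemma.
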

\begin{proof}
By event $A_1,A_3$ and submodularity, $d\gtrsim 0$ and $a+b\gtrsim c$. (As an example, one can derive $d\gtrsim 0$ by $f(S_L'|O_H)\ge f(S_L'|O)\gtrsim -\eps^{1/3} f(O)$.) By non-negativity, $b,c\ge0$.
Without loss of generality, we assume $b,c,d,a+b,c+d\le 0.9$, otherwise Algorithm~\ref{alg:symmetric} will achieve close-to-$0.9$ approximation because $S_L,S_L',O_H,O_H\cup S_L,O_H\cup S_L'$ are all candidate solutions in the final exhaustive search of the algorithm. By submodularity, $a\le c$. By Eq.~\eqref{eq:asymmetric_stream_eq_1} (the derivation still goes through for the new definition of $S_L$ in this subsection) and event $A_3$,
\begin{equation}\label{eq:symmetric_stream_0.25}
    f(O_L|S_L)\gtrsim 1-a-b,
\end{equation}
and similar to Eq.~\eqref{eq:asymmetric_stream_eq_2}, we derive that
\begin{align}
    f(S_L)&=\sum_{i=1}^{\ell} f(e_i|S_{i-1}) && \text{(By telescoping sum)}\nonumber\\
    &\ge\sum_{i=1}^{\ell} \frac{\sum_{o\in O_L}f(o|S_{i-1})}{|O_L|} && \text{(By Observation~\ref{obs:O_L_is_inferior})}\nonumber\\
    &=\sum_{i=1}^{\ell} \frac{\sum_{o\in O_L}f(o|S_L)}{|O_L|} && \text{(By submodularity)} \nonumber\\
    &\le \sum_{i=1}^{\ell} \frac{f(O_L|S_L)}{|O_L|} && \text{(By submodularity)} \nonumber\\
    &\ge f(O_L|S_L), && \text{(By $\ell\ge|O_L|$)} \label{eq:symmetric_stream_0.5}
\end{align}
which says $f(O_L|S_L)\le b$.
Combining this with Eq.~\eqref{eq:symmetric_stream_0.25}, we have that $1-a-b\lesssim b$. Summarizing all the initial constraints,
\begin{align}
     &0\lesssim b,c,d\le 0.9 \nonumber\\
     &a\le c\nonumber\\
     &a+b\gtrsim c \nonumber\\
     &a+2b\gtrsim 1 \nonumber\\
     &a+b,\,c+d\le 0.9. \label{eq:symmetric_stream_1}
\end{align}

Now we establish additional constraints. By Observation~\ref{obs:O_L_is_inferior} and submodularity, for all $i\le|O_L|$ and all $o\in O_L$, $f(e_i|S_{i-1})\ge f(o|S_{i-1})\ge f(o|S_L)$, and hence, $f(e_i|S_{i-1})\ge\frac{\sum_{o\in O_L}f(o|S_L)}{|O_L|}\ge\frac{f(O_L|S_L)}{|O_L|}$, where the last inequality is again by submodularity. Therefore, for all $0\le \gamma\le 1$,
\begin{equation}\label{eq:symmetric_stream_2}
    f(S_L)\ge f(S_{\gamma|O_L|})+\sum_{i=\gamma|O_L|+1}^{|O_L|}f(e_i|S_{i-1})\ge f(S_{\gamma|O_L|})+(1-\gamma)f(O_L|S_L).
\end{equation}
Moreover, similar to Eq.~\eqref{eq:symmetric_stream_0.5}, we can derive that $f(S_{\gamma|O_L|})\ge \gamma f(O_L|S_{\gamma|O_L|})$. By event $A_1$ and submodularity, $f(O_L|S_{\gamma|O_L|})=f(O_L\cup S_{\gamma|O_L|})-f(S_{\gamma|O_L|})\gtrsim f(O_L)-f(S_{\gamma|O_L|})$. It follows that $f(S_{\gamma|O_L|})\gtrsim \gamma(f(O_L)-f(S_{\gamma|O_L|}))$, which is equivalent to $f(S_{\gamma|O_L|})\gtrsim \gamma f(O_L)/(1+\gamma)$ by rearranging. Combining this with Eq. \eqref{eq:symmetric_stream_2}, we get for all $0\le \gamma\le 1$,
\begin{equation}\label{eq:symmetric_stream_2.5}
    f(S_L)\gtrsim \gamma f(O_L)/(1+\gamma)+(1-\gamma)f(O_L|S_L).
\end{equation}
Recall that $f(S_L)=b$ and $f(O_L|S_L)\gtrsim 1-a-b$ by Eq.~\eqref{eq:symmetric_stream_0.25}, and notice that $f(O_L)\ge f(O')-f(O_H)=(1-c)f(O')\approx 1-c$ by submodularity and event $A_3$. Plugging these inequalities into Eq.~\eqref{eq:symmetric_stream_2.5}, we get that for all $0\le \gamma\le 1$,
\begin{align*}
    b&\gtrsim \gamma (1-c)/(1+\gamma)+(1-\gamma)(1-a-b)
\end{align*}
which after rearranging is equivalent to (note that $a+b-1\neq 0$ because $a+b\le 0.9$ in Eq.~\eqref{eq:symmetric_stream_1})
$$(a+b-1)\gamma^2+\gamma(1-c-b)+1-a-2b\lesssim 0.$$
If $1-c-b\ge 0$, the left hand side is maximized by $\gamma=\frac{1-c-b}{2(1-a-b)}$, and otherwise, it is maximized by $\gamma=0$. The two cases result into two possible groups of constraints
\begin{align}
    &1-c-b\ge 0\nonumber\\
    (1-c-b)^2+&4(1-a-b)(1-a-2b)\lesssim 0, \label{eq:symmetric_stream_3}
\end{align}
where the quadratic function on the left of the second inequality turns out to be convex (one can check its Hessian matrix, but our certificates in the appendix do not rely on convexity), or
\begin{align}
    &1-c-b\le 0\nonumber\\
    &1-a-2b\lesssim 0. \label{eq:symmetric_stream_4}
\end{align}
The final system of constraints is either Eq.~\eqref{eq:symmetric_stream_1} and \eqref{eq:symmetric_stream_3}, or Eq.~\eqref{eq:symmetric_stream_1} and \eqref{eq:symmetric_stream_4}.
\end{proof}

\section{FPT algorithms for non-monotone submodular functions}
In this section, building on Algorithm~\ref{alg:symmetric}, we give FPT algorithms that achieves better-than-$1/2$ approximation for general non-monotone submodular functions. We first present a basic FPT algorithm that achieves $0.512$-approximation to show the main ideas, then we discuss how to improve the basic algorithm to get $0.539$-approximation.
\subsection{The basic FPT algorithm}
Essentially, after running Algorithm~\ref{alg:symmetric}, our basic FPT algorithm (Algorithm~\ref{alg:fpt}) searches for $O_H$ by brute force, and then starting with $O_H$ as the initial solution set, it runs classic greedy algorithm to construct a size-$k$ solution set, and it repeats this step many times without replacement, i.e., each time the elements selected by greedy algorithm are removed from ground set, and finally it returns the best size-$k$ solution set among all the repetitions. The pseudocode is given in Algorithm~\ref{alg:fpt}.

\begin{algorithm}
\caption{{\sc FPT}$(f, E, k, \eps, T)$ \label{alg:fpt}}
\begin{algorithmic}[1]
\State Initialize an empty set $X^*$.
\State Run Algorithm~\ref{alg:symmetric} on the input\footnote{Randomly permute $E$ if it is not in random order.} $(f, E, k, \eps)$ and keep $S_{3k}$ and the final version of $H$ in Algorithm~\ref{alg:symmetric}.
\For{each size-$(\le k)$ subset $O_H^{\textrm{guess}}\subseteq H$}
    \State Initialize an empty set of elements $I$.
    \For{$i=1,2,\ldots,T$}
        \State Run greedy algorithm with $O_H^{\textrm{guess}}$ as the initial solution set\footnote{Specifically, the greedy algorithm starts with solution set $X=O_H^{\textrm{guess}}$ and runs in $k-|X|$ iterations. In each iteration, it selects the element $e$ in $E$ that maximizes $f(e|X)$ and add $e$ to $X$. (\textbf{We assume without loss of generality} that the maximal $f(e|X)$ is always non-negative. Otherwise, we can add dummy elements to $E$.)} to build a size-$k$ solution set $O_H^{\textrm{guess}}\cup X_i$. Add $X_i$ to $I$ and remove $X_i$ from $E$.
    \EndFor
    \State Let $X'=\argmax_{X\subseteq S_{3k}\cup H\cup I,\,|X|\le k} f(X)$ and let $X^*=X'$ if $f(X')>f(X^*)$.\label{algline:brute_force}
    \State Add $I$ back to $E$.
\EndFor
\State \textbf{return} $X^*$
\end{algorithmic}
\end{algorithm}

Algorithm~\ref{alg:fpt} runs in fixed-parameter polynomial time. Indeed, because $|H|=\widetilde{O}(k^2)$ by design of Algorithm~\ref{alg:symmetric}, the outer loop has less than $k\binom{\widetilde{O}(k^2)}{k}=2^{\widetilde{O}(k)}$ iterations, and for the inner loop, we will only need $T$ to be an arbitrarily large constant. Greedy algorithm runs in $O(kn)$ time. Moreover, since $|I|\le Tk$ and $|S_{3k}\cup H|=\widetilde{O}(k^2)$ with high probability, the brute-force step (Line~\ref{algline:brute_force}) in Algorithm~\ref{alg:fpt} takes time $k\binom{\widetilde{O}(k^2)}{k}=2^{\widetilde{O}(k)}$. Furthermore, the runtime of Algorithm~\ref{alg:symmetric} excluding the exhaustive search in its last step is polynomial. Hence, the total runtime is $n\cdot2^{\widetilde{O}(k)}$.

Before proving the approximation ratio for Algorithm~\ref{alg:fpt}, we give an interpretable analysis for the better-than-1/2 approximation. At very high level, the intuition is that if Algorithm~\ref{alg:symmetric} only gets $1/2$ approximation, then it must be the case that $f(O_H)=\frac{f(O_H\cup O_L)}{2}$. Now we consider the candidate solution $O_H\cup X_i$ for $i\in \{1,2,3\}$, and we can argue that if $f(X_i|O_H)=0$, i.e., the candidate solution does not beat $1/2$, then $X_i$ must hurt $O_H\cup O_L$ a lot. Moreover, By submodularity, $X_1\cup X_2\cup X_3$ hurts $O_H\cup O_L$ by at least the sum of how much each $X_i$ ($i\in \{1,2,3\}$) hurts. Together, we show that this would contradict non-negativity of the function. Now we explain this intuition in more details.

\paragraph{Informal interpretable analysis} The starting point is the analysis of Theorem~\ref{thm:non-monotone_half_apx}. We can show that for the instance to be hard, in the sense that Algorithm~\ref{alg:symmetric} is only able to get $1/2$ approximation, then it requires $f(O_H)=\frac{f(O_H\cup O_L)}{2}$ (this will be explained with more details in the interpretable analysis provided before Theorem~\ref{thm:symmetric_beating_half}, but for now, let us take this as given). Because $O_H$ is selected by Algorithm~\ref{alg:symmetric}, in the outer iteration when Algorithm~\ref{alg:fpt} guesses $O_H$ correctly, it runs classic greedy algorithm many times based on $O_H$ without replacement. Consider the set $X_1$ selected in the first run of greedy algorithm.
By standard analysis of greedy algorithm, we can derive that $f(X_1|O_H)\ge f(O_L|O_H\cup X_1)$.
If $f(X_1|O_H)=0$ (otherwise $X_1\cup O_H$ beats $1/2$), then $f(O_L|O_H\cup X_1)\le 0$, which implies $f(O_L\cup O_H\cup X_1)\le f(O_H\cup X_1)=f(O_H)=\frac{f(O_H\cup O_L)}{2}$. Hence $f(X_1|O_L\cup O_H)\le-\frac{f(O_H\cup O_L)}{2}$. WLOG, the first run of greedy did not select most of $O_L$, because otherwise $f(X_1|O_H)$ should be significantly large. Therefore, similarly, we can derive that if $f(X_2|O_H)=0$, where $X_2$ is selected in the second run of greedy, then $f(X_2|O_L\cup O_H)\le-\frac{f(O_H\cup O_L)}{2}$. By submodularity, $f(X_1\cup X_2|O_L\cup O_H)\le f(X_1|O_L\cup O_H)+f(X_2|O_L\cup O_H)\le-f(O_H\cup O_L)$. Notice that this implies $f(X_1\cup X_2\cup O_L\cup O_H)\le 0$. Hence, the third run of greedy algorithm must obtain very large $f(X_3|O_H)$ (and hence $X_3\cup O_H$ beats $1/2$), because otherwise we can repeat above argument and show that $f(X_1\cup X_2\cup X_3\cup O_L\cup O_H)<0$, which violates non-negativity of the function $f$. Furthermore, by running greedy many times, we are able to extract even more value from $O_L$, which is formally formulated by the factor-revealing programs in the proof.

\begin{theorem}\label{thm:fpt}
For sufficiently large constant $T$ and sufficiently small constant $\eps$, Algorithm~\ref{alg:fpt} achieves $0.512$-approximation for non-negative non-monotone submodular maximization with a cardinality constraint.
\end{theorem}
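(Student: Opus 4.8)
The plan is to push the warm-up analysis of Theorem~\ref{thm:non-monotone_half_apx} further by also exploiting the value produced by the greedy phase of Algorithm~\ref{alg:fpt}, and to formalize the resulting case analysis as a factor-revealing program. Throughout I condition on the good events $A_1, A_2, A_3$ of Section~\ref{subsection:common_setup} (jointly of probability $1 - O(\eps^{1/3}+1/k)$) and normalize $f(O) = 1$, so that $a = f(O_H|S_L)$, $b = f(S_L)$, $c = f(O_H)$, $d = f(S_L'|O_H)$ obey Eq.~\eqref{eq:symmetric_stream_1} together with one of Eq.~\eqref{eq:symmetric_stream_3} or Eq.~\eqref{eq:symmetric_stream_4} by Lemma~\ref{lem:factor_revealing_common_constraints}. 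Since $f\ge 0$ and these events hold with probability $1-o(1)$, it suffices to lower bound $f(\text{output})$ conditioned on $A_1\cap A_2\cap A_3$.

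First I would identify the relevant candidate solutions seen by the final brute-force search. As $O_H\subseteq O$ we have $|O_H|\le k$, and as every element of $O'$ that Algorithm~\ref{alg:symmetric} selects is added to $H$, we have $O_H\subseteq H$; hence one iteration of the outer loop of Algorithm~\ref{alg:fpt} uses $O_H^{\textrm{guess}} = O_H$, and in that iteration the greedy subroutine is run $T$ times starting from $O_H$, producing pairwise disjoint $X_1,\dots,X_T$ with $|X_i| = k-|O_H| \ge |O'|-|O_H| = |O_L|$, so $|O_H\cup X_i| = k$. The sets $S_L$, $S_L'$, $O_H$, $O_H\cup S_L$, $O_H\cup X_1,\dots,O_H\cup X_T$ (after, where necessary, trimming to size $\le k$, which by Lemma~\ref{lem:subsample} costs only $\poly(\eps+1/k)$) all lie in $S_{3k}\cup H\cup I$, so $f(\text{output})$ is at least the maximum of $f$ over this list. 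The WLOG assumptions already built into Eq.~\eqref{eq:symmetric_stream_1} dispose of the regimes where one of $b, c, c+d, a+b$ is close to $0.9$.

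The core new ingredient is the analysis of the $T$ greedy runs. The standard greedy guarantee --- using that greedy adds only non-negative-marginal elements and that $|X_i| \ge |O_L|$ --- gives, for each run $i$, $f(X_i|O_H) \gtrsim f(O_L^{(i)}\mid O_H\cup X_i)$, where $O_L^{(i)}\subseteq O_L$ is the part of $O_L$ not removed by the earlier runs; equivalently $f(O_H\cup X_i\cup O_L^{(i)}) \lesssim c + 2 f(X_i|O_H)$. Summing over $i$, using submodularity ($f(\bigcup_i X_i \mid O_H\cup O_L) \le \sum_i f(X_i \mid O_H\cup O_L)$ for the disjoint $X_i$), non-negativity ($f(\bigcup_i X_i \cup O_H\cup O_L)\ge 0$), and $f(O_H\cup O_L) = f(O')\gtrsim 1$, one obtains a lower bound on $\max_i f(O_H\cup X_i)$ in terms of $c$ and of how much of $O_L$ the runs consume. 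I expect the \emph{main obstacle} to be a clean treatment of this ``without replacement'' bookkeeping: once a run ingests elements of $O_L$, the bound $f(O_L^{(i)}\mid O_H\cup X_i)$ becomes weak and the contribution of those elements has to be charged to the run that ingested them; the fix is to argue that a run consuming a substantial portion of $O_L$ must itself have large value, to introduce auxiliary variables for the per-run gains and for the consumed portions, and to encode the resulting trade-off (consuming $O_L$ early buys value now but weakens later runs) as a finite system of inequalities. This degradation is precisely why the final constant sits well below the $\approx 2/3$ that free removals would naively give.

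Finally I would assemble the factor-revealing program: minimize, over all assignments of $a,b,c,d$ and the auxiliary greedy variables satisfying Eq.~\eqref{eq:symmetric_stream_1}, one of Eq.~\eqref{eq:symmetric_stream_3}/\eqref{eq:symmetric_stream_4}, and the greedy inequalities above, the maximum of the candidate-solution values from the second paragraph. A dual certificate, deferred to the appendix, shows this minimum is at least $0.512$ up to the $\poly(\eps+1/k)$ slack from the $\approx$ conventions and an $O(1/T)$ slack from passing from $\sum_i f(X_i|O_H)$ to $\max_i f(X_i|O_H)$. Choosing $T$ a large enough constant and $\eps$ a small enough constant, $f(\text{output}) \ge (0.512-o(1))f(O)$ conditioned on $A_1\cap A_2\cap A_3$, and hence $\E[f(\text{output})] \ge (0.512 - o(1))(1-o(1))f(O)$, which gives the claimed $0.512$-approximation.
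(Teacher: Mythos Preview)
Your overall architecture matches the paper's: condition on $A_1,A_2,A_3$, use the constraint system of Lemma~\ref{lem:factor_revealing_common_constraints}, analyze the greedy runs at the correct guess $O_H^{\textrm{guess}}=O_H$, and feed the resulting inequalities into a factor-revealing program. The gap is in how you handle the ``without replacement'' bookkeeping, and it is not merely a matter of vagueness.

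You propose to fix the bookkeeping by arguing that ``a run consuming a substantial portion of $O_L$ must itself have large value''. This is false for non-monotone functions: greedy may pick an element of $O_L$ early and then, at later non-negative-marginal steps, add elements that cancel it. Concretely, with $O_H=\emptyset$, $k=3$, and $f(\emptyset)=0$, $f(\{x\})=f(\{o_1\})=f(\{o_2\})=1$, $f(\{o_1,o_2\})=2$, $f(\{x,o_1\})=f(\{x,o_2\})=f(\{x,o_1,o_2\})=1$ (non-negative and submodular), greedy with adversarial tie-breaking picks $x,o_1,o_2$ and returns $f(X_1)=1$, yet the consumed portion $P_1=\{o_1,o_2\}$ has $f(P_1)=2$. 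So $f(X_i|O_H)\ge f(P_i|O_H)$ does not hold, and your per-run analysis cannot charge the consumed value of $O_L$ to the candidate $O_H\cup X_i$. Since your candidate list is only $S_L,\,S_L',\,O_H,\,O_H\cup S_L,\,O_H\cup X_i$, the adversary can set $a=0,\,b=c=1/2,\,d=0$, route all of $O_L$ into the consumed portion so that the greedy inequality becomes vacuous (forcing only $\alpha\ge 0$), and your program collapses to $\max\{b,\,a+b,\,c\}=1/2$.

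What the paper does instead is to abandon the per-run sets $O_L^{(i)}$ in favor of a single partition $O_L=O_L^{(0)}\cup O_L^{(1)}$, where $O_L^{(0)}$ is the part \emph{never} selected by any of the $T$ runs. Because $O_L^{(0)}$ is available to every run, the greedy inequality becomes uniform, $f(X_i|O_H)\ge f(O_L^{(0)}|O_H\cup X_i)$, and summing over $i$ with non-negativity yields the clean constraint $2\alpha\gtrsim\beta_0$ where $\alpha=\frac1T\sum_i f(X_i|O_H)$ and $\beta_0=f(O_L^{(0)}|O_H)$. The other half $O_L^{(1)}$ lies in $I$ and hence $O_H\cup O_L^{(1)}$ is itself a candidate in the brute-force search, contributing $\beta_1+c$ to the objective. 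It is precisely this extra candidate --- not any claim about the value of the runs that consumed $O_L^{(1)}$ --- that lifts the program above $1/2$; without it, as shown above, the bound sticks at $1/2$.
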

\begin{proof}
In this proof, we \textbf{always consider} the iteration of the outer loop when $O_H^{\textrm{guess}}=O_H$, and we want to show the $X'$ in that iteration achieves $(>0.512)$-approximation.
We follow the setup given in Subsection~\ref{subsection:common_setup}.
There we established a system of constraints, which is either Eq.~\eqref{eq:symmetric_stream_1} and \eqref{eq:symmetric_stream_3}, or Eq.~\eqref{eq:symmetric_stream_1} and \eqref{eq:symmetric_stream_4}, for Algorithm~\ref{alg:symmetric}, which is a subroutine of Algorithm~\ref{alg:fpt}. This system of constraints will be a part of the final constraints of the factor-revealing programs for this proof.

\subsubsection*{System of constraints for a factor-revealing program}
Now we add some new constraints to the constraint system for the factor-revealing program in this proof. First, we let $\alpha_i:=f(X_i|O_H)$, where $X_i$ is defined in Algorithm~\ref{alg:fpt}, and let $\alpha:=\sum_{i\in[T]}\alpha_i/T$. By submodularity, $f(X_i|O_H)\le f(X_i)\le f(O)=1$, and as mentioned in the footnote in Algorithm~\ref{alg:fpt}, $f(X_i|O_H)\ge 0$. Hence,
\begin{equation}
\forall\,i\in[T],\,0\le \alpha_i\le 1,
\end{equation}
and it follows that
\begin{equation}\label{eq:fpt_new_constraint_1}
    0\le \alpha\le 1.
\end{equation}

Next, we partition $O_L$ into $O_L^{(0)}$ and $O_L^{(1)}$ such that for each $o\in O_L$, $o\in O_L^{(1)}$ iff $o\in X_i$ for some $i\in T$, namely, $O_L^{(1)}$ contains the part of $O_L$ that is selected during that outer iteration of the algorithm, and $O_L^{(0)}$ contains the part that is not selected. Let $\beta_0:=f(O_L^{(0)}|O_H)$ and $\beta_1:=f(O_L^{(1)}|O_H)$. Obviously, $0\le f(O_L^{(0)}|O_H),f(O_L^{(1)}|O_H)\le f(O)=1$ (for example, if $f(O_L^{(0)}|O_H)<0$, then by submodularity, $f(O_L^{(0)}|O \setminus O_L^{(0)})<0$, but then $O \setminus O_L^{(0)}$ is strictly better than the optimal solution $O$, and if $f(O_L^{(0)}|O_H)>f(O)$, then by submodularity, $O_L^{(0)}$ is strictly better than $O$)
, and by submodularity, $f(O_L^{(0)}|O_H)+f(O_L^{(1)}|O_H)+f(O_H)\ge f(O)$. Thus, we have
\begin{equation}\label{eq:fpt_new_constraint_2}
\beta_0+\beta_1+c\ge 1 \textrm{ and } 0\le\beta_0,\beta_1\le 1.
\end{equation}
 Suppose $X_i=\{x_1,\dots,x_{|O_L|}\}$ and $O_L^{(0)}=\{o_1,\dots,o_{|O_L^{(0)}|}\}$. Let $X_i^{(j)}:=\{x_1,\dots,x_j\}$. We derive that
\begin{align*}
    f(X_i|O_H)&=\sum_{1\le j\le |O_L|} f(x_j | O_H\cup X_i^{(j-1)}) &&\text{(Telescoping sum)}\\
    &\ge \sum_{1\le j\le |O_L^{(0)}|} f(x_j | O_H\cup X_i^{(j-1)}) &&\text{($f(x_j | O_H\cup X_i^{(j-1)})\ge 0$ for all $j$)}\\
    &\ge \sum_{1\le j\le |O_L^{(0)}|} f(o_j | O_H\cup X_i^{(j-1)}) &&\text{(By greedy selection)}\\
    &\ge \sum_{1\le j\le |O_L^{(0)}|} f(o_j | O_H\cup X_i) &&\text{(By submodularity)}\\
    &\ge f(O_L^{(0)} | O_H\cup X_i) &&\text{(By submodularity)}.
\end{align*}
Expanding both sides, we have that
\begin{align*}
    f(O_H\cup X_i)-f(O_H)&\ge f(O_L^{(0)} \cup O_H\cup X_i)-f(O_H\cup X_i)\\
    f(O_L^{(0)} \cup O_H\cup X_i)&\le 2f(O_H\cup X_i)-f(O_H) &&\text{(Rearranging)}\\
    f(X_i|O_L^{(0)} \cup O_H)+f(O_L^{(0)} \cup O_H)&\le 2f(X_i|O_H)+f(O_H) &&\text{(Expanding both sides)},
\end{align*}
and by rearranging,
\begin{align}\label{eq:fpt_eq_1}
    f(X_i|O_L^{(0)} \cup O_H)&\le 2f(X_i|O_H)-f(O_L^{(0)}|O_H)\nonumber\\
    &= 2\alpha_i-\beta_0.
\end{align}
Furthermore, notice that
\begin{align}\label{eq:fpt_eq_2}
    \sum_{i\in[T]} f(X_i|O_L^{(0)} \cup O_H)&\ge f(\bigcup_{i\in[T]} X_i|O_L^{(0)} \cup O_H) &&\text{(By submodularity)}\nonumber\\
    &\ge -f(O_L^{(0)} \cup O_H) &&\text{(By non-negativity)}\nonumber\\
    &=-f(O_L^{(0)}| O_H)-f(O_H)=-\beta_0-c.
\end{align}
Combining Eq.~\eqref{eq:fpt_eq_1} and Eq.~\eqref{eq:fpt_eq_2}, we get $2\sum_{i\in[T]}\alpha_i/T-\beta_0\ge(-\beta_0-c)/T$. Since $\beta_0,c\in[0,1]$ and recall $\alpha:=\sum_{i\in[T]}\alpha_i/T$, for sufficiently large constant $T$,
\begin{equation}\label{eq:fpt_new_constraint_3}
    2\alpha\gtrsim\beta_0.
\end{equation}
Our final system of constraints is either Eq.~\eqref{eq:symmetric_stream_1}~\eqref{eq:symmetric_stream_3}~\eqref{eq:fpt_new_constraint_1}~\eqref{eq:fpt_new_constraint_2}~\eqref{eq:fpt_new_constraint_3} or Eq.~\eqref{eq:symmetric_stream_1}~\eqref{eq:symmetric_stream_4}~\eqref{eq:fpt_new_constraint_1}~\eqref{eq:fpt_new_constraint_2}~\eqref{eq:fpt_new_constraint_3}.

\subsubsection*{Objective for factor-revealing program}
We consider the following candidate solutions in the final enumeration of the algorithm --- $S_L$, $S_L\cup O_H$, $O_L^{(1)}\cup O_H$, $X_i\cup O_H$ for all $i\in[T]$. Since $f(S_L)=b$, $f(S_L\cup O_H)=a+b$, $f(O_L^{(1)}\cup O_H)=\beta_1+c$, and
$$\frac{1}{T}\cdot\sum_{i\in [T]} f(X_i\cup O_H)=\left(\frac{1}{T}\cdot\sum_{i\in [T]} f(X_i|O_H)\right) + f(O_H)=\alpha+c,$$
the algorithm achieves at least
\begin{equation*}
    r_{\textrm{FPT}}(a,b,c,\alpha,\beta_0,\beta_1):=\max\{b,\,a+b,\,\beta_1+c,\,\alpha+c\}.
\end{equation*}
We can find the lower bound of approximation ratio by numerically solving the following two convex programs,
$$\min_{a,b,c,\alpha,\beta_0,\beta_1} r_{\textrm{FPT}}(a,b,c,\alpha,\beta_0,\beta_1) \textrm{ s.t. Eq.~\eqref{eq:symmetric_stream_1}~\eqref{eq:symmetric_stream_3}~\eqref{eq:fpt_new_constraint_1}~\eqref{eq:fpt_new_constraint_2}~\eqref{eq:fpt_new_constraint_3}},$$
$$\min_{a,b,c,\alpha,\beta_0,\beta_1} r_{\textrm{FPT}}(a,b,c,\alpha,\beta_0,\beta_1) \textrm{ s.t. Eq.~\eqref{eq:symmetric_stream_1}~\eqref{eq:symmetric_stream_4}~\eqref{eq:fpt_new_constraint_1}~\eqref{eq:fpt_new_constraint_2}~\eqref{eq:fpt_new_constraint_3}},$$
and the minimum of the two results turns out to be larger than $0.512$.
\end{proof}

\subsection{Improved FPT algorithm}
Algorithm~\ref{alg:fpt} can be improved to achieve better approximation ratio. In this subsection, we present the improved algorithm, the pseudocode of which is given in Algorithm~\ref{alg:fpt_plus}. The observation for the improvement is that in the proof of Theorem~\ref{thm:fpt}, we get the value of $O_L^{(1)}$ and $O_L^{(0)}$ separately (specifically, we get the value of $O_L^{(1)}$ because of the candidate solution $O_L^{(1)}\cup O_H$, and we get some fraction of the value of $O_L^{(0)}$ indirectly through the candidate solutions $X_i\cup O_H$), and therefore, the worst case is at some balance point where both $O_L^{(1)}$ and $O_L^{(0)}$ have certain non-negligible value. In Algorithm~\ref{alg:fpt_plus}, we essentially extend Algorithm~\ref{alg:fpt} in a recursive fashion such that it does not only guess $O_H$ but also guesses $O_L^{(1)}$ afterwards, and once we have the right guess of $O_H\cup O_L^{(1)}$, we can repeat the inner loop of Algorithm~\ref{alg:fpt} and apply similar analysis (this time $O_L^{(0)}$ plays the similar role to $O_L$ in the previous analysis), and analogous to the previous analysis, we can partition $O_L^{(0)}$ into the part selected by the algorithm and the part that is not selected and then repeat the process again by guessing the selected part of $O_L^{(0)}$. Therefore, if $O_L^{(1)}$ (or its analogy in the later repetition) has a lot of value, we can completely extract its value by repeating the process, and if its value is negligible, we enter the regime where almost all the value of $O_L$ (or its analogy) is from $O_L^{(0)}$ (or its analogy), and this regime allows better approximation ratio.

The runtime of Algorithm~\ref{alg:fpt_plus} is still $n2^{\tilde{O}(k)}$. Initially, $I=S_{3k}\cup H$ has size $\tilde{O}(k^2)$, and each recursive call adds $\le Tk$ elements to $I$. Since the depth of recursion is $T$, and we will only need $T$ to be sufficiently large constant, it follows that $|I|=\tilde{O}(k^2)$ in all the recursive calls. Notice that each call of Algorithm~\ref{alg:fpt_recursive} on input $(f, E, k, I, t, T)$ incurs $\le Tk \binom{|I|}{k}=2^{\tilde{O}(k)}$ many recursive calls. Because the depth of recursion is $T$, there are $2^{\tilde{O}(k)}$ calls of Algorithm~\ref{alg:fpt_recursive} in total. In each call, we run greedy algorithm and exhaustive search (Line~\ref{algline:brute_force_recursive}) at most $2^{\tilde{O}(k)}$ times. Greedy algorithm takes time $O(n k)$, and exhaustive search takes time $2^{\tilde{O}(k)}$. It follows that the total runtime is $n2^{\tilde{O}(k)}$. Now we prove the improved approximation factor.

\begin{algorithm}
\caption{{\sc Recursive}$(f, E, k, I, t, T)$ \label{alg:fpt_recursive}}
\begin{algorithmic}[1]
\State Initialize empty sets $I'$ and $X^*$.
\For{each size-$(\le k)$ subset $O_t^{\textrm{guess}}\subseteq I$}
    \For{$i=1,2,\ldots,T$}
        \State Run greedy algorithm on $E$ with $O_t^{\textrm{guess}}$ as the initial solution set to build a size-$k$ solution set $O_t^{\textrm{guess}}\cup X_i$.
        \State Add $X_i$ to $I'$ and remove $X_i$ from $E$.
    \EndFor
    \State Add $I'$ back to $E$.
    \If{$t\le T$}
        \State Run Algorithm~\ref{alg:fpt_recursive} on input $(f, E, k, I\cup I', t+1, T)$, which returns solution set $X'$.
        \State Let $X^*=X'$ if $f(X')>f(X^*)$.
    \Else
        \State Let $X^*=\argmax_{X\subseteq I,\,|X|\le k} f(X)$.\label{algline:brute_force_recursive}
    \EndIf
\EndFor
\State \textbf{return} $X^*$
\end{algorithmic}
\end{algorithm}

\begin{algorithm}
\caption{{\sc FPT$^{+}$}$(f, E, k, \eps, T)$ \label{alg:fpt_plus}}
\begin{algorithmic}[1]
\State Initialize an empty set of elements $I$.
\State Run Algorithm~\ref{alg:symmetric} on input $(f, E, k, \eps)$ and keep $S_{3k}$ and the final version of $H$ in Algorithm~\ref{alg:symmetric}.
\State Run Algorithm~\ref{alg:fpt_recursive} on input $(f, E, k, S_{3k}\cup H, 1, T)$, which returns $X^*$.
\State \textbf{return} $X^*$
\end{algorithmic}
\end{algorithm}

\begin{theorem}\label{thm:fpt_plus}
For sufficiently large constant $T$ and sufficiently small constant $\eps$, Algorithm~\ref{alg:fpt_plus} achieves $0.539$-approximation for non-negative non-monotone submodular maximization with a cardinality constraint.
\end{theorem}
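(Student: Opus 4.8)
The plan is to follow the template of the proof of Theorem~\ref{thm:fpt}, but to unroll the recursion of Algorithm~\ref{alg:fpt_recursive} level by level and feed all the resulting quantities into one (larger) factor-revealing convex program. Throughout I would fix the ``correct-guess'' branch of the recursion. Following the notation of Subsection~\ref{subsection:common_setup}, set $G_1:=O_H$ and $P_1:=O_L$; at recursion level $t$, restrict attention to the iteration in which $O_t^{\textrm{guess}}=G_t$, let $X_1^{(t)},\dots,X_T^{(t)}$ be the (disjoint) sets returned by the $T$ greedy runs there, let $O_L^{(t,1)}:=P_t\cap\bigcup_i X_i^{(t)}$ be the part of the ``remaining'' optimal set those greedy runs pick up, and put $G_{t+1}:=G_t\cup O_L^{(t,1)}$ and $P_{t+1}:=P_t\setminus O_L^{(t,1)}$. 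First I would verify that this branch is legitimate and does what we want: $O_L^{(t,1)}$ lies in the set that level $t$ appends before recursing, so by induction $G_t$ is available as a guess at level $t$, and every $G_t\cup X_i^{(t)}$ (for $t\le T$) as well as every $G_t$ (for $t\le T+1$) lies in the buffer that the base-case exhaustive search inspects; moreover $|G_t|\le|O'|\le k$ so the guess is feasible, and $G_t\cup P_t=O'$ throughout, so $\{P_t\}$ carves up $O_L$ across the levels.

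Next I would set up the constraints. The variables $a,b,c,d$ of Subsection~\ref{subsection:common_setup} and the constraints of Lemma~\ref{lem:factor_revealing_common_constraints} (Eq.~\eqref{eq:symmetric_stream_1} together with Eq.~\eqref{eq:symmetric_stream_3} or~\eqref{eq:symmetric_stream_4}) still apply verbatim, since they only involve the subroutine Algorithm~\ref{alg:symmetric} and $O_H=G_1$ (so $c=c_1$ in the notation below). For each $t\in[T]$ introduce $c_t:=f(G_t)$, $\beta_1^{(t)}:=f(O_L^{(t,1)}\mid G_t)$, $\beta_0^{(t)}:=f(P_{t+1}\mid G_t)$ and $\alpha^{(t)}:=\tfrac1T\sum_{i\in[T]}f(X_i^{(t)}\mid G_t)$, and rerun the derivation of Theorem~\ref{thm:fpt} under the substitution $(O_H,O_L,O_L^{(0)},O_L^{(1)})\mapsto(G_t,P_t,P_{t+1},O_L^{(t,1)})$; this yields the telescoping identity $c_{t+1}=c_t+\beta_1^{(t)}$ and, exactly as for Eq.~\eqref{eq:fpt_new_constraint_3} and its companions, $0\le\alpha^{(t)},\beta_0^{(t)},\beta_1^{(t)}\le 1$, $\beta_0^{(t)}+\beta_1^{(t)}+c_t\gtrsim 1$, and $2\alpha^{(t)}\ge\frac{T-1}{T}\beta_0^{(t)}-\frac{c_t}{T}$ (so $2\alpha^{(t)}\gtrsim\beta_0^{(t)}$ once $T$ is a large constant). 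The only slightly delicate point is that the greedy-matching step requires $|X_i^{(t)}|=k-|G_t|\ge|P_{t+1}|$, which holds because $|G_t|+|P_t|=|O'|\le k$. Collecting the candidate solutions visible to the final search --- $S_L$, $S_L\cup O_H$, the sets $G_t$ for $t\le T+1$, and the sets $G_t\cup X_i^{(t)}$ for $t\le T$ --- the algorithm achieves at least
$$ r:=\max\Bigl\{\,b,\ a+b,\ \max_{t\in[T+1]}c_t,\ \max_{t\in[T]}\bigl(c_t+\alpha^{(t)}\bigr)\,\Bigr\}, $$
and it remains to lower-bound the minimum of $r$ over this constraint system.

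Finally I would bound $r$ by numerically minimizing it subject to Eq.~\eqref{eq:symmetric_stream_1}, one of Eq.~\eqref{eq:symmetric_stream_3}/\eqref{eq:symmetric_stream_4}, and the level-$t$ constraints above (two convex programs, for $T$ a large constant), certifying that the optimum exceeds $0.539$, with a checkable dual certificate deferred to the appendix as for Theorem~\ref{thm:fpt}. The interpretable reason this works: if $r<\rho$ then each candidate $c_t+\alpha^{(t)}<\rho$ forces $\beta_0^{(t)}\lesssim 2(\rho-c_t)$, hence $\beta_1^{(t)}\gtrsim 1-2\rho+c_t$ and $c_{t+1}\gtrsim 2c_t-(2\rho-1)$; so as soon as some $c_t$ exceeds $2\rho-1$ by a positive constant, its distance from $2\rho-1$ roughly doubles each level and $c_t$ passes $\rho$ within $O(1)$ levels, contradicting $c_t\le r<\rho$ --- hence $c=c_1\lesssim 2\rho-1$, which in particular makes Eq.~\eqref{eq:symmetric_stream_4} infeasible, and plugging $c\lesssim 2\rho-1$ into Eq.~\eqref{eq:symmetric_stream_1} and~\eqref{eq:symmetric_stream_3} then forces $\max\{b,a+b\}\gtrsim\rho$ for every $\rho$ below the larger root $\tfrac{12-2\sqrt2}{17}\approx 0.5395$ of $17\rho^2-24\rho+8=0$. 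I expect the main obstacles to be (i) the bookkeeping of the recursive correct-guess branch --- checking that the greedy-matching inequality survives at every level and that all the claimed candidate sets genuinely sit inside the buffer the base case searches --- and (ii) the numerical optimization itself, whose optimum sits only about $5\cdot10^{-4}$ above $0.539$, so the $\poly(\eps+1/k)$ slacks (which also compound mildly, by a factor $2^{O(T)}$, in the geometric-growth step) must be controlled by taking $T$ a large enough constant and $\eps$ small enough.
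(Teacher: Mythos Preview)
Your proposal is correct and reaches the same bound as the paper, but it organizes the recursion differently. The paper does \emph{not} unroll all $T$ levels into one big program. Instead it uses a pigeonhole step: along the correct-guess branch the quantities $\beta_1^{(t)}$ telescope to at most $f(O_L\mid O_H)\le 1$, so some level $t^\star$ has $\beta_1^{(t^\star)}\le 1/T\approx 0$; at that single level the paper \emph{re-derives} all of the Theorem~\ref{thm:fpt} constraints (including Lemma~\ref{lem:factor_revealing_common_constraints}) with $O_H$ replaced by $G_{t^\star}$ and with a new $S_L$ sized by $|P_{t^\star}|$, and then just appends the one extra constraint $\beta_1\approx 0$ before solving the same small convex program as in Theorem~\ref{thm:fpt}.

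Your route instead keeps Lemma~\ref{lem:factor_revealing_common_constraints} anchored at level~$1$ and eliminates the level-$t$ variables via the doubling recursion $c_{t+1}-(2\rho-1)\gtrsim 2\bigl(c_t-(2\rho-1)\bigr)$, which forces $c=c_1\lesssim 2\rho-1$; plugging this into Eq.~\eqref{eq:symmetric_stream_3} with $b,a+b\le\rho$ is exactly the computation the paper's appendix does (and yields the same threshold $\tfrac{12-2\sqrt2}{17}$ --- note this is the \emph{smaller} root of $17\rho^2-24\rho+8$, not the larger one as you wrote). The trade-off: the paper's pigeonhole keeps the final program tiny but requires the somewhat delicate re-instantiation of Lemma~\ref{lem:factor_revealing_common_constraints} at level $t^\star$ (with a different $S_L$); your approach avoids that re-derivation entirely and is in that sense cleaner, at the cost of carrying $O(T)$ auxiliary variables which your doubling argument then collapses anyway. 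Both routes are sound and land on $0.539$.
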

\begin{proof}
Notice that the first call of Algorithm~\ref{alg:fpt_recursive} in Algorithm~\ref{alg:fpt_plus} essentially does the same thing as Algorithm~\ref{alg:fpt} except that Algorithm~\ref{alg:fpt_recursive} makes recursive calls. Thus, we think of the first call of Algorithm~\ref{alg:fpt_recursive} as Algorithm~\ref{alg:fpt} and follow the notations and the setup in the proof of Theorem~\ref{thm:fpt}. We focus on the iteration of the outer loop in the first call of Algorithm~\ref{alg:fpt_recursive} where $O_t^{\textrm{guess}}=O_H$. Same as the proof of Theorem~\ref{thm:fpt}, we divide $O_L$ into $O_L^{(1)}$, the part that is selected during that outer iteration, and $O_L^{(0)}$, the part that is not selected. In the recursive call issued by the first call, we can think of $O_t^{\textrm{guess}}$ as the guess\footnote{One might notice that we do not need to guess $O_H$ again in the recursive call, because we can simply let the first call pass its guess $O_H$ to the recursive call. However, this optimization does not improve the runtime asymptotically.} of $O_H\cup O_L^{(1)}$, and we focus on the outer iteration of the recursive call where $O_t^{\textrm{guess}}=O_H\cup O_L^{(1)}$. Then, we divide $O_L^{(0)}$ into $O_L^{(0,1)}$, the part selected during the outer iteration of the recursive call, and $O_L^{(0,0)}$, the part that is not selected. In the next recursive call issued by this recursive call, $O_t^{\textrm{guess}}$ is the guess of $O_H\cup O_L^{(1)}\cup O_L^{(0,1)}$. We focus on the outer iteration where the guess is correct, and we can continue the analogous analysis for the later recursive calls.

Consider the path of recursions that make correct guesses. There are $T$ such recursive calls. Notice that in each recursive call with correct guess, we extract a new part of $O_L$, which has certain marginal value to the set guessed in that call. Observe that either there is one of them in which the extracted new part of $O_L$ has marginal value less than $f(O_L|O_H)/T$ or we extract $f(O_L|O_H)$ completely and therefore achieve optimality in the last recursion. Obviously, in the second case, we are done, and hence, we focus on the first case and consider the first recursion $t^*$ in which the extracted new part of $O_L$ has marginal value less than $f(O_L|O_H)/T$.
\subsubsection*{Factor-revealing program}
Let $O_L^{(1,\textrm{prev})}$ be the part of $O_L$ extracted in the previous recursions, let $O_L^{(1,\textrm{now})}$ be the newly extracted part of $O_L$ in the recursion $t^*$, and let $O_L^{(0,\textrm{now})}$ be the part of $O_L$ that has not been extracted. Analogous to the proof of Theorem~\ref{thm:fpt}, we let $\alpha_i:=f(X_i|O_H\cup O_L^{(1,\textrm{prev})})$ where $X_i$ (defined in the pseudocode of Algorithm~\ref{alg:fpt_recursive}) is selected in the recursion $t^*$, and moreover, we let $\alpha:=\sum_{i\in[T]}\alpha_i/T$, and furthermore, we let $\beta_0:=f(O_L^{(0,\textrm{now})}|O_H\cup O_L^{(1,\textrm{prev})})$ and $\beta_1:=f(O_L^{(1,\textrm{now})}|O_H\cup O_L^{(1,\textrm{prev})})$. Following the same analysis as the proof of Theorem~\ref{thm:fpt} (by changing $O_H$ to $O_H\cup O_L^{(1,\textrm{prev})}$, $O_L^{(0)}$ to $O_L^{(0,\textrm{now})}$, and $O_L^{(1)}$ to $O_L^{(1,\textrm{now})}$), we get the system of constraints for the factor revealing program, which is either Eq.~\eqref{eq:symmetric_stream_1}~\eqref{eq:symmetric_stream_3}~\eqref{eq:fpt_new_constraint_1}~\eqref{eq:fpt_new_constraint_2}~\eqref{eq:fpt_new_constraint_3} or Eq.~\eqref{eq:symmetric_stream_1}~\eqref{eq:symmetric_stream_4}~\eqref{eq:fpt_new_constraint_1}~\eqref{eq:fpt_new_constraint_2}~\eqref{eq:fpt_new_constraint_3}, and we get the same objective of the factor revealing program $r_{\textrm{FPT}}$.

Finally, we add one more constraint (this is the point of having a new algorithm). That is
\begin{equation}\label{eq:fpt_plus_constraint}
    \beta_1\approx 0,
\end{equation}
which holds because we are in the case where $f(O_L^{(1,\textrm{now})}|O_H\cup O_L^{(1,\textrm{prev})})\le f(O_L|O_H)/T\le 1/T$, and $T$ can be an arbitrarily large constant.

We can find the lower bound of approximation ratio by numerically solving the following two convex programs,
$$\min_{a,b,c,d,\alpha,\beta_0,\beta_1} r_{\textrm{FPT}}(a,b,c,d,\alpha,\beta_0,\beta_1) \textrm{ s.t. Eq.~\eqref{eq:symmetric_stream_1}~\eqref{eq:symmetric_stream_3}~\eqref{eq:fpt_new_constraint_1}~\eqref{eq:fpt_new_constraint_2}~\eqref{eq:fpt_new_constraint_3}~\eqref{eq:fpt_plus_constraint}},$$
$$\min_{a,b,c,d,\alpha,\beta_0,\beta_1} r_{\textrm{FPT}}(a,b,c,d,\alpha,\beta_0,\beta_1) \textrm{ s.t. Eq.~\eqref{eq:symmetric_stream_1}~\eqref{eq:symmetric_stream_4}~\eqref{eq:fpt_new_constraint_1}~\eqref{eq:fpt_new_constraint_2}~\eqref{eq:fpt_new_constraint_3}~\eqref{eq:fpt_plus_constraint}},$$
and the minimum of the two results turns out to be larger than $0.539$.
\end{proof}

\section{$(1/2+c)$-approximation for random-order streaming symmetric submodular maximization}
In this section, we show that Algorithm~\ref{alg:symmetric} \emph{beats} $1/2$-approximation for symmetric non-monotone submodular functions using $\widetilde{O}(k^2)$ memory. Together with our lower bound result (Theorem~\ref{thm:lower_bound}), this separates the symmetric non-monotone submodular functions from general non-monotone submodular functions in the random-order streaming model. To our best knowledge, this is first such separation.

The following lemma is the key feature of symmetric submodular functions which we will take advantage of, and it basically says for symmetric submodular function, a set can not hurt another set by more than its own value.
\begin{lemma}\label{lem:symmetric_cannot_hurt_much}
For any non-negative symmetric submodular function $f:V\to\RR_{\ge0}$, for any disjoint $X,Y\subseteq V$, $f(X|Y)\ge -f(X)$.
\end{lemma}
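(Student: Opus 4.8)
The plan is to use the symmetry of $f$ to rewrite $f(X|Y)$ in terms of the complement and then invoke non-negativity together with submodularity. First I would recall that since $f$ is symmetric, $f(X\cup Y) = f(V\setminus(X\cup Y))$ and $f(Y) = f(V\setminus Y)$. Writing $Z := V\setminus(X\cup Y)$, note that $V\setminus Y = Z\cup X$ (this uses that $X$ and $Y$ are disjoint, so $X\subseteq V\setminus Y$). Hence
\begin{equation*}
f(X|Y) = f(X\cup Y) - f(Y) = f(Z) - f(Z\cup X) = -\,f(X\mid Z).
\end{equation*}

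So it suffices to show $f(X\mid Z) \le f(X)$. This is immediate from submodularity applied to the chain $\emptyset \subseteq Z$: the marginal value of $X$ only decreases as the base set grows, so $f(X\mid Z) \le f(X\mid\emptyset) = f(X)$. (Formally one adds the elements of $X$ one at a time, or one invokes the standard consequence of submodularity that $f(A\cup B) - f(B) \le f(A) - f(\emptyset) \le f(A)$ using non-negativity $f(\emptyset)\ge 0$.) Combining, $f(X|Y) = -f(X\mid Z) \ge -f(X)$, which is exactly the claim.

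There is essentially no hard step here; the only thing to be careful about is the set-algebra identity $V\setminus Y = (V\setminus(X\cup Y))\cup X$, which relies on disjointness of $X$ and $Y$, and making sure the symmetry relation is applied to both $f(X\cup Y)$ and $f(Y)$ with a consistent complement. Non-negativity is used only through $f(\emptyset)\ge 0$ in the submodularity bound $f(X\mid Z)\le f(X)$; alternatively one can cite Lemma~\ref{lem:subsample} or just the defining inequality of submodularity directly. I would present it as a two-line computation.
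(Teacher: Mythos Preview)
Your proof is correct and uses the same ingredients as the paper's proof (symmetry, submodularity, and non-negativity of $f(\emptyset)$), just applied in the opposite order: the paper first enlarges the base via submodularity, $f(X|Y)\ge f(X|V\setminus X)=f(V)-f(V\setminus X)$, and then applies symmetry to get $f(\emptyset)-f(X)\ge -f(X)$, whereas you first apply symmetry to rewrite $f(X|Y)=-f(X|Z)$ and then shrink the base via submodularity. The two arguments are dual to each other and equally short; the paper's version avoids the set-algebra check that $V\setminus Y=(V\setminus(X\cup Y))\cup X$, but that is a trivial difference.
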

\begin{proof}
By submodularity, $f(X|Y)\ge f(X|V\setminus X)=f(V)-f(V\setminus X)$, and by symmetry and non-negativity, $f(V)-f(V\setminus X)=f(\emptyset)-f(X)\ge-f(X)$.
\end{proof}
Before going to the technical proof of the better-than-1/2 approximation, we give the interpretable analysis for why Algorithm~\ref{alg:symmetric} can beat $1/2$ for symmetric submodular functions. The interpretable analysis is still a little lengthy and technical. At very high level, the idea is if none of $S_{|O_L|}$ and $O_H\cup S_{|O_L|}$ and $O_H\cup (S_{2|O_L|}\setminus S_{|O_L|})$ beats $1/2$, then we can show that (i) $f(O_H)=f(O_L)=\frac{f(O_H\cup O_L)}{2}$, (ii) $f(O_H|S_{2|O_L|})=-f(O_H)$, and (iii) $f(S_{3|O_L|}\setminus S_{2|O_L|}|S_{2|O_L|})\ge (1-1/e)f(O_L)$. The punchline is that using (ii), we can further show that (iv) $f(S_{3|O_L|}\setminus S_{2|O_L|}|O_H)\ge f(S_{3|O_L|}\setminus S_{2|O_L|}|S_{2|O_L|})$ (basically, the argument is if $O_H$ hurts $S_{2|O_L|}$ a lot, then by Lemma~\ref{lem:symmetric_cannot_hurt_much}, which is due to symmetry, one can argue $O_H$ can not hurt $S_{3|O_L|}$ more than how much it hurts $S_{2|O_L|}$). Therefore, by (i) and (iii) and (iv), we know that $O_H\cup(S_{3|O_L|}\setminus S_{2|O_L|})$ beats $1/2$ approximation. Now we explain this idea in more details.

\paragraph{Informal interpretable analysis} The starting point is the analysis for Theorem~\ref{thm:non-monotone_half_apx}. The reader can first review the intuition given in the beginning of the subsection of Theorem~\ref{thm:non-monotone_half_apx}. There we argued that the algorithm achieves half of $f(O_H\cup S_{|O_L|})+f(S_{|O_L|})\ge f(O_H\cup O_L)$, and hence for the instance to be hard (in the sense that the algorithm only gets $1/2$ approximation), it requires $f(O_H\cup S_{|O_L|})=f(S_{|O_L|})=\frac{f(O_H\cup O_L)}{2}$. This implies $f(O_H|S_{|O_L|})=0$, and by submodularity $f(O_H|O_L\cup S_{|O_L|})\le0$, and hence $f(O_H\cup O_L\cup S_{|O_L|})\le f(O_L\cup S_{|O_L|})$. Recall that we argued $S_{|O_L|}$ can not hurt $O_H\cup O_L$ significantly, and thus, $f(O_H\cup O_L)\le f(O_H\cup O_L\cup S_{|O_L|})\le f(O_L\cup S_{|O_L|})$, but since $f(S_{|O_L|})=\frac{f(O_H\cup O_L)}{2}$, we have that $f(O_L|S_{|O_L|})=f(S_{|O_L|})$, which implies\footnote{Intuitively, by the if condition at line~\ref{algline:memory-threshold} of Algorithm~\ref{alg:symmetric}, we can show that the marginal contribution of each iterate of $S_{|O_L|}$ is at least $\frac{f(O_L|S_{|O_L|})}{|O_L|}$, but because $f(O_L|S_{|O_L|})=f(S_{|O_L|})$, each iterate actually makes the same marginal contribution. Notice that the first iterate should make contribution more than any element in $O_L$ by the if condition.} $f(S_{|O_L|})\ge f(O_L)$. Moreover, since $f(S_{|O_L|})=\frac{f(O_H\cup O_L)}{2}$, we have $f(O_L)\le \frac{f(O_H\cup O_L)}{2}$ and hence $f(O_H)\ge \frac{f(O_H\cup O_L)}{2}$, but we also have $f(O_H)\le \frac{f(O_H\cup O_L)}{2}$ because otherwise $O_H\cup S_{|O_L|}$ should have beaten $1/2$-approximation as $S_{|O_L|}$ does not hurt $O_H$ significantly, and therefore 
it holds that $f(O_H)=f(O_L)=\frac{f(O_H\cup O_L)}{2}$.

Now consider the set $S_{2|O_L|}\setminus S_{|O_L|}$. If $f(S_{2|O_L|}\setminus S_{|O_L|}|O_H)=0$ (otherwise $S_{2|O_L|}\setminus S_{|O_L|}\cup O_H$ beats $1/2$), then by submodularity and the fact that $S_{2|O_L|}\setminus S_{|O_L|}$ does not hurt anything significantly (which is yet another application of Lemma~\ref{lem:first_eps_does_not_hurt}), we have $f(S_{2|O_L|}\setminus S_{|O_L|}|O_H\cup S_{|O_L|})=0$ and hence $f(O_H\cup S_{2|O_L|})=f(O_H\cup S_{|O_L|})=\frac{f(O_H\cup O_L)}{2}$. Notice that $f(O_L|O_H\cup S_{2|O_L|})=f(O_L\cup O_H\cup S_{2|O_L|})-f(O_H\cup S_{2|O_L|})\ge f(O_L\cup O_H)-f(O_H\cup S_{2|O_L|})=\frac{f(O_H\cup O_L)}{2}$, where the inequality is again due to the fact that $S_{2|O_L|}$ does not hurt. Therefore, similar to how we argued $f(S_{|O_L|})\ge f(O_L)$, we can show that $f(S_{2|O_L|}\setminus S_{|O_L|}|S_{|O_L|})\ge f(O_L)$. Since $f(S_{2|O_L|})\ge f(S_{2|O_L|}\setminus S_{|O_L|}|S_{|O_L|})+f(S_{|O_L|})\ge 2f(O_L)=2f(O_H)$ and $f(O_H\cup S_{2|O_L|})=\frac{f(O_H\cup O_L)}{2}=f(O_H)$, we have that $f(O_H|S_{2|O_L|})\le -f(O_H)$, and together with Lemma~\ref{lem:symmetric_cannot_hurt_much}, we have that $f(O_H|S_{2|O_L|})=-f(O_H)$.

Here comes the final punchline---If $S_{3|O_L|}/S_{2|O_L|}$ has significant marginal contribution to $S_{2|O_L|}$, then $S_{3|O_L|}/S_{2|O_L|}$ must have at least the same marginal contribution to $O_H$ (and therefore, $O_H\cup S_{3|O_L|}/S_{2|O_L|}$ will beat $1/2$). Specifically, this follows from
\begin{align*}
    &f(S_{3|O_L|}/S_{2|O_L|}|O_H)\\&\ge f(S_{3|O_L|}/S_{2|O_L|}|O_H\cup S_{2|O_L|})&&\text{(By submodularity)}\\
    &=f(O_H|S_{3|O_L|})-f(O_H|S_{2|O_L|})+f(S_{3|O_L|}/S_{2|O_L|}|S_{2|O_L|})\\
    &\ge-f(O_H)-f(O_H|S_{2|O_L|})+f(S_{3|O_L|}/S_{2|O_L|}|S_{2|O_L|})&&\text{(By Lemma~\ref{lem:symmetric_cannot_hurt_much})}\\
    &=f(S_{3|O_L|}/S_{2|O_L|}|S_{2|O_L|})&&\text{(By $f(O_H|S_{2|O_L|})=-f(O_H)$).}
\end{align*}
(One can check the first equality is true by expanding both sides of the equality.) It remains to show $f(S_{3|O_L|}/S_{2|O_L|}|S_{2|O_L|})$ is indeed significantly large. This is essentially due to $f(O_L|O_H\cup S_{2|O_L|})\ge\frac{f(O_H\cup O_L)}{2}$, which we argued earlier, and the if condition at line~\ref{algline:memory-threshold}. In particular, we can show $f(S_{3|O_L|}/S_{2|O_L|}|S_{2|O_L|})$ is at least $1-1/e$ fraction of $f(O_L|S_{2|O_L|})$ by the standard analysis of the classic greedy algorithm for monotone submodular maximization.

Now we formally prove the better-than-$1/2$ approximation of Algorithm~\ref{alg:symmetric} using the factor-revealing programs.

\begin{theorem}\label{thm:symmetric_beating_half}
For sufficiently small $\eps$ and large $k$, Algorithm~\ref{alg:symmetric} achieves strictly better-than-$1/2$ approximation for non-negative non-monotone symmetric submodular functions in the random-order streaming model.
\end{theorem}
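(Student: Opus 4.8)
The plan is to run the same factor-revealing-program strategy as in the proof of Theorem~\ref{thm:fpt}, but with a constraint system that exploits symmetry. I would begin from the common setup of Subsection~\ref{subsection:common_setup}: condition on the good events $A_1,A_2,A_3$, normalize $f(O)=1$, and keep the variables $a=f(O_H|S_L)$, $b=f(S_L)$, $c=f(O_H)$, $d=f(S_L'|O_H)$ (with $S_L=S_\ell$, $S_L'=S_{2\ell}\setminus S_L$, $S_L''=S_{3\ell}\setminus(S_L\cup S_L')$) together with the base constraints of Lemma~\ref{lem:factor_revealing_common_constraints}, i.e.\ Eq.~\eqref{eq:symmetric_stream_1} together with either Eq.~\eqref{eq:symmetric_stream_3} or Eq.~\eqref{eq:symmetric_stream_4}. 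To these I would add a few new variables describing the second and third blocks of windows --- $f(O_H|S_{2\ell})$, $f(O_L|S_{2\ell})$, $f(S_{2\ell})$, and the third-block greedy gain $f(S_L''|S_{2\ell})=f(S_{3\ell})-f(S_{2\ell})$ --- since the whole point of a new theorem is that the symmetric structure lets us relate these to $a,b,c,d$ in ways the FPT analysis could not.

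The three new ingredients are: (1) \textbf{Symmetry, via Lemma~\ref{lem:symmetric_cannot_hurt_much}.} Since $O_H$ is disjoint from $S_{2\ell}$ and from $S_{3\ell}$, we get $f(O_H|S_{2\ell})\ge-c$ and $f(O_H|S_{3\ell})\ge-c$ --- this is the only place symmetry is used. (2) \textbf{Greedy on the three blocks.} By Observation~\ref{obs:O_L_is_inferior} and submodularity every greedy pick obeys $f(e_i|S_{i-1})\ge f(O_L|S_{i-1})/|O_L|\ge f(O_L|S_{2\ell})/|O_L|$ for $i\le 2\ell$ (the last step because $S_{i-1}\subseteq S_{2\ell}$), so summing over the first two blocks and using $\ell\ge|O_L|$ gives $f(S_{2\ell})\gtrsim 2f(O_L|S_{2\ell})$; and the standard $(1-1/e)$ greedy argument over the third block --- where the second bullet of Lemma~\ref{lem:first_eps_does_not_hurt} keeps the benchmark $f(O_L\cup S_i)$ from dropping --- gives $f(S_L''|S_{2\ell})\gtrsim(1-1/e)\,f(O_L|S_{2\ell})$. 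Combining these with $f(O_L|S_{2\ell})\ge f(O_L|O_H\cup S_{2\ell})\gtrsim f(O')-f(O_H\cup S_{2\ell})$ and the submodular upper bounds $f(O_H\cup S_{2\ell})\le a+f(S_{2\ell})$ and $f(O_H\cup S_{2\ell})\le b+c+d$ ties $f(O_L|S_{2\ell})$ and $f(O_H|S_{2\ell})$ back to $a,b,c,d$; in particular, since $f(S_{2\ell})$ must be at least $2f(O_L|S_{2\ell})$ while $f(O_H\cup S_{2\ell})$ stays bounded, $f(O_H|S_{2\ell})$ is forced quite negative. (3) \textbf{The punchline inequality.} Expanding marginals,
\begin{align*}
f(S_L''|O_H)&\ge f(S_L''|O_H\cup S_{2\ell})\\
&=f(O_H|S_{3\ell})-f(O_H|S_{2\ell})+f(S_L''|S_{2\ell})\\
&\ge f(S_L''|S_{2\ell})-c-f(O_H|S_{2\ell}),
\end{align*}
so the candidate $O_H\cup S_L''$ is worth $c+f(S_L''|O_H)\gtrsim f(S_L''|S_{2\ell})-f(O_H|S_{2\ell})$, which is large precisely because $f(S_L''|S_{2\ell})$ is a $(1-1/e)$-fraction of $f(O_L|S_{2\ell})$ and $f(O_H|S_{2\ell})$ is very negative.

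Finally I would enumerate the candidate solutions the algorithm's final brute-force search sees --- $S_L$ (value $b$), $O_H\cup S_L$ (value $a+b$), $O_H$ (value $c$), $O_H\cup S_L'$ (value $c+d$), $S_L''$ (value $\ge f(S_L''|S_{2\ell})$ by submodularity), and $O_H\cup S_L''$ (value $c+f(S_L''|O_H)$), all of size $\le k$ up to the $O(\eps^{1/3}k)$ slack from rounding $|O_L|$ to the multiple $\ell$ of $\eps^{1/3}k$, which is absorbed by a Lemma~\ref{lem:subsample}-style random-truncation argument --- take the objective $r_{\mathrm{sym}}$ to be the maximum of their values, and numerically solve the two resulting convex programs (one with Eq.~\eqref{eq:symmetric_stream_3}, one with Eq.~\eqref{eq:symmetric_stream_4}); the smaller of the two optima turns out to exceed $1/2$ (in fact to be at least $0.5029$), with dual/multiplier certificates deferred to the appendix. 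The step I expect to be the real work is pinning down exactly which convex constraints to include so the program provably clears $1/2$: the greedy telescoping bounds and the symmetry inequality only interact through the auxiliary quantity $f(O_H|S_{2\ell})$, and one must be careful that the clean $(1-1/e)$ bound and the non-dropping benchmark are only guaranteed for the \emph{last} block (so for the first two blocks only the cruder ``each iterate contributes at least $f(O_L|S_{2\ell})/|O_L|$'' bound is available), together with the sign bookkeeping for quantities such as $f(O_L|S_{2\ell})$ that need not be non-negative a priori.
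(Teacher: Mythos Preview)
Your plan is essentially the paper's proof: same common setup and Lemma~\ref{lem:factor_revealing_common_constraints} constraints, the symmetry input via Lemma~\ref{lem:symmetric_cannot_hurt_much} applied to $f(O_H\mid S_{3\ell})$, the $(1-1/e)$ greedy bound on the third block (justified by the second bullet of Lemma~\ref{lem:first_eps_does_not_hurt}), and exactly the same ``punchline'' identity $f(S_L''\mid O_H\cup S_{2\ell})=f(O_H\mid S_{3\ell})-f(O_H\mid S_{2\ell})+f(S_L''\mid S_{2\ell})$. The only stylistic difference is that the paper eliminates the auxiliaries $f(O_L\mid S_{2\ell}),f(S_{2\ell}),f(O_H\mid S_{2\ell})$ by hand and arrives at the closed-form candidate value $f(O_H\cup S_L'')\gtrsim(2-1/e)(1-a-b-d)-a-d$, whereas you intend to leave them as extra program variables.

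One concrete slip to fix, since it is exactly the kind of thing you flag as the ``real work'': the bound you wrote, $f(O_H\cup S_{2\ell})\le b+c+d$, is valid but too loose for the program to clear $1/2$. At the natural tight point $a=0$, $b=c=\tfrac12$, $d=0$ it gives $f(O_H\cup S_{2\ell})\le 1$, so the adversary may set $f(O_L\mid S_{2\ell})\approx 0$ and your punchline term collapses to $0$. The paper instead uses the (equally elementary) bound
\[
f(O_H\cup S_{2\ell})=f(O_H\cup S_L)+f(S_L'\mid O_H\cup S_L)\le (a+b)+d,
\]
which at the same point gives $f(O_H\cup S_{2\ell})\le \tfrac12$, hence $f(O_L\mid S_{2\ell})\gtrsim\tfrac12$, and everything goes through. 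Relatedly, your first-two-blocks bound $f(S_{2\ell})\ge 2f(O_L\mid S_{2\ell})$ is dominated by the second-block-only bound $f(S_L'\mid S_L)\ge f(O_L\mid S_{2\ell})$ that the paper uses (since $f(S_L)=b\ge f(O_L\mid S_L)\ge f(O_L\mid S_{2\ell})$); with the sharper upper bound on $f(O_H\cup S_{2\ell})$ in place either version suffices to recover the paper's $0.5029$.
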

\begin{proof}
The proof is based on factor-revealing convex programs, and the setup of the proof is already given in Subsection~\ref{subsection:common_setup}. Recall that in Lemma~\ref{lem:factor_revealing_common_constraints}, we have established a system of constraints that is either Eq.~\eqref{eq:symmetric_stream_1} and \eqref{eq:symmetric_stream_3}, or Eq.~\eqref{eq:symmetric_stream_1} and \eqref{eq:symmetric_stream_4}. Now we establish the objective for the programs.

\subsubsection*{Objective for factor-revealing program}
Our goal is to lower bound $f(S_L''|O_H)$. To this end, we want to lower bound $f(S_L''|S_L\cup S_L')$ and upper bound $f(O_H|S_L\cup S_L')$. First, we lower bound $f(O_L|S_L\cup S_L')$ as follows,
\begin{align}
    &f(O_L|S_L\cup S_L')\ge f(O_L|S_L\cup S_L'\cup O_H) &&\text{(By submodularity)} \nonumber\\
    &\gtrsim f(O_L\cup O_H)-f(S_L\cup S_L'\cup O_H) &&\text{(By event $A_1$ and submodularity)} \nonumber\\
    &=f(O_L\cup O_H)-f(S_L\cup O_H)-f(S_L'|S_L\cup O_H) \nonumber\\
    &\ge f(O_L\cup O_H)-f(S_L\cup O_H)-f(S_L'|O_H) &&\text{(By submodularity)} \nonumber\\
    &\approx 1-a-b-d. &&\text{(By event $A_3$)}\label{eq:symmetric_stream_5}
\end{align}
Notice that analogous to Eq.~\eqref{eq:symmetric_stream_0.5}, we have that
\begin{align}
    f(S_L'|S_L)&=\sum_{i=1}^{\ell} f(e_{i+\ell}|S_{\ell+i-1}) && \text{(By telescoping sum)}\nonumber\\
    &\ge\sum_{i=1}^{\ell} \frac{\sum_{o\in O_L}f(o|S_{\ell+i-1})}{|O_L|} && \text{(By Observation~\ref{obs:O_L_is_inferior})}\nonumber\\
    &=\sum_{i=1}^{\ell} \frac{\sum_{o\in O_L}f(o|S_L\cup S_L')}{|O_L|} && \text{(By submodularity)} \nonumber\\
    &\le \sum_{i=1}^{\ell} \frac{f(O_L|S_L\cup S_L')}{|O_L|} && \text{(By submodularity)} \nonumber\\
    &\ge f(O_L|S_L\cup S_L'). && \text{(By $\ell\ge|O_L|$)}
    \label{eq:symmetric_stream_5.5}
\end{align}
It follows by Eq.~\eqref{eq:symmetric_stream_5} and Eq.~\eqref{eq:symmetric_stream_5.5} that $f(S_L'|S_L)\gtrsim1-a-b-d$. We apply this inequality to derive,
\begin{align}
    &f(O_H|S_L\cup S_L')=f(O_H\cup S_L\cup S_L')-f(S_L\cup S_L') \nonumber\\
    &\le f(S_L'|O_H)+f(S_L\cup O_H)-f(S_L)-f(S_L'|S_L) &&\text{(By submodularity)} \nonumber\\
    &= f(S_L'|O_H)+f(O_H|S_L)-f(S_L'|S_L) \nonumber\\
    &= a+d-f(S_L'|S_L) &&\text{(By definition of $a,d$)} \nonumber\\
    &\lesssim 2a+b+2d-1. &&\text{(By $f(S_L'|S_L)\gtrsim1-a-b-d$)} \label{eq:symmetric_stream_6}
\end{align}
On the other hand, for any $2\ell+1\le i\le 3\ell$, it follows by event $A_1$ that (specifically, this uses the second bullet point of Lemma~\ref{lem:first_eps_does_not_hurt} by noticing that $S_L\cup S_L'=S_{2\ell}$ and $S_{i-1}\subseteq S_{3\ell}$)
$$f(S_{i-1}\setminus(S_L\cup S_L')|O\cup S_L\cup S_L')\gtrsim 0,$$
and therefore, we have that
\begin{align*}
    &f(e_i|S_{i-1})\ge \frac{\sum_{o\in O_L} f(o|S_{i-1})}{|O_L|} &&\text{(By Observation~\ref{obs:O_L_is_inferior})} \nonumber\\
    &\ge \frac{f(O_L|S_{i-1})}{|O_L|} &&\text{(By submodularity)} \nonumber\\
    &=\frac{f(S_{i-1}\setminus(S_L\cup S_L')|O_L\cup S_L\cup S_L')+f(O_L\cup S_L\cup S_L')-f(S_{i-1})}{|O_L|} \nonumber\\
    &\ge\frac{f(S_{i-1}\setminus(S_L\cup S_L')|O\cup S_L\cup S_L')+f(O_L\cup S_L\cup S_L')-f(S_{i-1})}{|O_L|} \nonumber&&\text{(By submodularity)}\\
    &\gtrsim \frac{f(O_L\cup S_L\cup S_L')-f(S_{i-1})}{|O_L|} \nonumber\\
    &=\frac{f(O_L| S_L\cup S_L')-f(S_{i-1}\setminus(S_L\cup S_L')|S_L\cup S_L')}{|O_L|}.
\end{align*}
Using the inequality above (which can be thought of as the improvement guarantee of one iteration of greedy algorithm when the objective function is $f(X|S_L\cup S_L')$) and standard analysis for the greedy algorithm, we can lower bound $f(S_L''|S_L\cup S_L')$ by $(1-1/e)f(O_L|S_L\cup S_L')$. 

We are ready to lower bound $f(S_L''|O_H)$,
\begin{align*}
    &f(S_L''|O_H)\ge f(S_L''|O_H\cup S_L\cup S_L') &&\text{(By submodularity)} \nonumber\\
    &=f(O_H|S_L''\cup S_L'\cup S_L)-f(O_H|S_L'\cup S_L)+f(S_L''|S_L\cup S_L') \nonumber\\
    &\ge -f(O_H)-f(O_H|S_L'\cup S_L)+f(S_L''|S_L\cup S_L') &&\text{(By Lemma~\ref{lem:symmetric_cannot_hurt_much})} \nonumber\\
    &\gtrsim -f(O_H)-f(O_H|S_L'\cup S_L)+(1-1/e)f(O_L|S_L\cup S_L').
\end{align*}
It follows that
\begin{align}
    &f(S_L''\cup O_H)\ge -f(O_H|S_L'\cup S_L)+(1-1/e)f(O_L|S_L\cup S_L') \nonumber\\
    &\gtrsim (2-1/e)(1-a-b-d)-d-a. &&\text{(By Eq.~\eqref{eq:symmetric_stream_5} and~\eqref{eq:symmetric_stream_6})}
\end{align}
Finally, observe that all of $S_L,S_L\cup O_H, S_L'\cup O_H, S_L''\cup O_H$ have size at most $(1+\eps^{1/3})k$, and by Lemma~\ref{lem:subsample}, they all have $\frac{1}{1+\eps^{1/3}}$-approximate size-$k$ subsets, which are candidate solutions in exhaustive search, and therefore, the value algorithm achieves is at least $\frac{r(a,b,c,d)}{1+\eps^{1/3}}$ (up to an error of $\textrm{poly}(\eps+1/k)$), where
\begin{equation*}
    r(a,b,c,d):=\max\{b,\,a+b,\,d+c,\,(2-1/e)(1-a-b-d)-d-a\}.
\end{equation*}
Hence, $r(a,b,c,d)$ is essentially a lower bound for the approximation ratio, given $a,b,c,d$, and therefore, we can numerically solve the following two convex programs,
$$\min_{a,b,c,d} r(a,b,c,d) \textrm{ s.t. Eq.~\eqref{eq:symmetric_stream_1} and~\eqref{eq:symmetric_stream_3}},$$
$$\min_{a,b,c,d} r(a,b,c,d) \textrm{ s.t. Eq.~\eqref{eq:symmetric_stream_1} and~\eqref{eq:symmetric_stream_4}},$$
and the minimum of two results gives a lower bound for the approximation ratio. The minimum turns out to be larger than $0.5029$.
\end{proof}

On a side note, the constant we get here is by no means tight. (Indeed, we have an improvement, which may also improve the constant for our FPT algorithm, but it requires numerically solving non-convex programs.) What is interesting is the separation between symmetric and general submodular functions in the random-order streaming model. Also, it is tempting to conjecture that Algorithm~\ref{alg:symmetric} achieves optimal $1-1/e$ approximation for monotone submodular functions, given its success in the non-monotone regime. Nonetheless, we have a hard instance that refutes this conjecture. The details would be made available to the interested reader upon request.

\section{Tight 1/2 hardness for random-order streaming non-monotone submodular maximization}
In this section, we present the lower bound result for non-monotone submodular maximization in the random-order streaming model (described in Section~\ref{section:prelim}). The approximation factor in the lower bound result is tight because of the upper bound in Theorem~\ref{thm:non-monotone_half_apx} for example.
\begin{theorem}\label{thm:lower_bound}
Assuming $n=2^{o(k)}$, any $(1/2+\eps)$-approximation algorithm for non-monotone submodular maximization in the random-order streaming model must use $\Omega(n/k^2)$ memory.
\end{theorem}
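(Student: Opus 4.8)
The plan is to construct a distribution over hard instances on which no low-memory random-order streaming algorithm can beat $1/2$. The natural template is the classic ``two hidden planted sets'' construction used for non-monotone submodular maximization (as in~\cite{FNSZ20,AEFNS20}): we hide one distinguished set $A$ of size $k$, a second distinguished set $B$ disjoint from $A$ also of size $\approx k$, and define a symmetric-ish submodular function $f$ whose value on a set $X$ depends essentially only on $|X\cap A|$, $|X\cap B|$, and $|X|$. The function is designed so that the optimum is $f(A)$ (or $f(B)$), but any set that takes a roughly balanced number of elements from both $A$ and $B$ — or that takes too few from either — has value close to $\frac{1}{2}\mathrm{OPT}$. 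All elements outside $A\cup B$ look identical to the algorithm and are worthless; the whole difficulty is that the algorithm cannot tell $A$-elements from $B$-elements from junk without querying, and querying only reveals the partition information very slowly (a query on a buffered subset $S$ reveals $|S\cap A|$ and $|S\cap B|$, i.e.\ a couple of bits, not which individual elements are in $A$).

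The key steps, in order, are: (1) define the instance distribution — pick a uniformly random partition of $[n]$ into $A$, $B$, and the rest, choosing $|A|=|B|=k$ and $n=2^{o(k)}$; define $f$ as a ``curved'' function of the coordinates $(|X\cap A|,|X\cap B|,|X|)$ that is genuinely submodular (one must verify submodularity of this low-dimensional profile, which is a routine but essential check — I'd use a concave-over-modular form plus a penalty term, analogous to the directed-cut-style gadgets in prior work); (2) argue an information-theoretic bottleneck: with a buffer of size $m$, at any point the algorithm holds at most $m$ elements, and each value query it can issue is a query on a subset of those $m$ buffered elements, revealing only $O(\log m)$ bits about the planted partition; over a single random-order pass the algorithm sees each of the $2k$ planted elements, but to "recognize" a planted element it must have it in the buffer simultaneously with enough other planted elements and spend queries — I'd formalize that the transcript of all query answers, together with the arrival order, has low mutual information with the identity of $A$ versus $B$ when $m=o(n/k^2)$; (3) conclude that the algorithm's output $X$ (of size $\le k$) is, conditioned on the low-information transcript, essentially symmetric between $A$ and $B$, so $\mathbb{E}[|X\cap A|]\approx\mathbb{E}[|X\cap B|]$ and both are bounded away from $k$, which by the design of $f$ forces $\mathbb{E}[f(X)]\le(1/2+o(1))\mathrm{OPT}$; (4) note (for Remark~\ref{rmk:stronger_algorithms}) that the argument never used the "physical tokens" restriction on stored information nor the restriction that the output lies in the buffer — it only used that the algorithm's queries are on buffered subsets, so infinite auxiliary memory and arbitrary size-$\le k$ outputs don't help.

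The main obstacle I expect is step (2): pinning down exactly why the buffer size must be $\Omega(n/k^2)$ rather than, say, $\Omega(n/k)$. The quadratic-in-$k$ loss should come from a birthday-type / coupon-collector accounting: in a random-order stream, a buffer of size $m$ contains at any moment an expected $\approx mk/n$ of the $2k$ planted elements, so to ever have $t$ planted elements co-resident in the buffer one needs $m=\Omega(tn/k)$; and to extract the $\Theta(k)$ bits needed to roughly distinguish $A$ from $B$ one needs, crudely, $\Omega(k)$ "useful" co-residency events, but each such event is available only for a $\Theta(1/k)$-fraction of the stream — multiplying gives the extra factor of $k$, hence $m=\Omega(n/k^2)$. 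Making this rigorous requires carefully defining what counts as a "useful query" and setting up a martingale / potential argument tracking how much the algorithm's posterior on the partition can move per step; I would handle it by a round-based hybrid argument over the stream, bounding the posterior shift in each round by the number of distinct planted elements that entered the buffer that round times $O(\log m)$ bits, and then summing. The submodularity verification in step (1) and the final averaging in step (3) are comparatively mechanical once the function profile is chosen correctly.
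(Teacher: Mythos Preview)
Your proposal has a genuine structural gap that prevents it from reaching the claimed $\Omega(n/k^2)$ bound.

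In your two-planted-sets construction there are only $2k$ ``interesting'' elements among $n$. Either (i) singleton queries distinguish planted elements from junk, in which case the algorithm simply tests each arriving element, discards junk, and keeps all of $A\cup B$ using $O(k)$ memory --- it can then brute-force the optimum over these $2k$ elements at the end; or (ii) singleton values are identical across junk and planted elements, in which case you have to explain how a submodular profile in $(|X\cap A|,|X\cap B|,|X|)$ can simultaneously make junk worthless, make $f(A)$ large, make balanced mixtures worth $\tfrac12\mathrm{OPT}$, and keep singletons indistinguishable. You have not specified such a function, and the ``concave-over-modular plus penalty'' sketch does not obviously produce one. More importantly, even if (ii) were achievable, pairs of planted elements would then have to look different from pairs involving junk (else the function is trivial), and the algorithm could still filter with a buffer of size $O(k)$ by testing each new arrival against a single retained planted element. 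In short, with only two hidden sets the search space collapses to $O(k)$ elements and no $\Omega(n/k^2)$ bound is possible.

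The paper's construction avoids this by planting $\ell=\Theta(n/k)$ buckets, \emph{all} of which carry value (via hyperedges to a small side $A_2$), so no filtering is possible. Exactly one randomly chosen bucket $B_g$ has its hyperedges replaced by individual edges, which multiplies its contribution; crucially, a bucket $B_i$ can be distinguished from $B_g$ only by holding $\ge 2$ elements of $B_i$ simultaneously in the buffer. The argument then counts \emph{collisions} (two buffered elements from the same bucket): a size-$m$ buffer touches $\le m$ buckets, so each arrival collides with probability $O(mk/n)$, giving $O(mk)$ total collisions; since each collision hits $B_g$ with probability $O(1/\ell)=O(k/n)$, the algorithm detects $B_g$ with probability $O(mk^2/n)$, forcing $m=\Omega(n/k^2)$. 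This collision-counting replaces your vague mutual-information step and is where the quadratic $k^2$ genuinely arises. Your birthday heuristic (``$t$ co-resident planted elements needs $m=\Omega(tn/k)$, and we need $\Omega(k)$ useful events each lasting a $1/k$-fraction'') does not map onto any concrete potential and would not survive the filtering attack above.
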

\begin{proof}
\subsubsection*{Construction of the hard instance}
The function $f$ we construct here is essentially a cut function on an unweighted bipartite directed hypergraph\footnote{A directed hyperedge in a directed hypergraph is represented by some $(U,v)$, where $U$ is a subset of vertices, and $v\notin U$ is a vertex. For any subset of vertices $S$, a hyperedge $(U,v)$ is cut by $S$ iff $|U\cap S|>0$ and $v\notin S$. It is well-known that such cut function is submodular.} plus a modular function.
The ground set $V:=A_1\cup A_2$ for $f$ is the set of $n$ vertices of the graph, where $A_1$ and $A_2$ denote the two parts respectively. Specifically, $A_2:=\{u_1,\dots,u_{\eps k}\}$, and $A_1$ is partitioned into $\ell:=(n-\eps k)/b$ buckets of vertices $B_1,\dots,B_{\ell}$, each of size $b:=k-\eps k$. Now we describe a random generating procedure that generates the hyperedges in the graph:
\begin{enumerate}
    \item First, for each $i\in[\ell]$, we sample a random subset of vertices $N_i \subset A_2$ of size $|N_i|=\eps^2k$, and for each $u_j\in N_i$, we create a directed hyperedge from $B_i$ to $u_j$.
    \item Then, we slightly modify the graph generated in step 1 as follows: We sample a uniformly random $g\in[\ell]$. For each $u_j\in N_{g}$, we remove the hyperedge from $B_{g}$ to $u_j$, and instead, for each $v\in B_{g}$, we create a directed hyperedge from $\{v\}$ to $u_j$. (That is, for each $u_j\in N_{g}$, we replace the hyperedge from $B_{g}$ to $u_j$ with individual edges from each $v\in B_{g}$ to $u_j$.)
\end{enumerate}
The final submodular function $f:V\to\RR_{\ge 0}$ is the sum of the cut function on the above generated hypergraph plus the modular function $c(S):=(\eps^2 k^2)\cdot \frac{|S\cap A_2|}{|A_2|}$. See Figure~\ref{fig:hard_instance} for an illustration.
\begin{figure}[h]
\centering
    \includegraphics[scale=0.5]{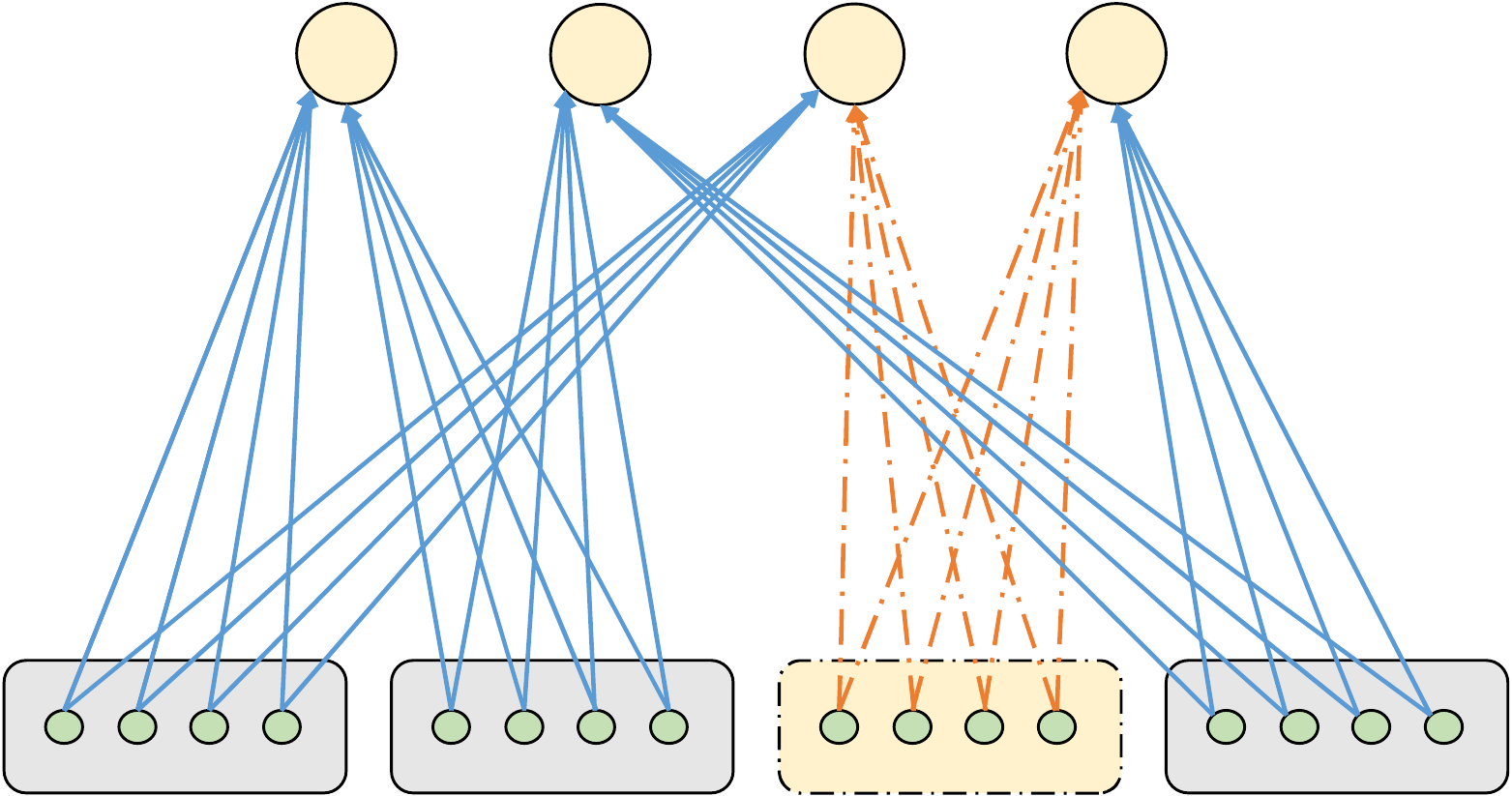}
    \caption{An illustration of our hard instance: At the top we have all $\eps k$ vertices of $A_2$, each of which has value $\eps k$. At the bottom we have all $(n-\eps k)$ vertices of $A_1$ that are separated into $\ell=(n-\eps k)/b$ buckets, each of which has $b=k-\eps k$ vertices. We choose a bucket $B_g$ (with yellow filling and dashed outline) uniformly at random. There are $\eps k$ individual edges (orange and dashed) from each vertex in bucket $B_g$ to $B_g$'s neighborhood $N_g$, and there are $\eps k$ hyperedges (blue and solid) from every other bucket $B_i$ (with gray filling and solid outline) to its neighborhood $N_i$.}
    \label{fig:hard_instance}
\end{figure}
\subsubsection*{Setting up the hardness proof}
Before proving hardness of our hard instances, we first grant the algorithm some extra power (we will prove hardness for these stronger algorithms):
\begin{enumerate}
    \item We let the algorithm keep all the elements in $A_2$ in its memory for free at the beginning of the stream, without memory cost.
    \item We reveal to the algorithm the graph topology generated in step 1 of the generating procedure at the beginning of the stream, without memory cost. To be more precise, at the beginning of the stream, we let the algorithm know the sets $B_i$ and $N_i$ for all $i\in[\ell]$, but at this point, the algorithm does not have the actual elements of any $B_i$ in its memory (and hence cannot query\footnote{\label{foot:random_order_model}Recall that in the random-order streaming model, the algorithm is only allowed to use any subset of elements in its memory as input to the value oracle of the objective submodular function.} the value of any set that intersects $B_i$), and moreover, the algorithm does not know which $g$ we chose to make the modification in step 2 of the generating procedure.
    \item The algorithm can store infinite amount of information (but not elements) during the stream without memory cost. To be more precise, at any time during the stream, besides the elements in $A_2$, the algorithm with memory $m$ is only allowed to store a set of $m$ elements $S$, and it is only allowed to query\footref{foot:random_order_model} the value of any subset of $S\cup A_2$, but it is allowed to store the result of the query forever without memory cost.
\end{enumerate}
Given these extra power, it is not hard to see that at the beginning of the stream, the algorithm already knows the value of any set that has intersection of size $\le 1$ with each $B_i$ (and what is unknown to the algorithm is for each $i\in[\ell]$, whether $i=g$). Therefore, in order to get any new information during the stream, the algorithm has to query the value of a set that has intersection of size $\ge 2$ with some $B_i$, and to be able to make such query, the algorithm has to store $\ge 2$ elements of $B_i$ together in its memory. On the other hand, if the algorithm has $\ge 2$ elements $j_1,j_2$ of $B_i$ together in its memory, it can immediately tell whether $i=g$ (specifically, by querying the value of set $\{j_1,j_2\}$, the algorithm can tell whether there are hyperedges or individual edges from $B_i$ to $N_i$).
This motivates us to introduce a definition that is useful for formally phrasing these observations:
\begin{definition}\label{def:undetected}
At any time during the stream, for each $i\in[\ell]$, we say $B_i$ has been \underline{detected} if $\ge2$ elements of $B_i$ were stored in the algorithm's memory together at some point in the past, and we say $B_i$ is \underline{undetected} if otherwise.
\end{definition}
Our discussion before Definition~\ref{def:undetected} can now be generalized and phrased more formally:
\begin{observation}\label{obs:undetected}
At any time during the stream, let $U\subseteq [\ell]$ be the set such that $i\in U$ iff $B_i$ is undetected, and assume that $g\in U$, then the algorithm knows $j\neq g$ (and knows that the edges from $B_j$ to $N_j$ are hyperedges) for all $j\notin U$, but it has no information about which $i\in U$ is $g$ (unless $|U|=1$), i.e., each $i\in U$ is equally likely to be $g$, conditioned on all the information the algorithm has up to this time and our assumption that $g\in U$.
\end{observation}
\begin{proof}[Proof of Observation~\ref{obs:undetected}]
Since $j\notin U$ and we assumed that $g\in U$, it holds that $j\neq g$. Moreover, as we discussed before Definition~\ref{def:undetected}, once $B_i$ is detected, the algorithm can tell whether $i=g$ by making two queries. For each $j\notin U$, by definition of $U$, $B_j$ has been detected, and thus, the algorithm knows $j\neq g$.

Now, we show that the algorithm has no information about which $i\in U$ is $g$. To see this, we observe that (i) by definition of $U$, the algorithm could never query\footnote{Note that we use the word ``query'' instead of ``know'' because the algorithm knew (from the very beginning) the value of some uninteresting sets that contain $\ge 2$ elements $j_1,j_2$ of $B_i$ for $i\in U$, e.g., the value of $\{j_1,j_2\}\cup A_2$.} the value of any set that has intersection of size $\ge 2$ with $B_i$ for any $i\in U$, and (ii) by our construction, the value of any set, that has intersection of size $\le 1$ with $B_i$ for all $i\in U$, is independent of which $i\in U$ was chosen as $g$ at step 2 of the generating procedure. Therefore, the result of any query, that the algorithm could possibly make, is independent of which $i\in U$ was chosen as $g$. Moreover, since we chose $g\in[\ell]$ uniformly at random in the generating procedure, it follows that each $i\in U$ is equally likely to be $g$, conditioned on the results of all the queries that the algorithm could possibly make and the information we revealed to the algorithm at the beginning of the stream.
\end{proof}

Moreover, by Definition~\ref{def:undetected}, in order to be able to detect $B_i$, the algorithm has to store an element $e'\in B_i$ in the memory when a different element $e\in B_i$ arrives in the stream, which motivates the following definition:
\begin{definition}\label{def:collision}
When an element $e\in V$ in the stream arrives, suppose the algorithm has stored a set of elements $S\subseteq V$ in its memory, then for any $i\in[\ell]$, if $e\in B_i$ and $|S\cap B_i|>0$, we say there is a \underline{collision} in $B_i$. To be more precise, the collision occurs as soon as $e$ arrives (before the algorithm stores $e$ or makes any new query).
\end{definition}
Notice that in Definition~\ref{def:collision}, we made clear that the collision occurs before the algorithm stores the new element or makes any new query, and thus, at the very moment when the collision occurs, the algorithm does not gain any new information (although it can get new information after this moment by storing the new element and making new queries). Therefore, the following observation follows easily from Observation~\ref{obs:undetected}:
\begin{observation}\label{obs:collision}
For any $t\in[n]$, let $U\subseteq [\ell]$ be the set such that $i\in U$ iff $B_i$ is undetected before the $t$-th element of the stream $e\in V$ arrives, and assume that $g\in U$. Suppose that a collision in $B_j$ occurs (for some $j\in[\ell]$) when the $t$-th element of the stream $e$ arrives, then at the moment when this collision occurs, the probability that $j=g$, conditioned on all the information algorithm has and our assumption that $g\in U$, is at most $1/|U|$.
\end{observation}
\begin{proof}[Proof of Observation~\ref{obs:collision}]
If $j\notin U$, clearly the probability that $j=g$ conditioned on $g\in U$ is zero. For the case of $j\in U$, by Observation~\ref{obs:undetected}, each $i\in U$ is equally likely to be $g$, conditioned on all the information algorithm has and our assumption that $g\in U$. Moreover, as we elaborated before Observation~\ref{obs:collision}, at the moment that the collision occurs, the algorithm does not gain any new information, and thus, if $j\in U$, the probability that $j=g$, conditioned on $g\in U$ and all the information the algorithm has, is exactly $1/|U|$.
\end{proof}

\subsubsection*{Outline of the hardness proof}
Before going to the technical proof, we outline our proof strategy:
\begin{enumerate}
    \item First, we use the random-order property of the stream and union bound to show that given any constant $\eta\in(0,1)$, for any algorithm with $o({n}/{k^2})$ memory, w.h.p.~there are at most $o(n/k)$ collisions occurring in total during the stream before the $(1-\eta)n$-th element of the stream arrives.
    \item Then, for any $q<\ell$, we prove by induction (and union bound) and Observation~\ref{obs:collision} that the probability that one of the first $q$ collisions encountered by the algorithm is in $B_g$ is at most $q/(\ell-q)$, which implies that w.h.p.~none of the first $o({n}/{k})$ collisions is in $B_g$.
    
    Note that this together with the first point imply that for any algorithm, w.h.p.~it has never stored $\ge 2$ elements of $B_g$ together in its memory before the $(1-\eta)n$-th element of the stream arrives. However, since by standard concentration inequality, $(\ge1-2\eta)$-fraction of the elements of $B_g$ appear in the first $(1-\eta)$-fraction of the stream w.h.p., it follows that the algorithm missed\footnote{Here we use the fact that the standard streaming model for submodular maximization (described in Section~\ref{section:prelim}) requires the algorithm to output a subset of elements in its memory. In appendix (Section~\ref{section:arbitrary_output}), we sketch how to slightly modify our hard instance to make the hardness result hold against stronger algorithms that are allowed to output any size-$(\le k)$ subset of $V$, which are non-standard but might be of independent interest.} $(1-2\eta)$-fraction of $B_g$ w.h.p.
    \item Finally, we show that in our hard instance, there is always a set that contains $B_g$ and has value $2(1-\eps)\eps^2 k^2$ (completeness), but for any constant $\delta>0$, w.h.p.~every size-$(\le k)$ set that only contains $(\le2\eta)$-fraction of the elements of $B_g$ has value $\le (1+\delta+2\eta)\eps^2k^2$ (soundness), which completes the hardness proof, because $\eta,\delta,\eps$ are constants that can be arbitrarily small.
\end{enumerate}
In the following, we implement the above proof outline.
\subsubsection*{$o({n}/{k^2})$-memory algorithm sees $o(n/k)$ collisions in the first $(1-\eta)$-fraction of the stream}
At any time during the first $(1-\eta)$-fraction of the stream, an algorithm with memory $m=o(\frac{n}{k^2})$ can only store a set of $m$ elements $S$. Let $I:=\{i\mid i\in[\ell]\textrm{ and } |S\cap B_i|>0\}$, i.e., $i\in I$ iff the algorithm stores at least one element of $B_i$, and clearly, we have $|I|\le|S|=m$. By Definition~\ref{def:collision}, a collision in $B_i$ can occur only if there is an element of $B_i$ in the algorithm's memory, and thus, a collision can only occur in some $B_i$ such that $i\in I$. Regardless of what $I$ is, the probability that the next element in the stream causes a collision is at most $\frac{m k}{\eta n}$, because to cause a collision, the next element has to be in some $B_i$ such that $i\in I$, and in total there are only $\le m k$ elements in $\bigcup_{i\in I}B_i$, but there are $\eta n$ elements in the rest of stream arriving in uniformly random order. Moreover, since a new collision can occur only when a new element of the stream arrives, and there are $(1-\eta)n$ elements arriving in the first $(1-\eta)$-fraction of the stream, we have that
\begin{align*}
    &\E[\text{number of collisions in the first $(1-\eta)$-fraction of the stream}]\\
    \le& (1-\eta)n\cdot\frac{m k}{\eta n}=o(n/k) &&\text{(By $m=o(n/k^2)$)}.
\end{align*}
Finally, by Markov inequality, we have that with probability $1-o(1)$, an $o({n}/{k^2})$-memory algorithm only sees $o(n/k)$ collisions in the first $(1-\eta)$-fraction of the stream.
\subsubsection*{$o({n}/{k^2})$-memory algorithm misses $(\ge1-2\eta)$-fraction of $B_g$}
Given any integer $q<\ell$, we prove by induction that for all $i\le q$, the probability that one of the first $i$ collisions encountered by the algorithm is in $B_g$ is at most $i/(\ell-q)$. The base case of $i=0$ is trivial. For the induction step ($i\ge 1$), let $E_1$ denote the event that one of the first $i-1$ collisions encountered by the algorithm is in $B_g$, then the induction hypothesis is $\textrm{Pr}[E_1]\le(i-1)/(\ell-q)$. Conditioned on the event $\bar{E_1}$ (i.e., none of the first $i-1$ collisions encountered by the algorithm is in $B_g$), we know that $B_g$ must be undetected, i.e., $g\in U$ where $U\subseteq [\ell]$ denotes the set such that $j\in U$ iff $B_j$ is undetected before the $i$-th collision. Moreover, since the algorithm can only detect a new $B_j$ when a collision in $B_j$ occurs (and there are $i-1$ collisions before the $i$-th collision), it follows that $|U|\ge \ell-(i-1)$. Thus, by Observation~\ref{obs:collision}, conditioned on event $\bar{E_1}$, the probability of $E_2$, which denotes the event that the $i$-th collision occurs in $B_g$, is at most $1/|U|\le1/(\ell-(i-1))\le1/(\ell-q)$. Hence, we get $\textrm{Pr}[E_1 \textrm{ or } E_2]=\textrm{Pr}[E_1]+\textrm{Pr}[E_2\mid \bar{E_1}]\le (i-1)/(\ell-q)+1/(\ell-q)=i/(\ell-q)$, which completes the induction step, because ``$E_1 \textrm{ or } E_2$'' is exactly the event that one of the first $i$ collisions encountered by the algorithm is in $B_g$.

Therefore, with probability $1-\frac{o(n/k)}{\ell-o(n/k)}=1-o(1)$, none of the first $o({n}/{k})$ collisions is in $B_g$. As we explained in point 2 of the proof outline, this together with point 1 of the proof outline imply that the algorithm missed $(1-2\eta)$-fraction of $B_g$ w.h.p.

\subsubsection*{The inapproximation factor (completeness and soundness)}
\paragraph{Completeness}
Consider the size-$(\le k)$ set $OPT=B_g\cup (A_2\setminus N_g)$. Notice that this set cuts all the $\eps^2 k\cdot b$ individual edges from $B_g$ to $N_g$ and hence gets value $\eps^2 k\cdot b=(1-\eps)\eps^2 k^2$ from the cut function. Moreover, this set has value $c(OPT)=\eps^2k^2\cdot\frac{|OPT\cap A_2|}{|A_2|}=\eps^2k^2\cdot \frac{\eps k-\eps^2 k}{\eps k}=(1-\eps)\eps^2 k^2$ from the modular function $c$. Since the submodular function $f$ of our hard instance is the sum of the cut function and the modular function $c$, we get $f(OPT)\ge 2(1-\eps)\eps^2 k^2$.
\paragraph{Soundness}
We will prove that for any constant $\delta>0$, w.h.p.~the value of every size-$(\le k)$ set that does not contain any element from $B_g$ is $\le (1+\delta)\eps^2 k^2$. Before that, we show this finishes the proof of soundness: We notice that by construction of our hard instance, $(\le2\eta)$-fraction of $B_g$ can only have marginal value $\le\eps^2k\cdot 2\eta \cdot b\le2\eta\cdot \eps^2k^2$ to any set (specifically, by construction of our hard instance, for any $\tau\in[0,1]$, the value of any $\tau$-fraction of $B_g$ is exactly $\eps^2k\cdot \tau\cdot b$, and by submodularity of our instance, the marginal value of any set $S$ (to any other set) is at most the value of $S$ itself). Therefore, for any constant $\delta>0$, w.h.p.~the value of every size-$(\le k)$ set that does not contain $(\le2\eta)$-fraction of $B_g$ is $\le (1+\delta+2\eta)\eps^2k^2$.

It remains to prove that for any constant $\delta>0$, w.h.p.~the value of every size-$(\le k)$ set $S$ that does not contain any element from $B_g$ is $\le (1+\delta)\eps^2 k^2$. Without loss of generality, we assume $S$ does not contain two vertices from the same $B_i$ for any $i\neq g$ (specifically, this is sufficient because by our construction, for any $i\neq g$, we only have hyperedges from entire $B_i$ to each vertex in $N_i$, and hence including one vertex from $B_i$ is enough to get all the possible value of $B_i$). Furthermore, we assume that $S$ contains a set of $\beta k$ vertices from $A_1$ (denoted by $S_1$) and a set of $\gamma k$ vertices from $A_2$ (denoted by $S_2$) such that $\beta+\gamma\le 1, \gamma\le\eps$ and $\gamma,\beta=\Omega(1)$ (assuming $\gamma,\beta=\Omega(1)$ is without loss of generality: If $\beta=o(1)$, then $S_L$ has value at most $o(\eps^2 k^2)$, and if $\gamma=o(1)$, then $S_R$ has value at most $o(\eps^2 k^2)$, and moreover, both $S_L$ and $S_R$ have value at most $\eps^2k^2$ for any $\beta,\gamma\le 1$, and thus, if $\beta=o(1)$ or $\gamma=o(1)$, what we want to prove trivially holds).

The expectation of the number of hyperedges between $S_L$ and $A_2\setminus S_R$ by our construction is $(\beta k)\cdot (\eps^2 k)\cdot(1-\gamma/\eps)=\beta(\eps-\gamma)\eps k^2$. By Chernoff bound, the number of the hyperedges between $S_L$ and $A_2\setminus S_R$ is $(1\pm \delta)\beta(\eps-\gamma)\eps k^2$ with probability $1-2^{-\Omega(k^2)}$ for any constant $\delta>0$. Moreover, there are at most $\binom{\ell}{1}+\dots+\binom{\ell}{k}\le k\binom{\ell}{k}\le k(\frac{e\ell}{k})^k$ possible choices of $S_L$ (because we assumed at most one element per $B_i$ for all $i\neq g$, and in each $B_i$ such that $i\neq g$, every vertex is essentially the same) and at most $2^{\eps k}$ possible choices of $S_R$. By a union bound and our assumption that $n=2^{o(k)}$, with probability $1-k(\frac{e\ell}{k})^k\cdot 2^{\eps k-\Omega(k^2)} = 1-o(1)$, the number of hyperedges, between every possible $S_L$ of size $\beta k$ and every possible $A_2\setminus S_R$ of size $(\eps-\gamma) k$, is $(1\pm \delta)\beta(\eps-\gamma)\eps k^2$. Therefore, with probability $1-o(1)$, the value of any $S_L\cup S_R$ is $(1\pm \delta)\beta(\eps-\gamma)\eps k^2+(\gamma/\eps)\eps^2k^2$ (the number of hyperedges between $S_L$ and $A_2\setminus S_R$ plus the modular function value $c(S_R)$), which is at most $(1+\delta)(\eps-\gamma)\eps k^2+(\gamma/\eps)\eps^2k^2\le (1+\delta)\eps^2k^2$.
\end{proof}

\bibliographystyle{alpha}
\bibliography{cite}

\appendix
\section{Proof of Proposition~\ref{prop:black_box_reduction}}
We restate Proposition~\ref{prop:black_box_reduction} below.
\begin{proposition}
For symmetric submodular function maximization over a set of $n$ elements $E$ under cardinality constarint $k$, any algorithm guaranteeing a $(1-1/e+\eps)$-approximation must:
\begin{description}
\item[Offline] use $n^{\Omega(k)}$ queries; or
\item[Random-order streaming] use $\Omega(n)$-buffer size.
\end{description}
\end{proposition}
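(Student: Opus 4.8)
The plan is to transfer the classical $(1-1/e)$-inapproximability of \emph{monotone} submodular maximization under a cardinality constraint~\cite{NW78Bound} to the symmetric setting by a one-line black-box symmetrization. Fix a monotone submodular $g:2^{E}\to\RR_{\ge 0}$ with $g(\emptyset)=0$ taken from the hard family of~\cite{NW78Bound}, for which every $(1-1/e+\eps)$-approximation of $\max_{|S|\le k}g(S)$ requires $n^{\Omega(k)}$ queries in the offline model and $\Omega(n)$ buffer in the random-order streaming model. Define the symmetrization $f(S):=g(S)+g(E\setminus S)-g(E)$. Then $f$ is submodular, because $S\mapsto g(E\setminus S)$ is submodular whenever $g$ is and we only subtract a constant; $f$ is symmetric, since $f(E\setminus S)=g(E\setminus S)+g(S)-g(E)=f(S)$; and $f$ is non-negative, since submodularity applied to $S$ and $E\setminus S$ gives $g(S)+g(E\setminus S)\ge g(S\cup(E\setminus S))+g(S\cap(E\setminus S))=g(E)+g(\emptyset)=g(E)$ (and moreover $f(\emptyset)=f(E)=0$, so $f$ is genuinely non-monotone).

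Next I would argue the reduction is approximation-preserving on the feasible region. Rewriting, $f(S)=g(S)-\bigl(g(E)-g(E\setminus S)\bigr)=g(S)-g(S|E\setminus S)$. Call $g$ \emph{top-robust} up to size $t$ if $g(E\setminus S)=g(E)$ for every $S$ with $|S|\le t$; for such $g$ we get $f(S)=g(S)$ for all $S$ with $|S|\le t$, so if $t\ge k$ the cardinality-$k$ maximization problems for $f$ and $g$ have literally the same feasible solutions with the same values, and an $(1-1/e+\eps)$-approximation for $f$ is verbatim one for $g$. Top-robustness is a property of the hard instance, not of the reduction: the standard $(1-1/e)$-hard monotone functions are coverage functions built from partition systems, and can be arranged so that every universe atom is covered by $\Omega(n)$ of the $n$ input sets, so that deleting any $o(n)$ ground elements does not change $g$ at all --- this gives $t=\Omega(n)$, which is more than enough. (For a generic $g$ one can instead replace each ground element by $k+1$ identical copies, obtaining top-robustness up to size $k$ at the cost of a $\poly(k)$ blow-up of $n$, which the offline bound $n^{\Omega(k)}$ absorbs.)

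With a top-robust $g$ (say up to size $t=\Omega(n)$) the oracle simulation is immediate in both models. Offline, each query to $f$ costs two queries to $g$ plus the fixed constant $g(E)$, so an $n^{o(k)}$-query $(1-1/e+\eps)$-approximation for $f$ would give one for $g$ --- a contradiction. In the random-order streaming model the only constraint is that the $f$-algorithm may query $f$ only on subsets $T$ of its current buffer, hence on sets of size at most its memory $m$; assuming $m=o(n)\le t$, top-robustness gives $g(E\setminus T)=g(E)$, so $f(T)=g(T)$ can be returned using the single \emph{legal} query $g(T)$ together with the a priori constant $g(E)$. Running the $f$-algorithm on the same uniformly random stream over $E$ (which is simultaneously a uniformly random stream for $g$) and answering its queries this way turns it into a random-order streaming $(1-1/e+\eps)$-approximation for monotone $g$ with buffer $m$ --- contradicting the $\Omega(n)$ lower bound, since its output has size $\le k\le t$ and hence the same value under $f$ and under $g$. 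Both bullets of Proposition~\ref{prop:black_box_reduction} follow.

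The symmetrization itself is trivial; the part that needs care --- and the only place real work is hidden --- is certifying that a concrete $(1-1/e)$-hard monotone instance can be made top-robust in the strong form each setting needs: up to size $k$ for the offline bound (cheap, as above), and up to size $o(n)$ for the streaming bound, so that the simulated answer $g(E\setminus T)=g(E)$ stays \emph{exact} (an approximate answer would be fatal, since an approximation algorithm for $f$ need not be robust to a perturbed oracle) and \emph{legal} (computed without querying $g$ outside the buffer) without inflating the ground set and thereby weakening the target $\Omega(n)$. I expect this bookkeeping over the known hard constructions to be the main obstacle.
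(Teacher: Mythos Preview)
Your proposal is correct and follows essentially the same route as the paper: the identical symmetrization $f(S)=g(S)+g(E\setminus S)-g(E)$, with the reduction hinging on the fact that the~\cite{NW78Bound} hard instances satisfy $g(X)=g(E)$ whenever $|X|\ge 4k$ (what you call top-robustness, up to size $n-4k$). Two small points where you diverge: your non-negativity argument via the submodularity inequality $g(S)+g(E\setminus S)\ge g(E)+g(\emptyset)$ is actually cleaner than the paper's (which also invokes the $|X|\ge 4k$ property for non-negativity); and the paper handles your ``hidden work'' caveat simply by asserting this property as a known fact about the~\cite{NW78Bound} construction (with $n/k$ taken arbitrarily large) and citing~\cite{LRSVZ21} for the random-order streaming lower bound on that same monotone instance --- so your concern about certifying top-robustness up to $o(n)$ without inflating $n$ is already resolved by the literature.
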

\begin{proof}
The proof is based on the hard instances from~\cite{NW78Bound}. They show that there is a non-negative monotone submodular function $f$ that requires $n^{\Omega(k)}$ queries to achieve $(1-1/e+\eps)$-approximation of the optimum $f(E)$. Moreover, their instances have the property (*) that for any subset of elements $X$ such that $|X|\ge 4k$, $f(X)=f(E)$. The recent work~\cite{LRSVZ21} shows that exactly the same $f$ requires $\Omega(n)$ memory to achieve $(1-1/e+\eps)$-approximation in the random-order streaming model, even with unbounded computational and query complexity.

The hard instance $g$ for proving this proposition is simply defined by $$g(X):=f(X)+f(E\setminus X)-f(E).$$ 
$g$ is non-negative because the instances from~\cite{NW78Bound} are such that $\frac{n}{k}$ can be arbitrarily large, and hence at least one of $X$ and $E\setminus X$ has size $\ge 4k$, then it follows by the property (*) that 
$$g(x) = f(X)+f(E\setminus X)-f(E) = \min\{f(E\setminus X), f(X)\}\ge 0.$$

Moreover, by definition of $g$, we have that $g(E\setminus X)=f(E\setminus X)+f(X)-f(E)=g(X)$, and thus, $g$ is indeed symmetric. Futhermore, $g$ is submodular, because (i) if the function $f(X)$ is submodular, then the function $h(X):=f(E\setminus X)$ is also submodular, and (ii) the sum of submodular functions is still submodular.

Finally, we observe that for any subset of elements $X$ with $|X|\le k$, $g(X)=f(X)$. Indeed, since $|X|\le k$, we have $|E\setminus X|\ge 4k$ (for the instance where $n/k$ is sufficiently large), and hence $f(E\setminus X)=f(E)$ by the property (*), and it follows that $g(X)=f(X)$. Therefore, under the cardinality constraint $k$, maximizing $g$ is equivalent to maximizing $f$. Thus, the lower bounds for maximizing $f$ in the respective settings carry over to maximizing the symmetric function $g$.

\end{proof}

\section{Adapting our hard instance for stronger algorithms}\label{section:arbitrary_output}
In this section, we show how to slightly modify our hard instance in the proof of Theorem~\ref{thm:lower_bound} such that the same hardness result holds against stronger algorithms that are allowed to output any size-$(\le k)$ subset of the ground set $V$, which are non-standard but might be of independent interest to some readers, and we will give a proof sketch based on the proof of Theorem~\ref{thm:lower_bound} (we assume the reader has already read the proof of Theorem~\ref{thm:lower_bound}).

\begin{theorem}
Assuming $n=2^{o(k)}$, any $(1/2+\eps)$-approximation algorithm for non-monotone submodular maximization in the random-order streaming model must use $\Omega(n/k^2)$ memory, even if we allow the algorithm to output any size-$(\le k)$ subset of the ground set at the end of the stream.
\end{theorem}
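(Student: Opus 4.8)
The plan is to re-use almost the entire proof of Theorem~\ref{thm:lower_bound} — essentially the same hard instance (with only the slight modification alluded to in the appendix), the collision/detection framework of Definitions~\ref{def:undetected}–\ref{def:collision}, Observations~\ref{obs:undetected} and~\ref{obs:collision}, and the same completeness/soundness calculation — and to patch the single place where the original argument used that the output is a subset of memory. That place is the conclusion ``the algorithm misses a $(1-2\eta)$-fraction of $B_g$'', which was derived from ``$B_g$ is never detected during the first $(1-\eta)$-fraction of the stream'' together with ``output $\subseteq$ memory''. For algorithms that may output an arbitrary size-$(\le k)$ set this is no longer enough: since the bucket contents $B_i$ are revealed for free, an algorithm that learns the \emph{index} $g$ can simply output $B_g\cup(A_2\setminus N_g)$ and win outright. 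Hence the whole burden shifts to showing that $B_g$ stays \emph{undetected for the entire stream} (not just its first $(1-\eta)$-fraction), and then arguing directly about the output set rather than about memory.

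First, I would upgrade the detection bound to the whole stream. Split the stream into a main part (the first $n-L$ elements, with $L=o(n/k)$ a suitable $\poly(k)$) and a tail of $L$ elements. On the tail the number of collisions is at most $L=o(n/k)$ trivially. On the main part, use standard concentration (Chernoff plus a union bound over the $\ell\le n=2^{o(k)}$ buckets) that at time $t$ each $B_i$ has $\Theta\!\big(b(1-t/n)\big)$ un-arrived elements; then for a memory-$m$ algorithm the per-step collision probability is $O(mb/n)$, so the expected number of collisions over the main part is $O(mb)=o(n/k)$ when $m=o(n/k^2)$ — and the slight modification of the instance is exactly what makes this bookkeeping go through uniformly over the stream rather than only over its first $(1-\eta)$-fraction. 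By Markov there are $o(n/k)$ collisions in total w.h.p., and by the induction of Theorem~\ref{thm:lower_bound} (via Observation~\ref{obs:collision}) none of them falls in $B_g$, so w.h.p.\ $B_g$ is still undetected when the stream ends; in particular $g\in U$ with $|U|\ge\ell-o(n/k)=\Omega(n/k)$. Next, by Observation~\ref{obs:undetected}, conditioned on this event the entire transcript — hence the output $S^\ast$, which is a function of the transcript — is independent of which of the $\Omega(n/k)$ undetected buckets is $g$. Since $|S^\ast|\le k$ and each bucket has $b=(1-\eps)k$ elements, $S^\ast$ can contain more than a $2\eta$-fraction of at most $\tfrac1{2\eta}=O(1)$ buckets; as $g$ is conditionally uniform over $\Omega(n/k)$ candidates, $\Pr[S^\ast\text{ contains }{>}\,2\eta\text{-fraction of }B_g]=O(1/\eta)/\Omega(n/k)=o(1)$. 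Finally, keeping the completeness set $OPT=B_g\cup(A_2\setminus N_g)$ of value $2(1-\eps)\eps^2k^2$ and the soundness bound of Theorem~\ref{thm:lower_bound} (any size-$(\le k)$ set containing at most a $2\eta$-fraction of $B_g$ has value $\le(1+\delta+2\eta)\eps^2k^2$ w.h.p.), we conclude that w.h.p.\ $f(S^\ast)\le(1+\delta+2\eta)\eps^2k^2$, so the ratio is at most $\tfrac{1+\delta+2\eta}{2(1-\eps)}<\tfrac12+\eps'$ once $\eps,\delta,\eta$ are small enough.

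The main obstacle is the first step: in the original proof the collision bound was only needed on the first $(1-\eta)$-fraction of the stream, because thereafter the memory (and hence the output) could not hold much of $B_g$; now a single \emph{late} detection of $B_g$ reveals $g$ and lets the algorithm reconstruct all of $B_g$, so detection must be ruled out throughout — including in the very tail of the stream, where the per-step collision probability of a memory-$m$ algorithm is largest and the naive bound loses a $\log n$ factor. Getting a clean $o(n/k)$ bound there is precisely what the slight modification of the instance is for; beyond this, I expect everything to be a faithful re-run of the proof of Theorem~\ref{thm:lower_bound}, with the ``output bets on $O(1/\eta)$ buckets'' observation replacing the original ``output $\subseteq$ memory'' step.
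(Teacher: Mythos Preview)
Your diagnosis of the obstacle is right, and your endgame (the output can place more than a $2\eta$-fraction of its $k$ elements in at most $O(1/\eta)$ buckets, hence misses a conditionally uniform $g$ w.h.p.) is essentially the argument the paper uses. The gap is in the middle step: you try to extend the collision bound to the \emph{entire} stream, and you do not close it.

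Concretely, your concentration step needs that at every time $t\le n-L$ and for every bucket $B_i$ simultaneously, the number of un-arrived elements is $O\!\big(b(1-t/n)\big)$. A Chernoff bound with a union bound over $n\ell=2^{o(k)}$ events requires the mean $b(1-t/n)$ to be at least $\Omega(\log(n\ell))$, hence at the boundary $t=n-L$ you need $bL/n=\Omega(\log n)$, i.e.\ $L=\Omega\!\big(\tfrac{n\log n}{k}\big)$. This is incompatible with your requirement $L=o(n/k)$ for the tail. Without the concentration you fall back on the unconditional per-step bound $mb/(n-t)$, which integrates to $O(mb\log n)$ collisions --- the $\log n$ you yourself flag --- and then the probability that some collision hits $B_g$ is only $O(mb\log n/\ell)=o(\log n)$, which is useless. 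You invoke ``the slight modification of the instance'' as a black box that kills this $\log n$, but no modification of that kind is supplied, and the paper's modification does \emph{not} repair the collision count at all.

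What the paper actually does is conceptually different: it keeps the original first-$(1-\eta)$-fraction collision argument verbatim and instead changes the instance so that detecting a bucket with two elements is no longer decisive. It plants, alongside the truly good bucket $B_g$, a constant number $\alpha-1$ of \emph{fake} good buckets $B_{g'}$ whose outgoing hyperedges have cap $\lfloor b/\alpha^2\rfloor$. Two stored elements now only reveal ``this bucket is in $\mathcal{G}$''; distinguishing the real $B_g$ from a fake $B_{g'}$ requires more than $b/\alpha^2$ elements of that bucket in memory at once. Choosing $\alpha$ so that $1/\alpha^2\ge 3\eta$ ensures that in the last $\eta$-fraction each bucket contributes (w.h.p.) at most $2\eta b<b/\alpha^2$ new elements, so no distinguishing is possible there regardless of memory. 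Combined with the first-$(1-\eta)$ argument (which shows $\alpha-o(\alpha)$ of the $\mathcal{G}$-buckets remain undetected), the algorithm ends the stream with $g$ uniform over $\Omega(\alpha)$ candidates; your own ``output bets on $O(1)$ buckets'' step then finishes. In short: the fix is not a sharper collision bound in the tail, but a second, coarser level of indistinguishability that survives the tail automatically.
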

\begin{proof}[Proof sketch]
A new gadget that we will use to modify the hard instance in the proof of Theorem~\ref{thm:lower_bound} is a directed hyperedge with cap $p\in\mathbb{N}$: A directed hyperedge with cap $p$ in a directed hypergraph is represented by some $(U,v)$, where $U$ is a subset of vertices, and $v\notin U$ is a vertex. For any subset of vertices $S$, a hyperedge $(U,v)$ with cap $p$ is cut by $S$ iff $|U\cap S|>0$ and $v\notin S$, and, moreover, the size of the cut is $\min\{|U\cap S|,p\}$. It is well-known that such cut function is submodular. (For example, the hyperedge we used in the hard instance for Theorem~\ref{thm:lower_bound} is a hyperedge with cap $1$.)

\subsubsection*{Construction of hard instance}
Let $\alpha>0$ be a large integer constant which we will specify later. The construction is same as that in the proof of Theorem~\ref{thm:lower_bound} except that we change step 2 of the random generating procedure to the following:
\begin{enumerate}
    \item[2] Then, we slightly modify the graph generated in step 1 as follows: First, we sample a size-$\alpha$ subset $\mathcal{G}\subseteq[\ell]$, then
    \begin{enumerate}
        \item we sample a uniformly random $g\in\mathcal{G}$. For each $u_j\in N_{g}$, we remove the hyperedge (with cap 1) from $B_{g}$ to $u_j$, and instead, for each $v\in B_{g}$, we create a directed hyperedge (with cap 1) from $\{v\}$ to $u_j$. (That is, for each $u_j\in N_{g}$, we replace the hyperedge (with cap 1) from $B_{g}$ to $u_j$ with individual edges from each $v\in B_{g}$ to $u_j$.)
        \item Moreover, for every $g'\in\mathcal{G}\setminus\{g\}$, we add the following modification: For each $u_j\in N_{g'}$, we remove the hyperedge from $B_{g'}$ to $u_j$, and instead, we create a directed hyperedge with cap $\floor{\frac{b}{\alpha^2}}$ from $B_{g'}$ to $u_j$. (That is, we replace the hyperedge (with cap 1) from $B_{g'}$ to $u_j$ with a hyperedge with cap $\floor{\frac{b}{\alpha^2}}$ from $B_{g'}$ to $u_j$.)
    \end{enumerate}
\end{enumerate}
The final submodular function $f:V\to\RR_{\ge 0}$ is the sum of the cut function on the above generated hypergraph plus the modular function $c(S):=(\eps^2 k^2)\cdot \frac{|S\cap A_2|}{|A_2|}$. See Figure~\ref{fig:hard_instance_2} for an illustration.

\begin{figure}[h]
\centering
    \includegraphics[scale=0.5]{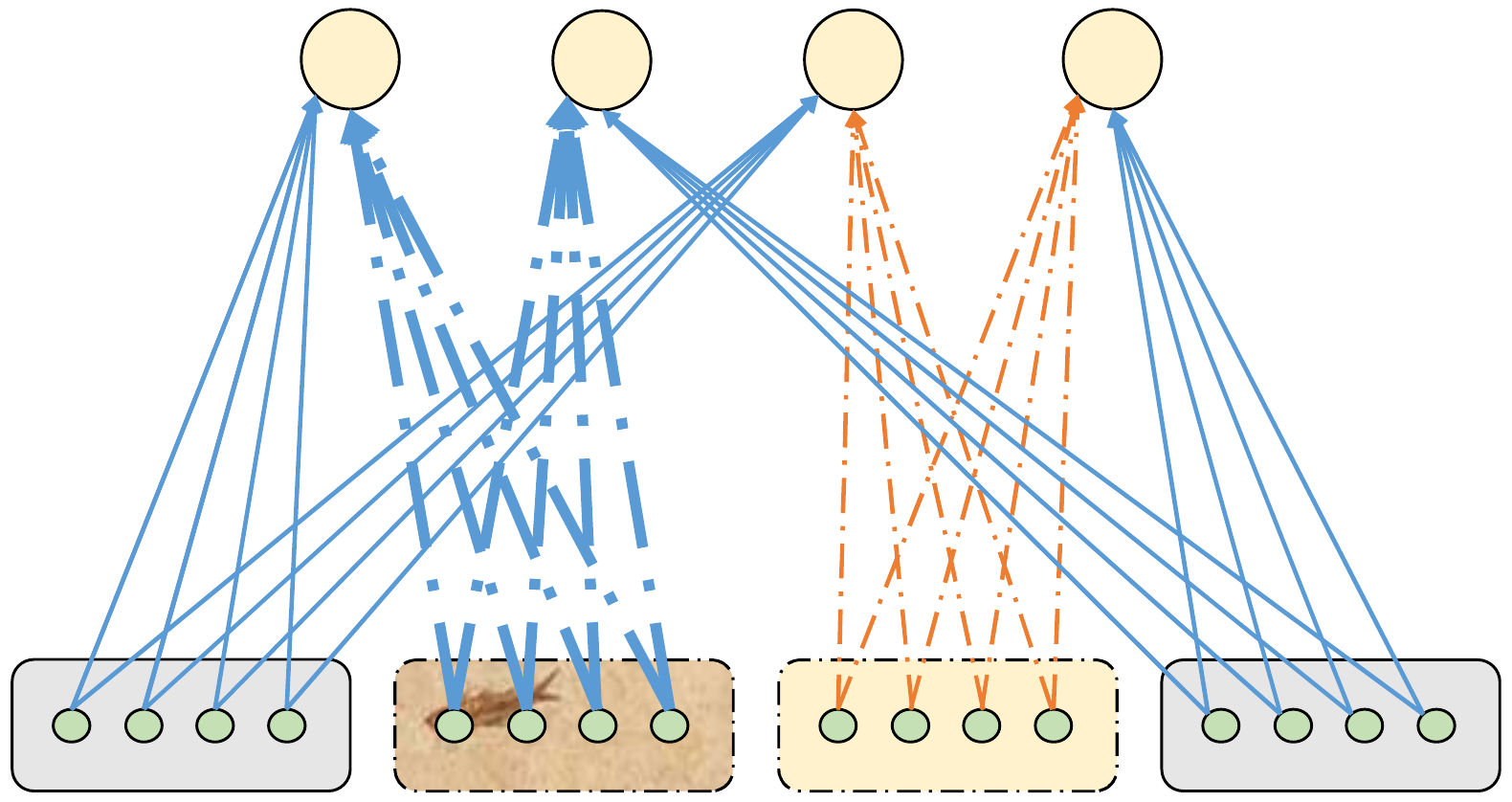}
    \caption{An illustration of our new hard instance: There are $\eps k$ individual edges (orange and dashed) from each vertex in bucket $B_g$ (with yellow filling and dashed outline) to $B_g$'s neighborhood $N_g$, and for each $g'\in\mathcal{G}\setminus\{g\}$, there are $\eps k$ hyperedges (blue and bold dashed) with cap $\floor{\frac{b}{\alpha^2}}$ from bucket $B_{g'}$ (with a mark in the filling and dashed outline) to its neighborhood $N_{g'}$, and there are $\eps k$ hyperedges with cap $1$ (blue and solid) from every other bucket $B_i$ (with gray filling and solid outline) to its neighborhood $N_i$.}
    \label{fig:hard_instance_2}
\end{figure}

Basically, besides the truly good bucket $B_g$ and the bad buckets (which we also had in the hard instance for Theorem~\ref{thm:lower_bound}), we introduce some fake good buckets $B_{g'}$'s to further fool the algorithm in the last $\eta$-fraction of the stream, which we will explain shortly. Now we outline our new proof strategy:
\subsubsection*{Outline of the hardness proof}
\begin{enumerate}
    \item First, we use the random-order property of the stream and union bound to show that given any constant $\eta\in(0,1)$, for any algorithm with $o({n}/{k^2})$ memory, w.h.p.~there are at most $o(n/k)$ collisions occurring in total during the stream before the $(1-\eta)n$-th element of the stream arrives. (This part is exactly same as proof of Theorem~\ref{thm:lower_bound}.)
    \item Then, for any $q<\ell$, we prove by induction that for any $j\in\mathcal{G}$, the probability that one of the first $q$ collisions encountered by the algorithm is in $B_{j}$ is at most $q/(\ell-q)$ (this part is similar to the proof of Theorem~\ref{thm:lower_bound}), which implies that (i) for any $j\in\mathcal{G}$ (and in particular $g$), w.h.p.~none of the first $o({n}/{k})$ collisions is in $B_{j}$. Thus, (i) w.h.p.~none of the first $o({n}/{k})$ collisions is in $B_{g}$, and (ii) in expectation, there are $o(\alpha)$ many $g'\in\mathcal{G}\setminus\{g\}$ for which at least one of the first $o({n}/{k})$ collisions is in $B_{g'}$, and by Markov inequality, w.h.p.~there are only $o(\alpha)$ many $g'\in\mathcal{G}\setminus\{g\}$ for which at least one of the first $o({n}/{k})$ collisions is in $B_{g'}$ (and we let $\mathcal{G}'$ denote the set of the other $\alpha-o(\alpha)$ many $g'\in\mathcal{G}\setminus\{g\}$ for which none of the first $o({n}/{k})$ collisions is in $B_{g'}$). (This part is similar to the proof of Theorem~\ref{thm:lower_bound}.)
    
    Note that this together with the first point imply that for any algorithm, w.h.p.~it has never stored $\ge 2$ elements of $B_g$ or $B_{g'}$ for any $g'\in\mathcal{G}'$ together in its memory before the $(1-\eta)n$-th element of the stream arrives, but to distinguish between a fake good bucket $B_{g'}$ and the truly good bucket $B_g$, the algorithm needs to query a set that contains at least $\floor{\frac{b}{\alpha^2}}+1$ elements of $B_{g'}$ or $B_g$ (because for any other set, by our construction, its value does not depend on which of $B_{g'},B_g$ is the truly good bucket). Hence, w.h.p.~the algorithm has no information about which $j\in \mathcal{G}'\cup\{g\}$ is the index of the truly good bucket before the $(1-\eta)n$-th element of the stream arrives.
    
    The proof outlined so far is similar to what we did in the proof of Theorem~\ref{thm:lower_bound}. Here comes the punchline: We choose $\alpha$ such that $1/\alpha^2\ge 3\eta$. For each $j\in\mathcal{G}$, by standard concentration bound, with probability $1-2^{-\Omega(\eta b)}$, at most $2\eta$-fraction of elements in $B_j$ appears in the last $\eta$-fraction of the random-order stream. By a union bound over $\alpha$ many $j\in\mathcal{G}$, w.h.p.~at most $2\eta$-fraction of $B_j$ appears in the last $\eta$-fraction of the stream for all $j\in\mathcal{G}$, but the algorithm needs to have at least $\floor{\frac{b}{\alpha^2}}+1\ge 3\eta b$ elements to tell whether $B_j$ is the truly good bucket. Combining this with our analysis in the previous paragraph, w.h.p.~the algorithm has no information about which $j\in \mathcal{G}'\cup\{g\}$ is the index of the truly good bucket during the entire stream.
    
    At the end of the stream, by Markov argument, even if we allow the algorithm to output any size-$(\le k)$ subset of $V$, it can only include $\ge k/\sqrt{\alpha}$ elements from $\le \sqrt{\alpha}$ buckets among $B_j$'s for all $j\in G'\cup\{g\}\}$. Moreover, because w.h.p.~the algorithm has no information about which $j\in \mathcal{G}'\cup\{g\}$ is the index of the truly good bucket, it follows that with probability $1-\sqrt{\alpha}/(\alpha-o(\alpha))$, the algorithm only includes $\ge k/\sqrt{\alpha}$ elements from the truly good bucket $B_g$ (note that since $\eta$ is an arbitrarily small constant, and $1/\alpha^2\ge 3\eta$ is the only upper limit for $\alpha$, it follows that $1/\sqrt{\alpha}$ is a constant that can be as small as we want).
    \item Finally, we can show that in our hard instance, there is always a set that contains $B_g$ and has value $2(1-\eps)\eps^2 k^2$ (completeness), but for any constant $\delta>0$, w.h.p.~every size-$(\le k)$ set that only contains $(\le1/\sqrt{\alpha})$-fraction of the elements of $B_g$ has value $\le (1+\delta+1/\sqrt{\alpha})\eps^2k^2$ (soundness), which completes the hardness proof, because $1/\sqrt{\alpha}$ and $\delta,\eps$ are constants that can be arbitrarily small.
    
    Specifically, the completeness proof is exactly same as that for Theorem~\ref{thm:lower_bound}. For the soundness proof, we notice that by our construction, each fake good bucket $B_{g'}$ can contribute marginal value at most $\eps^2k^2/\alpha^2$ to any set (because of submodularity and $f(B_{g'})=\floor{\frac{b}{\alpha^2}}\eps^2k\le\eps^2k^2/\alpha^2$), and hence, all the $\alpha$ fake good buckets can contribute marginal value at most $\eps^2k^2/\alpha$ in total to any set (by submodularity), which is negligible compared to the optimal value because $1/\alpha$ can be arbitrarily small. Thus, we can ignore the possible contribution of all the fake good buckets, and the rest of the soundness proof is same as that for Theorem~\ref{thm:lower_bound}.
\end{enumerate}
\end{proof}

\section{Certificates for the convex programs}
In this section, we provide verifiable proof for the numerical results of our factor-revealing convex programs. (We skip Theorem~\ref{thm:fpt}, since it is an intermediate result that has a weaker approximation guarantee than Theorem~\ref{thm:fpt_plus}.)
\subsection{Theorem~\ref{thm:fpt_plus}}
Recall that in the proof of Theorem~\ref{thm:fpt_plus}, we have the following two convex programs
$$\min_{a,b,c,d,\alpha,\beta_0,\beta_1} r_{\textrm{FPT}}(a,b,c,d,\alpha,\beta_0,\beta_1) \textrm{ s.t. Eq.~\eqref{eq:symmetric_stream_1}~\eqref{eq:symmetric_stream_3}~\eqref{eq:fpt_new_constraint_1}~\eqref{eq:fpt_new_constraint_2}~\eqref{eq:fpt_new_constraint_3}~\eqref{eq:fpt_plus_constraint}},$$
$$\min_{a,b,c,d,\alpha,\beta_0,\beta_1} r_{\textrm{FPT}}(a,b,c,d,\alpha,\beta_0,\beta_1) \textrm{ s.t. Eq.~\eqref{eq:symmetric_stream_1}~\eqref{eq:symmetric_stream_4}~\eqref{eq:fpt_new_constraint_1}~\eqref{eq:fpt_new_constraint_2}~\eqref{eq:fpt_new_constraint_3}~\eqref{eq:fpt_plus_constraint}}.$$
\subsubsection{First convex program}
We write the first convex program explicitly (plugging in $\beta_1=0$ by Eq.~\eqref{eq:fpt_plus_constraint}).
\begin{align*}
    &\min_{a,b,c,d,\alpha,\beta_0}\max\{b,\,a+b,\,c,\,\alpha+c\} \\
    \textrm{s.t. }
    &a+2b\ge 1\\
    &(1-c-b)^2+4(1-a-b)(1-a-2b)\le 0\\
    &2\alpha\ge\beta_0 \\
    &\beta_0+c\ge 1 \\
    &1-c-b\ge 0\\
    &a+b\ge c\\
    &a+b,\,c+d\le 0.9 \\
    &a\le c \\
    & 0\le b,c,d\le 0.9\\
    &0\le \alpha,\beta_0\le 1.
\end{align*}
Observe that $d$ is essentially a free variable that can be set equal to $0$. Moreover, given any feasible solution, we can let $\beta_0=\beta_0^*:=1-c$ instead, which does not change the objective value, and the solution is still feasible (indeed, $0\le\beta_0^*\le 1$ follows from $0\le c\le 0.9$, and $\beta_0^*+c\ge 1$ obviously holds, and finally, $2\alpha\ge \beta_0^*=1-c$ follows from constraints $2\alpha\ge\beta_0$ and $\beta_0+c\ge 1$). Furthermore, since $\alpha+c$ is always better than $c$ by constraint $\alpha\ge0$, we can remove the term $c$ in the maximum operator. Therefore, the program can be simplified to
\begin{align}
    &\min_{a,b,c,\alpha}\max\{b,\,a+b,\,\alpha+c\} \\
    \textrm{s.t. }
    &(1-c-b)^2+4(1-a-b)(1-a-2b)\le 0\nonumber\\
    &2\alpha\ge1-c \nonumber\\
    &1-c-b\ge 0\nonumber\\
    &a+2b\ge 1\nonumber\\
    &a+b\ge c\nonumber\\
    &a+b\le 0.9 \nonumber\\
    &a\le c \nonumber\\
    & 0\le b,c\le 0.9\nonumber\\
    &0\le \alpha\le 1\nonumber.
\end{align}
We prove that if there is a feasible solution with objective value $\le 0.539$, then the constraint $(1-c-b)^2+4(1-a-b)(1-a-2b)\le 0$ must be violated, which is a contradiction. Specifically, if the objective value is $\le 0.539$, then $b,a+b,\alpha+c\le 0.539$. Notice that because of the term $\alpha+c$ in the maximum operator, the minimizer wants $\alpha$ to be as small as possible, and hence the constraint $2\alpha\ge 1-c$ should always be tight (and the constraint $0\le\alpha\le 1$ can be ignored because $0.1\le 1-c\le 1$). Then, it follows from $2\alpha=1-c$ and $\alpha+c\le 0.539$ that $c\le 2\times0.539-1$, and thus, by $b\le 0.539$ and $c\le 2\times0.539-1$, we have 
\begin{equation}\label{eq:1_of_A_1_1}
    1-c-b\ge 1-(2\times0.539-1)-0.539=2-3\times0.539.
\end{equation}

On the other hand, $(1-a-b)(1-a-2b)=(b+1-a-2b)(1-a-2b)\ge (0.539+1-a-2b)(1-a-2b)$ by $1-a-2b\le0$ and $b\le 0.539$. If we think of $a+2b$ as a variable, $(0.539+1-a-2b)(1-a-2b)$ is a quadratic function, which decreases as $a+2b$ increases when $a+2b<\frac{2+0.539}{2}$. By $a+b\le 0.539$ and $b\le 0.539$, we have that $a+2b\le 2\times0.539<\frac{2+0.539}{2}$, and therefore, $(0.539+1-a-2b)(1-a-2b)$ is minimized at $a+2b=2\times0.539$. It follows that 
\begin{equation}\label{eq:2_of_A_1_1}
    (1-a-b)(1-a-2b)\ge (1-0.539)(1-2\times 0.539).
\end{equation}

Combining Eq.~\eqref{eq:1_of_A_1_1} and~\eqref{eq:2_of_A_1_1}, we get $(1-c-b)^2+4(1-a-b)(1-a-2b)>(2-3\times0.539)^2+4(1-0.539)(1-2\times 0.539)>0$, which is the contradiction. We still need to argue the program has a feasible solution. One can numerically check $(a=0,\,b=\frac{2(6-\sqrt{2})}{17},\,c=-1+\frac{4(6-\sqrt{2})}{17},\,\alpha=1-\frac{2(6-\sqrt{2})}{17})$ is feasible.

\subsubsection{Second convex program}
We write the second convex program explicitly (similar to the first convex program, we plug in $\beta_1=0$ and $\beta_0=1-c$, and remove the term $c$ in the maximum operator of the objective).
\begin{align}
    &\min_{a,b,c,\alpha}\max\{b,\,a+b,\,\alpha+c\} \\
    \textrm{s.t. }
    &1-a-2b\le 0\nonumber\\
    &2\alpha\ge1-c \nonumber\\
    &1-c-b\le 0\nonumber\\
    &a+2b\ge 1\nonumber\\
    &a+b\ge c\nonumber\\
    &a+b\le 0.9 \nonumber\\
    &a\le c \nonumber\\
    & 0\le b,c\le 0.9\nonumber\\
    &0\le \alpha\le 1\nonumber.
\end{align}
The first and the fourth constraints imply that $a+2b=1$, and hence, $a+b=1-b$. Similar to the first convex program, the minimizer should satisfy $2\alpha=1-c$. Hence, $\alpha+c=\frac{1+c}{2}$, and by the constraint $1-c-b\le0$, we have that $\frac{1+c}{2}\ge 1-\frac{b}{2}$.

Together, we get $\frac{2}{5}b+\frac{1}{5}(a+b)+\frac{2}{5}(\alpha+c)\ge\frac{2}{5}b+\frac{1}{5}(1-b)+\frac{2}{5}(1-\frac{b}{2})=0.6$, which implies that the minimum of the program has to be at least $0.6$. Finally, the program has feasible solutions, e.g., $(a=-\frac{1}{3},\,b=\frac{2}{3},\,c=\frac{1}{3},\,\alpha=\frac{1}{3})$.

\subsection{Theorem~\ref{thm:symmetric_beating_half}}
Recall that in the proof of Theorem~\ref{thm:fpt_plus}, we have the following two convex programs
$$\min_{a,b,c,d} r(a,b,c,d) \textrm{ s.t. Eq.~\eqref{eq:symmetric_stream_1} and~\eqref{eq:symmetric_stream_3}},$$
$$\min_{a,b,c,d} r(a,b,c,d) \textrm{ s.t. Eq.~\eqref{eq:symmetric_stream_1} and~\eqref{eq:symmetric_stream_4}}.$$
\subsubsection{First convex program}
\begin{align}
    &\min_{a,b,c,d}\max\{b,\,a+b,\,d+c,\,(2-1/e)(1-a-b-d)-d-a\} \\
    \textrm{s.t. }
    &(1-c-b)^2+4(1-a-b)(1-a-2b)\le 0\nonumber\\
    &1-c-b\ge 0\nonumber\\
    &a+2b\ge 1\nonumber\\
    &a+b\ge c\nonumber\\
    &a+b,\,c+d\le 0.9 \nonumber\\
    &a\le c \nonumber\\
    & 0\le b,c,d\le 0.9\nonumber.
\end{align}
We show that if a feasible solution of the program has objective value $\le 0.5029$, then the constraint $(1-c-b)^2+4(1-a-b)(1-a-2b)\le 0$ must be violated, which is a contradiction. Specifically, for such feasible solution, we know $b\le 0.5029$, $a+b\le 0.5029$, $d+c\le 0.5029$, and $(2-1/e)(1-a-b-d)-d-a\le 0.5029$. By $a+b\le 0.5029$ and the constraint $a+2b\ge 1$, we get $a\le 0.0058$.

Starting from $(2-1/e)(1-a-b-d)-d-a\le 0.5029$, we derive that
\begin{align*}
    0.5029&\ge(2-1/e)(1-a-b-d)-d-a \\
    &\ge(2-1/e)(1-0.5029-d)-d-a &&\text{($a+b\le 0.5029$)} \\
    &\ge(2-1/e)(1-0.5029-d)-d-0.0058 &&\text{($a\le 0.0058$)}.
\end{align*}
By rearranging, we get $d\ge \frac{(2-1/e)(1-0.5029)-0.5087}{3-1/e}$. Since $d+c\le 0.5029$, we have that $c\le 0.5029-\frac{(2-1/e)(1-0.5029)-0.5087}{3-1/e}$. Combining with $b\le 0.5029$, we get
\begin{equation}\label{eq:1_of_A_2_1}
1-c-b\ge -0.0058+\frac{(2-1/e)(1-0.5029)-0.5087}{3-1/e}.    
\end{equation}

On the other hand, $(1-a-b)(1-a-2b)=(b+1-a-2b)(1-a-2b)\ge (0.5029+1-a-2b)(1-a-2b)$ by $1-a-2b\le0$ and $b\le 0.5029$. If we think of $a+2b$ as a variable, $(0.5029+1-a-2b)(1-a-2b)$ is a quadratic function, which decreases as $a+2b$ increases when $a+2b<\frac{2+0.5029}{2}$. By $a+b\le 0.5029$ and $b\le 0.5029$, we have that $a+2b\le 2\times0.5029<\frac{2+0.5029}{2}$, and therefore, $(0.5029+1-a-2b)(1-a-2b)$ is minimized at $a+2b=2\times0.5029$. It follows that 
\begin{equation}\label{eq:2_of_A_2_1}
    (1-a-b)(1-a-2b)\ge (1-0.5029)(1-2\times 0.5029).
\end{equation}

Combining Eq.~\eqref{eq:1_of_A_2_1} and~\eqref{eq:2_of_A_2_1}, we get $(1-c-b)^2+4(1-a-b)(1-a-2b)>(-0.0058+\frac{(2-1/e)(1-0.5029)-0.5087}{3-1/e})^2+4(1-0.5029)(1-2\times 0.5029)>0$, which is the contradiction. Finally, one can numerically check $(a=10^{-8},\,b=0.5034476995316219,\,c=0.3795235846990063,\,d=0.12392128506019447)$ is feasible.

\subsubsection{Second convex program}
\begin{align}
    &\min_{a,b,c,d}\max\{b,\,a+b,\,d+c,\,(2-1/e)(1-a-b-d)-d-a\} \\
    \textrm{s.t. }
    &1-a-2b\ge 0\nonumber\\
    &1-c-b\le 0\nonumber\\
    &a+2b\ge 1\nonumber\\
    &a+b\ge c\nonumber\\
    &a+b,\,c+d\le 0.9 \nonumber\\
    &a\le c \nonumber\\
    & 0\le b,c,d\le 0.9\nonumber.
\end{align}
By the constraints, we have that $a+2b=1$. If the minimum of the program is $\le 0.5029$, then $a+b\le 0.5029$, $b\le 0.5029$, $d+c\le 0.5029$, and $(2-1/e)(1-a-b-d)-d-a\le 0.5029$. By $a+b\le 0.5029$ and $a+2b=1$, we have $a\le 0.0058$. By the constraint $1-c-b\le 0$ and $b\le 0.5029$, we get $c\ge 1-0.5029$, and since $d+c\le 0.5029$, we have $d\le 0.0058$. Now we can derive
\begin{align*}
    (2-1/e)(1-a-b-d)-d-a&\ge (2-1/e)(1-0.5029-d)-d-a &&\text{($a+b\le 0.5029$)}\\
    &\ge (2-1/e)(1-0.5029-d)-d-0.0058 &&\text{($a\le 0.0058$)}\\
    &\ge (2-1/e)(1-0.5029-0.0058)-0.0058-0.0058 &&\text{($d\le 0.0058$)}\\
    &>0.79,
\end{align*}
which contradicts $(2-1/e)(1-a-b-d)-d-a\le 0.5029$. Finally, to see the program is feasible, one can numerically check feasibility of $(a=0,\,b=\frac{5e-2}{3(3e-1)},\,c=\frac{4e-1}{3(3e-1)},\,d=\frac{e-1}{3(3e-1)})$.

\end{document}